\def\@tocline#1#2#3#4#5#6#7{\relax
  \ifnum #1>\c@tocdepth % then omit
  \else
    \par \addpenalty\@secpenalty\addvspace{#2}%
    \begingroup \hyphenpenalty\@M
    \@ifempty{#4}{%
      \@tempdima\csname r@tocindent\number#1\endcsname\relax
    }{%
      \@tempdima#4\relax
    }%

        \rightskip\@pnumwidth plus4em 
          \parfillskip-\@pnumwidth
    #5\leavevmode\hskip-\@tempdima
      \ifcase #1
       \or\or \hskip 1em \or \hskip 2em \else \hskip 3em \fi%
      #6\nobreak\relax
    \dotfill\hbox to\@pnumwidth{\@tocpagenum{#7}}\par
    \nobreak
    \endgroup
  \fi}
\newtheorem{theorem}{Theorem}[section]
\newtheorem{lemma}[theorem]{Lemma}
\date{\today}
\newtheorem{corollary}[theorem]{Corollary}
\newtheorem{remark}[theorem]{Remark}
\newtheorem{assumption}[theorem]{Assumption}
\newtheorem{example}[theorem]{Example}
\newcommand{\ul}{\ensuremath{\lfloor t/\Delta_n\rfloor}}
\newcommand{\ulT}{\ensuremath{\lfloor T/\Delta_n\rfloor}}
\newcommand{\indicator}{\ensuremath{\mathbb{I}}}
\newcommand{\dennis}[1]{\textcolor{red}{#1}}
\title[Dynamically Consistent Analysis of Realized  Covariations in Term Structure Models]{Dynamically Consistent Analysis of Realized  Covariations in Term Structure Models
%Nonparametric and Robust Estimation of the Term Structure of Volatility
}
\author{Dennis Schroers}
\address{Institute of Finance and Statistics and Hausdorff Center for Mathematics\\University of Bonn}
\begin{document}

\maketitle
\vspace{-1cm}
\begin{scriptsize}
\begin{abstract}
%We introduce a method for measuring covariations in a general  arbitrage-free term structure setting. In particular, we describe how to  determine the number of statistically relevant random drivers in bond-market dynamics. Unlike traditional dimension reduction techniques, our approach does not assume finite-dimensional dynamics, stationarity, or the existence of moments. Using an asymptotic theory, we identify scaling limits of empirical covariances from simple trading strategies, enabling principal component analysis based on difference returns.
%We show that these scaling limits correspond to quadratic covariations of latent drivers, facilitating the inference of random driver numbers from discrete bond price data. Additionally, we address the impact of outliers, such as financial crises, by separating continuous and jump components of quadratic variation and employing a truncation technique for robust estimation.

%Empirical analysis of bond market data from 1990 to 2022 reveals significant temporal variation in driver dimensionality, with consistently high dimensions needed to explain bond market variations. Our framework offers a flexible, robust approach to term structure analysis, validated through simulations and practical applications.

In this article we show how to analyze the covariation of bond prices nonparametrically and robustly,  staying consistent with  a general no-arbitrage setting.  This is, in particular,  motivated by the problem of identifying the number of statistically relevant  factors in the bond market under minimal conditions.  We apply this method in an empirical study which suggests that a high number of factors is needed to describe the term structure evolution and that the term structure of volatility varies over time.
\end{abstract}
%\tableofcontents

\keywords{Keywords: Term Structure Models, Principal Component Analysis, Functional Data Analysis,  Jumps,  Bond Market}

\end{scriptsize}

%\newpage
% Dimension reduction for term structure data is often done in a preprocessing step based on an estimator of an empirical covariance of yields, discount curves or forward rate curves. However,  empirical covariances are only a sensible pivot to base ...

\section{Introduction}

We present a nonparametric method to measure covariations in a general arbitrage-free term structure setting in the spirit of \cite{bjork1999}.  
A motivation is to determine the number of statistically relevant random drivers needed to describe bond market dynamics under minimal assumptions and in a dynamically consistent manner.  That is, by working coherently in the abstract setting of \cite{bjork1999}, we circumenvent the well-known consistency problems of finding arbitrage-free finite-dimensional dynamics that reflect the empirical observations. %Customary dimension reduction techniques  are conducted   independently of a subsequent dynamical analysis and implicitly impose assumptions such as consistency of the data with finite-dimensional arbitrage-free dynamics,  stationarity and the existence of moments.  In contrast,  we allow for an economically meaniningful measurement of covariations and dimensionality analyses indepentently of such assumptions.  

In the bond market  the notion of a zero coupon bond is fundamental. A zero coupon bond guarantees its holder at time $t$ a fixed amount of money at some time $t+x$ in the future. The cost $P_t(x)$ of entering this contract at time $t$ is the price of this bond, which depends on the time to maturity $x$, implying  a price curve $x\mapsto P_t(x)$  at each time $t\geq 0$ called the discount curve. 
We assume to observe bond price or yield curve data, potentially derived by smoothing as in \cite{FPY2022} or \cite{liu2021},  such that we can recover log bond prices  
$$P_{i,j}^n:=\log P_{i\Delta_n}(j\Delta_n)\quad j=0,1,...,\lfloor M/\Delta_n\rfloor, \,\, i=0,1,...,\lfloor T\Delta_n\rfloor$$
for a resolution $\Delta_n=1/n$ and
with different maturities
$j\Delta_n $ and on different time points $i\Delta_n $.
Here $M>0$ is some maximal time to maturity (e.g. $M=10$ or $M=30$ when time is measured in years) and $T$ is the time until which the data are observed or considered.  
% Econometric studies are often based on implied interest rate curves, such as the yield curve, which for $x,t\geq 0$ is defined as $y_t(x)=-\log(P_t(x))/x$.  

Often,  risk factor analyses are conducted on the basis of transformations of the discount curve, such as yield differences or excess returns,
  %$P_{i+1,j+1}^n- P_{i,j}^n-P_{i+1,1}^n- P_{i,0}^n$, 
  which typically suggests that three factors explain a large amount of variation in bond market dynamics, c.f. \cite{Litterman1991}. 
  Recently \cite{Crump2022},  raised the concern that  these low-dimensional factor structures are obtained irrespectively of the data generating process due to the high correlation of bond prices with close maturities.  
%  that dimensionality reduction methods based on these yield differences or di tend  to suggest a low cross-sectional factor structure irrespectively of the data generating process due to the mechanically induced high local correlation in yield or discount curves.  
To remedy this effect,  dimension reduction could be based on difference returns, which are the returns of the trading strategy of buying an $x+\Delta$-maturity bond and shorting an $x$-maturity bond. Precisely,  difference returns
are defined  for $i=1,...,\ulT-1$ and $j=1,...,\lfloor M/\Delta_n\rfloor-1$
by
\begin{align}\label{Difference returns}
d_i^n(j) :=  P_{i+1,j}^n- P_{i,j+1}^n- P_{i+1,j-1}^n+ P_{i,j}^n.
 % \tilde \Delta_t P (x)-\tilde \Delta_t P (x-\Delta)%=   (0\log P_{t}(x)-\log P_{t}(x+\Delta))-( \log P_{t+\Delta}(x-\Delta)-\log P_{t+\Delta}(x)).
\end{align}
In this article we develop an asymptotic econometric theory for the realized covariations of these difference returns, that is,  for $j_1,j_2=1,...,\lfloor M/\Delta_n\rfloor$ we analyse the covariations
$$\hat q_T^n(j_1,j_2):=
\sum_{i=1}^{\ulT} d_i(j_1)d_i(j_2).$$

Importantly, while $\ulT^{-1} q_T^n$ is the empirical covariance of the data $d_1^n,...d_{\ulT-1}^n$,  assuming w.l.o.g.  $\mathbb E[d_i^n]=0$, we do not consider it as an estimator of the population covariance of difference returns.  Such an interpretation 
%Dimension reduction based on difference returns, which are economically meaningful and less affected by  mechanical overlaps, can be conducted as follows:
%Assuming for the moment that w.l.o.g. $\mathbb E[d_i^n]=0$ ,  one calculates the empirical covariance$$\hat C_{\Delta_n}(j_1,j_2)=\frac 1{\ulT}\sum_{i=1}^{\ulT} d_i(j_1)d_i(j_2),$$
% to derive its  eigenstructure to analyse the major modes of variations in the data.   If   $d_1^n,...d_{\ulT-1}^n$ are i.i.d.  the empirical covariance converges to the population covariance when $T\to \infty$. 
%From such an estimate  by analysing the eigenstructure of such an estimator. 
% However,  this procedure is not free of conceptual challenges. For instance,  
  is
 not invariant with respect to the resolution $\Delta_n$ and requires the restrictive assumption that difference returns are i.i.d. or at least covariance stationary and ergodic. More importantly, it is not clear how dimension reduction can be conducted without entailing arbitrage opportunities.
General  arbitrage-free term structure models in the sense of \cite{bjork1997} and \cite{FilipovicTappeTeichmann2010b} require that forward rates
$ f_t(x)=-\partial_x\log(P_t(x))$ for $ x,t\geq 0,$
  satisfy dynamics of the form 
 \begin{align}\label{SPDE}
    df_t= \partial_x f_t dt+dX_t, \quad t\geq 0.
 \end{align}
 where the equation holds in an appropriate function space.
The latent process
$X$ is a possibly infinite-dimensional   It{\^o} semimartingale 
$$X_t:= \int_0^t \alpha_s ds+\int_0^t\sigma_s dW_s+J_t.
    $$
where $\alpha$ is a curve-valued drift process, $\sigma$ is the (in general operator-valued) volatility, $W$ is an (in principle infinite-dimensional) Wiener process and $J$ is a jump process that we assume to model rare extreme events (c.f.  Section \ref{Sec:  semimartingales in Hilbert spaces} for the details).   Under a risk-neutral measure, the drift $\alpha$ can in addition be characterized as a deterministic function of the volatility $\sigma$ and characteristics of the jump process $J$ (c.f.  \cite{bjork1997}, \cite{FilipovicTappeTeichmann2010b}).   Parametrizations of forward cuves are then required to be viable in the dynamic setting \eqref{SPDE} to avoid the introduction of arbitrage opportunities to the model.  This is known to be an intricate problem  in term structure modelling (see e.g.  \cite{bjork1999}, \cite{bjork2001}, %\cite{bjork1999b},
%\cite{bjork2002},
\cite{filipovic2003}, %\cite{filipovic2002}, 
\cite{Filipovic2000}, \cite{filipovic2000b}%\cite{filipovic2004}
) and some frequently employed parametrizations of forward curves are incompatible with arbitrage-free dynamics or induce restrictive additional conditions (see e.g. \cite{filipovic1999}).
The concerns on realized covariations of difference returns can be resolved when we consider infill asymptotics ($n\to \infty$).  Precisely,
we show without imposing further assumptions and interpreting $\hat q_T^n$ as a piecewise constant kernel 
that 
\begin{equation}\label{Covscalinglimit}
 \lim_{n\to \infty}\Delta_n^{-2}\hat q_T^n(\lfloor x/\Delta_n\rfloor,\lfloor y/\Delta_n\rfloor)=  \lim_{n\to\infty} \sum_{i=1}^{\ulT} \Delta_i^n X(x)\Delta_i^n X(y),
\end{equation}
where the limits hold in $L^2([0,M]^2)$. The right hand side describes the quadratic covariation of the latent driver $X$, which always exists (see e.g.   \cite{Schroers2024}) and can equivalently be described as the limit of covariance operators in $L^2(\mathbb R_+)$ by
 \begin{equation}\label{Quadratic variation abstract probability limit}
[X,X]_t:=\lim_{\Delta_n \downarrow 0}\sum_{i=1}^{\ul} \langle \Delta_i^n X,  \cdot \rangle \Delta_i^n X,
\end{equation}
where $\langle \cdot, \cdot\rangle$ is the inner product of $L^2(\mathbb R_+)$ and $\Delta_i^n X:= X_{i\Delta_n}-X_{(i-1)\Delta_n}$ denotes the $i$'th increment of $X$. 
  %The quadratic variation $[X,X]$ is, hence, an attractive object for dimension reduction, since it has has a practially relevant interpretation as the covariation of difference returns and since $X$ is the only source of randomness in the model.

%The interpretation of the realized variations $\hat q_T^n$ as estimators of the covariations of the abstract latent driver $X$ resolves the concerns on dimension reductions via covariances raised before.
The limit \eqref{Covscalinglimit} implies that it is possible to infer on the number of random drivers in the bond market on the basis of discrete bond price data without further assumptions on moments, stationarity and ergodicity. 
 In fact,  if the quadratic variation of $X$ is $d$-dimensional for $d\in \mathbb N$,  then $X$ is $d$-dimensional and its state space (until time $T$) is spanned by the eigenvectors corresponding to the nonzero eigenvalues of $[X,X]_T$.  The eigenvalues of $[X,X]_T$ indicate the amount of variation that the corresponding random factor explained of $X$ up to time $T$.  %Thus, the realized covariation of difference returns on the left side of \eqref{Covscalinglimit} is informative on the number of random drivers. 
This is in contrast to the  explained variation of factors derived from a covariance of $f$ or its increments,  which is  a priori not informative on the number of random drivers.  In fact,  it is  possible that $f$ is infinite-dimensional, while  $X$ is a one-dimensional  process (c.f.  Example 4.1 in \cite{Schroers2024}). 
The most striking advantage of the  interpretation of the realized covariation of difference returns in \eqref{Covscalinglimit} is, however, that exchanging $X$ by an arbitrary finite-dimensional semimartingale (with the correct form of the drift) in the formulation of the dynamics in \eqref{SPDE} does not affect the capability of the model to be free of arbitrage, such that an investigation of the number of relevant factors can be conducted independently of further consistency conditions.  This underlines that our abstract infinite-dimensional setting relaxes the analysis of the term structure when compared to models in which state spaces are assumed to be finite-dimensional a priori.  An additional advantage of quadratic variations is that they are naturally interpreted as time-varying objects enabling their temporal analysis.

In practice, the distortion of the measurements due to ouliers can bias the analysis of the relevant factors.
%Despite the theoretical advantages of quadratic variations,   the distortion of the measurements due to ouliers remains as a practical issue.
For instance,  in the context  of sudden interest rate movements during an economic crisis it is possible that a single outlying difference return impacts the measurement of covariations and, thus, the measured dimensionality  of the driver $X$ substantially.  
For this reason, besides $[X,X]$, 
the continuous part $[X^C,X^C]$ of the quadratic variation  where $X^C_t= \int_0^t \alpha_s ds+\int_0^t \sigma_s dW_s$ is central for the task of identifying the statistically relevant number of random drivers,  considering the jump part to model outlying events.  
  We describe how estimation of the continuous quadratic covariation is possible by a truncation technique, which sets outlying difference returns in the realized covariation on the left of \eqref{Covscalinglimit} to $0$. 
We derive rates of convergence and a central limit theorem for these estimators and also show how the long-time limit of $[X^C,X^C]_T/T$ as $T\to \infty$ can be estimated, if it exists. 
Our limit theory holds under weak assumptions, which mainly reflect those for finite-dimensional semimartingales, although, $f$ does not need to be a semimartingale.

We conduct an emprical study on the relevant drivers in the market via this covariation estimates based on real bond market data.  The procedure is numerically equivalent to a principal component analysis based on the (truncated) empirical covariance of difference returns with a daily resolution and mean zero.
However, the classification of jumps takes into account the abstract setting \eqref{SPDE}.  
By investigation of a truncated version of the realized covariations $q_j^*(x,y):=\Delta_n^{-2}
\sum_{i=\lfloor ( j-1)/\Delta_n \rfloor}^{\lfloor j/\Delta_n \rfloor} d_i(\lfloor x/\Delta_n\rfloor)d_i(\lfloor y/\Delta_n\rfloor)$ for any year $j$ from 1990 to 2022,  we find evidence for the dimension of the driver to vary from year to year but  also to be consistently high (in each year more than $8$ drivers are needed to explain at least $99\% $  of the variation). %This confirms empirical observations in \cite{Crump2022} on the basis of raw bond market data. 
We further observe that quadratic variations vary in shape over time and not just their level.
We provide Monte-Carlo evidence for the validity of the limit theory in the context of  sparse and noisy data.

Formal validity of our method is guaranteed  by relating difference returns via cross-sectional and temproal discretization to the abstract setting \eqref{SPDE} and then apply the
  results from the article \cite{Schroers2024}.  %In the continuous case ($X=X^C$),  this theory was initialized in \cite{Benth2022} and \cite{BSV2022}. 
The truncation procedure is inspired from the truncated realized variation estimators of \cite{Mancini2001, Mancini2004, Mancini2009} and \cite{Jacod2008} for finite-dimensional semimartingales.  %The generality of our limit theory reflects the role of quadratic variations in high-frequency limits of quadratic variations for semimartingales in the context of finite-dimensional stock market the infinite-dimensional nature of fixed income markets and term structures.  
Here,  we also provide a data-driven variant of the truncation rule, to account for functional outliers in a similar way as the trimmed least squares method in \cite{ren2017}.
%Importantly,  the process $f$, and neither the bond price curves $(P_t)_{t\geq 0}$ follow infinite-dimensional semimartingales, which underlines adequacy of difference returns for analysing covariations in the bond market.  
%The difference to the finite-dimensional setting can be found in the rates of convergence and the additional regularity assumptions that need to be imposed to derive asymptotic normality.
Naturally, our asymptotic theory also allows for nonparametric estimation of characteristics of infinite-dimensional volatility models in continuous time employed for term structure modeling (c.f.  \cite{BenthRudigerSuss2018}, \cite{BenthSimonsen2018}, \cite{BenthSgarra2021}, \cite{Cox2020}, \cite{Cox2021}, \cite{Petersson2022} and \cite{cox2023}).

%This question can just be answered in an infinite-dimensional context as ours,  due to the intricate consistency problems term structure models are prone to.Our application to  bond market data underlines the importance to consider jump robust estimators and reveals that a high number of random drivers might be necessary to describe the evolution of the bond market accurately.  \\
%As a leading example and motivation, we here provide an extensive treatment of the analysis of quadratic variations for arbitrage-free term structure models in the spirit of \cite{bjork1997}.

%At this point, we did not specify any particular discrete sampling scheme for the process $f$,  which has to be considered separately for any application.  However,  we describe in detail in Section \ref{Sec: term structure models theory} how the theory can be applied to  discrete bond price data:
%In the context of bond markets, our limit theory also reveals that quadratic variations of the drivers of forward rate curves coincide with scaling limits of empirical covariances corresponding to certain short-term trading strategies. 

%which are, yet,  mostly approached from a theoretical angle in the context of mathematical finance.

The article is structured as follows: Section \ref{Sec: bond market primer} describes the general bond market setting that we consider for this article.  Section
\ref{Sec: term structure models theory}  presents the estimation theory for the central application of term structure models.  Identification for the quadratic variations of $X$, $X^C$ and $J$ on the basis of difference return variations can be found in Section \ref{Sec: Identification of the quadratic variaiton}, while rates of convergence for estimating $[X^C,X^C]$ and a central limit theorem can be found in Section \ref{Sec: Error Bounds}. Section \ref{Sec: Long time Volatility estimation} discusses long-time asynmptotics for estimation of a stationary mean of $[X^C,X^C]_T/T$.  Section 
\ref{Sec: Practical Considerations} contains practical considerations on smoothing of discrete bond price data and presents a data-driven truncation rule for robust estimation. 
 Section \ref{Sec: Simulation Study} provides a simulation study. %investigating the impact of sparse, noisy, and irregular samples. 
 Finally, we apply our theory to bond market data in Section \ref{Sec: Empirical Study}.
Technical proofs of our results along with further remarks on the simulation scheme and additional empirical results can be found in the appendix. % supplementary material to this article.

\subsection{Technical preliminaries and notation}
Let $I$ be an interval in $\mathbb R$.  
We write $\langle h,g\rangle= \int_I h(x)g(x) dx$ for the $L^2$-scalar product of two elements $h,g\in L^2(I)$ as well as $\|h\|=\sqrt{\langle h,h\rangle}$ for the norm.
%If $H$ is a separable Hilbert space, the corresponding inner product and norm are denoted by $\langle \cdot,\cdot\rangle_H$ and $\|\cdot\|_H$ and the identity operator on $H$ by $I_H.$
%If $G$ is another separable Hilbert space,  $h\in H$ and $g\in G$,  $L(G,H)$ denotes the space of bounded linear operators from $G$ to $H$ and $L(H):=L(H,H)$.  
We write $L_{\text{HS}}(L^2(I))$ for the Hilbert space of Hilbert-Schmidt operators from $L^2(I)$ into itself and $\|\mathcal T_k\|_{\text{HS}}$ for the Hilbert-Schmidt norm of an $\mathcal T_k\in L_{\text{HS}}(L^2(I))$.
%$\Vert B\Vert_{L_{\text{HS}}(U,H)}^2:=\sum_{n=1}^{\infty}\| B e_n\|_H^2<\infty,$
%for an orthonormal basis $(e_n)_{n\in\mathbb N}$ of $G$. When $G=H$, we write $L_{\text{HS}}(H):=L_{\text{HS}}(H,H)$. 
Recall, that 
a Hilbert-Schmidt operator $\mathcal  T_k:L^2(I)\to L^2(I)$ can be uniquely associated to a kernel $k\in L^2(I^2)$ such that $\|\mathcal T_k\|_{\text{HS}}=\|k\|_{L^2(I^2)}$ and
\begin{equation}\label{Hilbert-Schmidt kernel equivalence}
   \mathcal  T_kf(x)=\int_I k(x,y)f(y)dy\quad \forall f\in L^2(I).
\end{equation}
Importantly, the operator $h\otimes g:= \langle h,\cdot\rangle g$ is Hilbert-Schmidt for two elements $h,g \in L^2(I)$. %Recall that $B$ is nuclear, if  $\sum_{n=1}^{\infty}\| B e_n\|_H<\infty$ for some orthonormal basis $(e_n)_{n\in\mathbb N}$ of $G$. M
We shortly write $h^{\otimes 2}=h\otimes h$.
Finally, for $L^2(I)$-valued processes $X^n, n\in \mathbb N$, $X$, we write $X^n\stackrel{u.c.p.}{\longrightarrow}{X}\quad \text{ as }n\to\infty$ for the convergence uniformly on compacts in probability, i.e. it is $\mathbb P[\sup_{t\in [0,T]} \|X^n(t)-X(t)\|>\epsilon]\to 0$ for all $\epsilon, T>0$.

\section{General Arbitrgae-Free Bond Market-Dynamics}\label{Sec: bond market primer}
Let $(\Omega,\mathcal F,(\mathcal F_t)_{t\geq 0},\mathbb P)$ be a filtered probability space with right-continuous filtration.
From here on,  we assume that the forward rate process $(f_t)_{t\geq 0}$ is an $L^2(\mathbb R_+)$-valued stochastic process that is the mild solution to the stochastic partial differential equation \eqref{SPDE}, defined on $(\Omega,\mathcal F,(\mathcal F_t)_{t\geq 0},\mathbb P)$. 
That is, \begin{align}\label{mild Ito process}
f_t %=&\mathcal S(t)f_0+\int_0^t \mathcal S(t-s)dX_s\\
= &\mathcal S(t) f_0+\int_0^t \mathcal S(t-s)\alpha_s ds+\int_0^t \mathcal S(t-s)\sigma_sdW_s+\int_0^t\mathcal S(t-s)d J_t.
\end{align}
%As a solution to such an equation exists, the coefficients $\alpha, \sigma$ and $\gamma$ can be dependent on $f$ themselves, allowing for many nonlinear dependence patterns in the dynamics.  A comprehensive discussion on conditions ensuring the existence of such a mild solution in this case is given in \cite{FilipovicTappeTeichmann2010}.
where $ \mathcal S(t)f(x)=f(x+t)$ for $t\geq 0 $ and $f\in L^2(\mathbb R_+)$ defines the left-shift operator semigroup.
%, $\alpha$ is an almost surely integrable adapted process, $\sigma$ is the Hilbert-Schmidt operator-valued volatility process, $W$ is a cylindrical Wiener process with identity covariance and $J$ is a jump process.
 We relegate all further technical discussions on $X$,  $\alpha$,  $\sigma$, $W$ and $J$ and
 various related technical assumptions that we need in for the validity of our limit theory to  Sections \ref{Sec:  semimartingales in Hilbert spaces}  in the appendix.  We remark, that under arbitrage-free dynamics, that is, under an equivalent local martingale measure  the drift is necessarily a deterministic function of $\sigma$  and $\gamma$ (c.f. \cite{bjork1997}), %that is
%\begin{equation}\label{HJM drift condition}
%\alpha_t(x)=\alpha_{HJM}(\sigma_t,\gamma_t)(x)= \sum_{j=1}^{\infty} \sigma_t e_j (x)\int_0^x\sigma_t e_j (y)dy-\int_{H\setminus\{0\}} \gamma_s(z)\left(e^{-\int_0^x \gamma_s(z)(y)dy}-1\right)F(dz)
%\end{equation}
which was in the continuous case the original inside leading to the popular Heath-Jarrow-Morton framework of \cite{HJMoriginal} for pricing bonds and interest rate sensitive contingent claims. However, this will not be of particular importance for our purposes, as the drift later vanishes asymptotically in our limit theory. We will discuss however some important examples subsequently. Before that,  we make a remark on the choice $L^2(\mathbb R_+)$ as the state space of forward rate curves.

\begin{remark}[On the forward curve space]
There are other choices for the state space of $f$ than $L^2(\mathbb R_+)$ such as the forward curve space of \cite{Filipovic2000}. 
We choose, however, to work in an $L^2$-setting because in that way we do not impose further regularity assumptions on the forward curves.
   A supposed restriction of the state space $L^2(\mathbb R_+)$  is that the so-called long-rates $\lim_{x\to \infty}f_t(x)$ are equal to $0$.   This is undesirable from a financial point of view and could easily be fixed in several ways. For instance, we could consider the  Hilbert space 
$H:=\mathbb R \oplus L^2(\mathbb R_+)=\{f:\mathbb R_+\to \mathbb R: f(x)=a+h(x), a\in\mathbb R, h\in L^2(\mathbb R_+) \}$, for which the first component models the long-rate. In this case, the forward curve space of \cite{Filipovic2000} would be contained as a subspace.
We then might just assume that the state spaces of $X^C$ and $J$  belong to $\{0\}\times L^2(\mathbb R_+)\equiv L^2(\mathbb R_+)$, which 
is in line with the assumptions on the volatilities in \cite{FilipovicTappeTeichmann2010b}. As $[X,X]$, $[X^C,X^C]$ and $[J,J]$ do not depend on the drift and the initial condition, the respective limit theory  would be exactly the same. Another reason that justifies our choice is that in practice, our asymptotic analysis just
takes into account bond price data $P_t(x)$ with $(t,x)\in [0,T]\times [0,M]\subset \mathbb R_+^2$ for some maximal time to maturity $M<\infty$ and the behavior of the forward curves for $x\to \infty$ is of minor importance.
To relax the  notation, we stick to the state space $L^2(\mathbb R_+)$ without loss of generality.
\end{remark}

Let us now discuss some important simple examples.
\begin{example}\label{Ex: CPP in Hilbert space}[Sum of $Q$-Wiener and compound Poisson process]
As a simple example assume $\alpha_t=\alpha \in L^2(\mathbb R_+)$ and $\sigma_t=\sigma \in L_{\text{HS}}(\mathbb R_+)$ to be constant. Then, $X^C_t$ is a Gaussian random variable in $L^2(\mathbb R_+)$ with mean $t\alpha$ and covariance $tQ:=t\sigma\sigma^*$, where $\sigma^*$ is the Hilbert space adjoint of $\sigma$. equivalently,  $X_t^C$ has covariance kernel $q$ given by $\int_{\mathbb R_+} q(x,y)f(y) dy=(Q f)(x)$ for $x\geq 0$. Since we want use the jump process to model rare outliers,  a reasonale model would be a compound Poisson process 
%A very simple example for the driving semimartingale is the sum of an $L^2(\mathbb R_+)$-valued Wiener process and an $L^2(\mathbb R_+)$-valued Poisson random measure
%$$X_t= at+W^Q_t+ J_t.$$
%The drift $A_t=a t$ is linear for $a\in H$ and Here, $a\in H$ and the $Q$-Wiener process $W_t\sim N(0,tQ)$ for a covariance operator $Q$ in $H$ has independent stationary increments and forms a continuous martingale.  If $Q^{\frac 12}$ denotes the positive  square root of $Q$, we have in distribution that $W^Q_{\cdot}=\int_0^{\cdot} Q^{\frac 12} dW_t$ for a cylindrical Wiener process $W$ and we have $M_t^C=W_t^Q$ in this case. 
%The jumps correspond to a compound Poisson processes
    $$J_t:=\sum_{i=1}^{N_t} \chi_i,$$
    for an i.i.d. sequence $(\chi_i)_{i\in \mathbb N}$ of random variables in $L^2(\mathbb R_+)$ with law $F$ and finite second moment ($\mathbb E[\|\chi_i\|^2]<\infty$) and a Poisson process $N$ with intensity $\lambda>0$. 
     Since in the technical Section \ref{Sec:  semimartingales in Hilbert spaces} we require the jump process to be a martingale,  this is not immediately a valid choice, but we can rewrite the dynamics accordingly (c.f. Example \ref{Ex: Compensate CPP} in the appendix).
 The quadratic covariation of this semimartingale is then
  $$[X,X]_t=[X^C,X^C]_t+[J,J]_t= t Q + \sum_{i=1}^{N_t} \chi^{\otimes 2}_i.$$
     \end{example}

The term structure setting \eqref{SPDE} contains the vast majority of existing arbitrage-free term structure models considered in the literature. Among them is the class of affine term structure models, which are widely  appreciated for their parsimony. 

\begin{example}[Affine term structure models]\label{Ex: Affine Term Structure Models}
    In an affine term structure model, the state space for forward curves is spanned by a finite amount of factors, such that
    \begin{align}\label{affine forward curves}
        f_t(x)=g_0(x)+g_1(x) x_t^1+...+g_d(x)x_t^d,
    \end{align}
    where $g:\mathbb R_+\to\mathbb R$ are some particularly suitable functions and the process $x=(x^1,...,x^d)$ is an affine process (c.f. \cite{Duffie2003}). To guarantee that a factor structure like \eqref{affine forward curves} can be in line with 
    general no arbitrage dynamics of the form \eqref{SPDE} one has to impose restrictions on both the functions $g_0,g_1,...,g_d$ as well as the multivariate semimartingale $x$ (c.f. e.g. \cite{Filipovic2000}, Section 7.4 for a description in a continuous setting). 
   % Though the restrictions on the factors of this models rule out many possible dynamical structures, they are widely appreciated for their parsimony. 
   %One can show (c.f. Example ... below) that at least for continuous affine term structure models the quadratic variation of the forward curve process $f$ and the latent driving semimartingale $X$ must coincide, and 
   If $[x^c,x^c]$ and $[x^d,x^d]$ denote the continuous and discontinuous part of the multivariate quadratic variation of $x$, it is
    $$[X^C,X^C]_tt= \sum_{i=1}^d[x_i^c,x_j^c]_t g_i\otimes g_j,\qquad [J,J]_t= \sum_{i=1}^d[x_i^d,x_j^d]_t g_i\otimes g_j.$$
If the respective bond market data are in line with an affine model such as \eqref{affine forward curves}, our theory in Section \ref{Sec: term structure models theory} identifies this structure asymptotically.
  
   Let us outline two classical special cases when $d=1$ and when there are no jumps:
    \begin{itemize}
       \item[(a)] (Va{\v s}i{\v c}ek model) In the Va{\v s}i{\v c}ek model it is 
        $$dx^1_t= (b-ax_t^1)dt+\sigma_0 d\beta_t$$
        for $b,a,\sigma_0> 0$ and a one-dimensional Brownian motion $\beta$. The functions $g_0,g_1$ are then given by $g_1(x)=e^{-ax}$ and $g_0(x)=b\int_0^x g_1(y)dy-(\sigma_0^2/2)(\int_0^x g_1(y)dy)^2$ (c.f. Section 7.4.2 in \cite{Filipovic2000}).  %This yields the arbitrage-free forward curve dynamics
%\begin{align*}
% f_t(\cdot)= &f_0(t+\cdot)+\sigma_0^2\int_0^te^{-a (\cdot+t-s)}\left(\int_0^{\cdot+t-s}e^{-ay}dy\right)ds\\
% & \qquad\qquad+\sigma_0\int_0^t e^{-a(\cdot+t-s)}d\beta_s,
%\end{align*}
%with $f_0(\cdot)=b(\int_0^{\cdot} e^{-ay}dy)-(\sigma_0^2/2)(\int_0^{\cdot} e^{-ay}dy)^2+x^1_0 e^{-a \cdot} $,
%which defines arbitrage-free bond-market dynamics. 
In this case, the quadratic variation of the latent driving semimartingale $X$ is
$$ [X,X]_t=[f,f]_t= t \sigma_0^2(e^{-a \cdot})^{\otimes 2},\qquad t\geq 0.$$
With $Q=\sigma_0^2(e^{-a \cdot})^{\otimes 2}$, this is a special case of Example \ref{Ex: CPP in Hilbert space} without jumps.
\item[(b)] (CIR model) In the CIR model we have short rate dynamics of the form $$dx_t^1=(b-ar_t)dt+\sigma_0 \sqrt {x^1_t}d\beta_t $$
for $b, a, \sigma_0 \geq 0$ and a standard univariate Brownian motion $\beta$. The function $g_1$
is given as the derivative of the function $x\mapsto G_1(x):=2 (e^{cx}-1)/((c-a)(e^{cx}-1)+2c)$ with $c=\sqrt{a^2-2\sigma_0}$ and the function $g_0$ is given as $g_0=aG_1$ (c.f. Section 7.4.1) in \cite{Filipovic2000}).
%The implied risk-neutral forward curve dynamics are
%\begin{align*}
%f_t(\cdot)=&f_0(t+\cdot)+\sigma_0\int_0^t \left(g_1(\cdot+t-s)\int_0^{\cdot+t-s} g_1(y)dy\right)x_s^1  ds \\
%&\qquad\qquad+\sqrt{\sigma_0}\int_0^t g_1(\cdot+t-s) \sqrt{x^1_t} d\beta_s
%\end{align*}
%with $f_0(x)=b\int_0^x g(y)dy+g(x)x_0^1 $ (c.f. Section 7.4.1) in \cite{Filipovic2000}). 
In this case the quadratic variation of the latent driving semimartingale $X$ is
$$[X,X]_t= [f,f]_t=\sigma_0^2\left(\int_0^t x^1_t ds\right) g_1^{\otimes 2},\qquad t\geq 0.$$
\end{itemize}
    %We will not discuss the precise conditions imposed on the factors, here, but want to argue, that under risk neutral dynamics, the quadratic variation of the affine process $x$ is given by
    %$$([x,x]_t)_{i,j}=\rho_{i,j}\int_0^t \sqrt{s_{0,i}+s_{1,i}x^i_s}\sqrt{s_{0,j}+s_{1,j}x^j_s} ds\quad i,j=1,...,d,$$
%for some suitably chosen coefficients $\rho_{i,j}$. The conditions imposed on the dynamics of $x$ and the functions $g_0,...,d_d$ to guarantee the convenient affine form \eqref{affine forward curves} are quite restrictive (c.f. \cite{Filipovic2000}, Section 7.4 for a detailed description). However, these models are quite parsimonious, flexible in the sense that we can add arbitrarily many factors in principle and they are quite convenient to handle when it comes to pricing. 
\end{example}

Many simple ways to model  term structures lead to nonaffine dynamics, such as 
\begin{example}[Volterra spot rate models]\label{Ex: VMVP for forward curves}
For many term structure models the quadratic variation of  $f$ is not necessarily well-defined, as it must not be a semimartingale. For instance, take the forward rate dynamics of the form
\begin{align}\label{Volterra form of forward curves}
    f_t=f_0+\int_0^t \alpha_s(\cdot+t-s) ds +\sum_{i=1}^{d}\int_0^t k_i(\cdot+t-s) \sigma_s^i d\beta_s^i
\end{align}
for a multivariate standard Brownian motion $(\beta^1,...,\beta^d) $ for some $d\in \mathbb N$,  a $d$-dimensional volatility $(\sigma^1,...,\sigma^d)$  and deterministic kernels  $k^1,...,k^d\in  L^2(\mathbb R_+)$ as well as  a drift $\alpha$ which satisfies the HJM condition (c.f. \cite{Filipovic2000}). %and $N$ the random measure described in Example \ref{Ex: CPP in Hilbert space} corresponding to a compound Poisson process $$J_t= \sum_{i=1}^{N_t} \chi_i$$ in $L^2(\mathbb R_+)$.
In this scenario, the underlying driving semimartingale has quadratic variation equalling 
$$[X,X]_t=  \sum_{i=1}^{d}  \left(\int_0^t (\sigma_s^i)^2ds\right)k_i^{\otimes 2}.$$
Thus, if $\sigma_s^i= 1$ for all $s\geq 0$ and $i\in \mathbb N$, this is a special case of Example \ref{Ex: CPP in Hilbert space} with $Q=\sum_{i=1}^{d}k_i^{\otimes 2}$ and without jumps, but it does not always correspond to an affine term structure.
 In the energy market, for instance, fractional kernels such that $k_1(t)=\mathcal O(t^H)$ for $t\to 0$ and $k_i\equiv 0$ for $i\geq 2$ are used to model energy spot prices (c.f. \cite{BENNEDSEN2017} or \cite{BNBV2013}).  If we assume that $k_1(t)=t^H$, $\sigma^1\equiv 1$ and $\alpha_s\equiv 0$ for $t\in [0,1]$ one can prove that
 $f_t$ is not a semimartingale in $L^2(\mathbb R_+)$ and the quadratic variation does not converge as we show in  Appendix \ref{Sec: Proof of the nonsemimartingality of volterra spot models}.  
\end{example}

Example \ref{Volterra form of forward curves} (and also Example 3.16 in \cite{BSV2022}) shows that the process $(f_t)_{t\geq 0}$ is in general not an $L^2(\mathbb R_+)$-valued semimartingale.  However, all implied bond prices $(P_t(T-t))_{0\leq t\leq T}$ are semimartingales for all $T>0$, which is necessary to guarantee the absence of arbitrage in the bond market. %Indeed formula \eqref{formal dynamics of the forward rate process} guarantees that all bond prices $(P(t,T))_{t\in [0,T]}$  given by $P(t,T)=\exp(-\int_0^{T-t} f_t(x)dx)$ are real-valued semimartingales for all $T> 0$.
%A proof can be found in the appendix.
Moreover, while the quadratic covariations of $f$ must not be convergent,  we show in the next section, in which we present our main results,  that the realized covariation of difference returns measures quadratic covariations of the latent driver asymptotically and without further conditions.

%\cite{Lavagnini2021} 
%or for variance swaps in \cite{Buehler2006}.  

%Before we move to the asymptotic theory,  we need to

%Example \ref{Volterra form of forward curves} makes clear that even in simple settings with only a univariate driving random source (the Brownian motion $\beta^1$, if all $k_i\equiv 0 $ for $i\geq 0$),
%we have to develop an asymptotic theory that goes beyond the quadratic variation of finite-dimensional semimartingales. This is the subject of the subsequent section.

\section{Estimation of Quadratic Covariations}\label{Sec: term structure models theory}
In this section we present our asymptotic theory for estimation of quadratic variations. We start with the identifiability of $[X,X]$, $[X^C,X^C]$ and $[J,J]$.
\subsection{Identification of the quadratic covariation of the latent semimartingale}\label{Sec: Identification of the quadratic variaiton}
 %which is, hereafter, denoted 
%\begin{equation}\label{difference returns as vector entries}
%    d^n(i,j):=\tilde{\Delta}_{i\Delta_n}^n d(j\Delta_n)
%\end{equation}
%This is a realistic sampling scheme, if $\Delta_n$ corresponds to one day for instance and we rely on daily nonparametrically smoothed yield curves or discount curves, such as provided in \cite{FPY2022}. Measuring time in years (and having roughly 250 trading days each year, this correpsonds to $\Delta_n\approx 250^{-1}$ in practice. 
We rely on infill asymptotics
$\Delta_n\to 0$ as $n\to \infty$
%For theoretical reasons we consider to arbitrarily large maturities, so we  have for each $i\in \mathbb N$ a sequence $$d(i,\cdot)\in l^2$$ with entries $d(i,j)$ given by \eqref{difference returns as vector entries}.
and recall the definition of the realized covariation $(\hat q_t^n)_{t\geq 0}$ as a piecewise constant kernel. That is, 
%\begin{align*}
%\hat{\Sigma}_t^{n}:= & \sum_{i=1}^{\ul}\left(\Pi_n \tilde{\Delta}_{i\Delta_n}d\right)^{\otimes 2}\\
%= & \frac 1{\Delta_n^2}\sum_{i=1}^{\ul}\sum_{j_1,j_2}^{\lfloor M/\Delta_n\rfloor}d^n(i,j_1)d^n(i,j_2)\indicator_{[(j_1-1)\Delta_n,j_1\Delta_n]}\otimes \indicator_{[(j_2-1)\Delta_n,j_2\Delta_n]}
%\end{align*}
%This can be uniquely identified with the matrix $RV_t^{250}$ introduced in section ... and can be recovered by data easily. 
for $x\in [(j_1-1)\Delta_n,j_1\Delta_n] $ and $y\in [(j_2-1)\Delta_n,j_2\Delta_n]$, $j_1,j_2\in\mathbb N$ and $t\geq 0$ we define
\begin{equation}\label{discrete quadratic variation estimator}
    \hat q _t^n(x,y):= \sum_{i=1}^{\ul} d_i^n(j_1)d_i^n(j_2).
\end{equation}
For each $n\in \mathbb N$ and $t\geq 0$, we have that $\hat q_t^n\in L^2(\mathbb R_+^2)$, which follows from the Assumption that forward curves are elements in $L^2(\mathbb R_+)$ (c.f. Remark \ref{rem: semigroup adjustments are difference returns} below). 
We now state the general identifiability result for the quadratic covariation of $X$.
\begin{theorem}\label{T: General discrete LLN}
    It is as $n\to \infty$ and w.r.t. the Hilbert-Schmidt norm and $\mathcal T_{\hat q^n}$ as in  \eqref{Hilbert-Schmidt kernel equivalence}
    \begin{equation}\label{eq: abstract covariation convergence}
 \Delta_n^{-2}\mathcal   T_{\hat q^n}\overset{u.c.p.}{\longrightarrow}  [X,X].
    \end{equation}
   % and equivalently
  %  $$\hat{\Sigma}_t^n\overset{u.c.p.}{\longrightarrow}\int_0^t \Sigma_s ds+\left( Y_s-Y_{s-}\right)^{\otimes 2} .$$
\end{theorem}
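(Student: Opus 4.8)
The plan is to reduce the left-hand side of \eqref{eq: abstract covariation convergence}, via the cross-sectional (maturity) and temporal discretizations built into difference returns, to the \emph{semigroup-adjusted} realized covariation of the mild solution $f$, and then to invoke the corresponding law of large numbers of \cite{Schroers2024}. For the reduction, note that since $f_t=-\partial_x\log P_t$ and $P_t(0)=1$, the differences of the log bond prices along the maturity grid occurring in \eqref{Difference returns} are integrals of $f_{i\Delta_n}$, resp.\ $f_{(i+1)\Delta_n}$, over single grid cells; substituting the mild form \eqref{mild Ito process} and using the Musiela shift $\mathcal S$ yields, as made precise in Remark~\ref{rem: semigroup adjustments are difference returns},
\[
 d^n_i(j)=-\big\langle \widetilde\Delta^n_{i+1}f,\ \indicator_{[(j-1)\Delta_n,\,j\Delta_n]}\big\rangle,\qquad \widetilde\Delta^n_k f:=f_{k\Delta_n}-\mathcal S(\Delta_n)f_{(k-1)\Delta_n}=\int_{(k-1)\Delta_n}^{k\Delta_n}\mathcal S(k\Delta_n-s)\,dX_s,
\]
with $\widetilde\Delta^n_k f$ the $k$-th semigroup-adjusted increment of $f$. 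Writing $\pi_n$ for the orthogonal projection of $L^2(\R_+)$ onto the functions that are constant on each cell $[(j-1)\Delta_n,j\Delta_n]$, $j\in\N$ (the cell-averaging operator, a contraction), the displayed identity says exactly that $x\mapsto\Delta_n^{-1}d^n_i(\lceil x/\Delta_n\rceil)=-\pi_n\widetilde\Delta^n_{i+1}f$; hence by \eqref{discrete quadratic variation estimator}, \eqref{Hilbert-Schmidt kernel equivalence} and the elementary identity $(\pi_n g)^{\otimes2}=\pi_n\,g^{\otimes2}\,\pi_n$,
\[
 \Delta_n^{-2}\,\mathcal T_{\hat q^n_t}=\sum_{i=1}^{\ul}(\pi_n\widetilde\Delta^n_{i+1}f)^{\otimes2}=\pi_n\Big(\sum_{k=1}^{\ul}(\widetilde\Delta^n_k f)^{\otimes2}\Big)\pi_n+R^n_t,
\]
where $R^n_t$ collects at most two rank-one boundary terms and is negligible u.c.p.

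Now invoke the law of large numbers for semigroup-adjusted realized covariations of \cite{Schroers2024}: $A^n_t:=\sum_{k=1}^{\ul}(\widetilde\Delta^n_k f)^{\otimes2}\overset{u.c.p.}{\longrightarrow}[X,X]_t$ in $L_{\HS}(L^2(\R_+))$, with $[X,X]$ as in \eqref{Quadratic variation abstract probability limit}. It therefore suffices to show $\pi_n A^n_t\pi_n\overset{u.c.p.}{\longrightarrow}[X,X]_t$. Using $\|\pi_n\|\le1$ and that $[X,X]_t$ is self-adjoint,
\[
 \big\|\pi_n A^n_t\pi_n-[X,X]_t\big\|_{\HS}\ \le\ \big\|A^n_t-[X,X]_t\big\|_{\HS}+2\,\big\|(I-\pi_n)[X,X]_t\big\|_{\HS}.
\]
The supremum over $t\in[0,T]$ of the first summand tends to $0$ in probability by \cite{Schroers2024}. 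For the second, recall $[X,X]_t=[X^C,X^C]_t+[J,J]_t=\int_0^t\sigma_s\sigma_s^*\,ds+\sum_{s\le t}(\Delta X_s)^{\otimes2}$; under the standing integrability assumptions of Section~\ref{Sec:  semimartingales in Hilbert spaces} (namely $\int_0^T\|\sigma_s\|_{\HS}^2\,ds<\infty$ and $\sum_{s\le T}\|\Delta X_s\|^2<\infty$) the path $t\mapsto[X,X]_t$ is, almost surely, c\`adl\`ag into the Banach space of trace-class operators, so its range $\mathcal K:=\overline{\{[X,X]_t:t\in[0,T]\}}$ is a.s.\ compact there, hence also compact in $L_{\HS}(L^2(\R_+))$. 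Since $\pi_n\to I$ strongly and $A\mapsto\|(I-\pi_n)A\|_{\HS}$ is $1$-Lipschitz for $\|\cdot\|_{\HS}$, a finite $\varepsilon$-net argument over $\mathcal K$ yields $\sup_{t\le T}\|(I-\pi_n)[X,X]_t\|_{\HS}\le\sup_{A\in\mathcal K}\|(I-\pi_n)A\|_{\HS}\to0$ almost surely. Combining the two estimates with the negligibility of $R^n_t$ gives \eqref{eq: abstract covariation convergence}.

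The temporal part of the argument is imported verbatim from \cite{Schroers2024}; the one genuinely new ingredient is the \emph{uniform} control of the cross-sectional projection $\pi_n$, and that is where I expect the main obstacle to lie. The subtlety is that $\pi_n\to I$ only strongly, not in operator norm, so $\sup_{t\le T}\|(I-\pi_n)[X,X]_t\|_{\HS}$ cannot be dispatched by the crude bound $\|I-\pi_n\|_{\mathrm{op}}\,\|[X,X]_T\|_{\HS}$, and one must exploit the (random) compactness of $\{[X,X]_t:t\le T\}$ as above. An equivalent, more hands-on route works at the level of increments: by the It\^o isometry and dominated convergence — the integrand being dominated by the $ds$-integrable $\|\sigma_s\|_{\HS}^2$, with routine additional bounds for the drift (using $\int_0^T\|\alpha_s\|^2\,ds<\infty$) and the square-integrable jump part $J$ — one obtains $\sum_{k=1}^{\ulT}\|(I-\pi_n)\widetilde\Delta^n_k f\|^2\to0$ in $L^1$, and together with $\sum_{k=1}^{\ulT}\|\widetilde\Delta^n_k f\|^2=O_{\Prob}(1)$ and the bilinear splitting $a^{\otimes2}-b^{\otimes2}=(a-b)\otimes a+b\otimes(a-b)$ this bounds $\|\Delta_n^{-2}\mathcal T_{\hat q^n_t}-A^n_t\|_{\HS}$ uniformly on $[0,T]$ and reduces everything to the temporal LLN; the effectiveness of this second route rests on trading the sum over $k\le\ulT$ for an integral over $[0,T]$, which is what lets strong convergence of $\pi_n$ beat the growing number of increments. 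The residual bookkeeping (the normalization $P_t(0)=1$, the index shift producing $R^n_t$, and the moment estimates just mentioned) is routine.
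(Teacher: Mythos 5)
Your proposal is correct and follows essentially the same route as the paper: the identity $\Delta_n^{-2}\mathcal T_{\hat q^n_t}=\Pi_{n,M}\,SARCV^n_t\,\Pi_{n,M}$ from Remark \ref{rem: semigroup adjustments are difference returns}, the contraction bound $\|\Pi_{n,M}(SARCV^n_t-[X,X]_t)\Pi_{n,M}\|_{\HS}\le\|SARCV^n_t-[X,X]_t\|_{\HS}$ combined with the law of large numbers for the semigroup-adjusted realized covariation from \cite{Schroers2024}, and a separate uniform-in-$t$ control of the spatial projection error. The only point where you genuinely diverge is that last step: the paper (in the proof of its abstract Theorem \ref{T: Theorems for abstract semigroups}(i)) bounds the projection error by $\int_0^T\|\Pi_m\Sigma_s\Pi_m-\Sigma_s\|_{\HS}\,ds+\sum_{s\le T}\|\Pi_m(X_s-X_{s-})^{\otimes2}\Pi_m-(X_s-X_{s-})^{\otimes2}\|_{\HS}$ and concludes by dominated convergence together with the almost sure pointwise convergence supplied by Proposition 4 and Lemma 5 of \cite{Panaretos2019}, whereas you exploit the almost sure relative compactness of the range of the c\`adl\`ag path $t\mapsto[X,X]_t$ together with the uniform $1$-Lipschitz property of $A\mapsto\|(I-\pi_n)A\|_{\HS}$ and an $\varepsilon$-net. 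Both arguments are valid and of comparable length; the paper's avoids the compactness-of-range fact, yours avoids invoking the explicit integral-plus-jump-sum representation of $[X,X]$ and so is marginally more general. One bookkeeping remark: under the index convention stated in Remark \ref{rem: semigroup adjustments are difference returns}, namely $d_i^n(j)=-\langle\tilde\Delta_i^nf,\indicator_{[(j-1)\Delta_n,j\Delta_n]}\rangle$, the operator identity is exact and your boundary remainder $R^n_t$ vanishes identically, so no separate negligibility argument is required there.
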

%    In particular, we obtain w.r.t. the norm on $L^2([0,M]^2)$ the feasible result
 %   \begin{equation}
%      \hat q^n\big|_{[0,M]^2}\overset{u.c.p.}{\longrightarrow} \mathfrak q\big|_{[0,M]^2}.
 %   \end{equation}
As in the case of finite-dimensional semimartingales, we do not have to impose any further conditions to identify the quadratic covariations of the driving semimartingale $X$ although the observable process $(f_t)_{t\geq 0}$ is not necessarily an $L^2(\mathbb R_+)$-valued semimartingale.  This is due to the relation of difference returns to semigroup-adjusted forward rate returns, which where shown in \cite{Schroers2024}, \cite{Benth2022} and \cite{BSV2022} to be well-suited for volatility estimation for processes of the form \eqref{mild Ito process}. This relationship is made clear in the next remark.
\begin{remark}\label{rem: semigroup adjustments are difference returns}[Difference returns are discretized semigroup-adjusted increments]
 % At first glance, it is remarkable that economically derived difference returns lead to such a general identifiability result for volatility estimation in the general bond market framework. 
 The reason for economically motivated difference returns to lead to such a general identifiability result is that difference returns %and the estimator $\hat q_t^n$ %, $\hat q^{n,-}_t$ and $\hat q_t^n(+)$ (almost)
  coincide with orthonormal projections onto semigroup-adjusted increments of forward rate curves. That is, we have
  $$%d^n(i,j)=
 d_i^n(j)=-\langle \tilde \Delta_i^n f,\indicator_{[(j-1)\Delta_n,j\Delta_n]}\rangle.$$
  where  $\tilde \Delta_i^n f$ denotes the semigroup-adjusted forward rate increment
$$\tilde \Delta_i^n f:= f_{i\Delta_n}-\mathcal S(\Delta_n) f_{(i-1)\Delta_n}\quad i=1,...,\ulT.$$
and $(\mathcal S(t))_{t\geq 0}$ is the left shift semigroup on $L^2(\mathbb R_+)$.
%   $(\mathcal S(t))_{t\geq 0}$ denotes the
  %  nilpotent semigroup of left shifts in $L^2(0,H)$ given by 
%\begin{equation}\label{Nilpotent Shift}
%    \mathcal S(t)h(x):=\begin{cases} h(x+t), & x+t\leq H,\\
%0, & x+t>H,
%\end{cases}
%\end{equation}
%  To see that, recall that the price $P_t(x)$ of a bond at time $t$ maturing at time $t+x$ is linked to instantaneous forward rates by the formula $P_t(x)=exp(-\int_0^x f_t(y)dy)$ and, thus,  by substitution and formula \ref{Difference returns}
 %   \begin{align*}
 %       \tilde \Delta_{t}^n d(x) % = & (\log P_{t}(x)-\log d_{t}(x+\Delta_n))-( \log d_{t+\Delta_n}(x-\Delta_n)-\log d_{t+\Delta_n}(x))
        %= & \left(-\int_0^x f_t(y)dy+\int_0^{x+\Delta_n} f_t(y)dy\right)-\left(-\int_0^{x-\Delta_n} f_{t+\Delta_n}(y)dy+\int_0^x f_{t+\Delta_n}(y)dy\right)\\
       % = & \int_x^{x+\Delta_n} f_t(y)dy- \int_{x-\Delta_n}^x f_{t+\Delta_n}(y)dy\\
        %= & -\int_{x-\Delta_n}^x f_{t+\Delta_n}(y)-f_t(y+\Delta_n)dy\\
 %       = -\langle f_{t+\Delta_n}-f_t(\cdot+\Delta_n),\indicator_{[x-\Delta_n,x]}\rangle.
 %   \end{align*}
%In particular, we have %$\Pi_{n,M} \tilde{\Delta}_{i\Delta_n}d= \Pi_{n,M} \tilde \Delta_i^n f$ for w
%with
Define
\begin{equation}\label{Orthonomal Projection on Indicators}
    \Pi_{n,M} h:=  n \sum_{j=1}^{\lfloor M/\Delta_n\rfloor}\langle h, \indicator_{[(j-1)\Delta_n,j\Delta_n]}\rangle \indicator_{[(j-1)\Delta_n,j\Delta_n]}
\end{equation}
the projection onto $span(\indicator_{[(j-1)\Delta_n,j\Delta_n]}:j=1,...,\lfloor M/\Delta_n\rfloor)$ and observe that %for the integral operator %$T_{\hat q^{n,M}_t}$ corresponding to $\hat q_t^n\big|_{[0,M]^2}$ given by $
%T_{\hat q^{n,M}_t}:=  \sum_{i=1}^{\ul}\left(\Pi_{n,M}\tilde{\Delta}_{i\Delta_n}d\right)^{\otimes 2}$
%=  \frac 1{\Delta_n^2}\sum_{i=1}^{\ul}\sum_{j_1,j_2}^{\lfloor M/\Delta_n\rfloor}\tilde \Delta_{i\Delta} d(j_1\Delta)\tilde \Delta_{i\Delta} d(j_2\Delta)
%d^n(i,j_1)d^n(i,j_2)\indicator_{[(j_1-1)\Delta_n,j_1\Delta_n]}\otimes \indicator_{[(j_2-1)\Delta_n,j_2\Delta_n]}
\begin{align*}
 \Delta_n^{-2} \mathcal   T_{\hat q^{n,M}_t}= 
  %&\frac 1{\Delta_n^2}\sum_{i=1}^{\ul}\sum_{j_1,j_2}^{\lfloor M/\Delta_n\rfloor}d_i^n(j_1)d_i^n(j_2)\indicator_{[(j_1-1)\Delta_n,j_1\Delta_n]}\otimes \indicator_{[(j_2-1)\Delta_n,j_2\Delta_n]}
%&\sum_{}\Pi_{n,M}\left(\sum_{i=1}^{\ul} \tilde \Delta_i^n f^{\otimes 2}\right)\Pi_{n,M}\\
  \Pi_{n,M}(SARCV_t^n)\Pi_{n,M},
\end{align*}
where $SARCV_t^n=\sum_{i=1}^{\ul} \tilde \Delta_i^n f^{\otimes 2}$ is the semigroup-adjusted realized covariation, which was shown to be a consistent estimator of 
$[X,X]_t$ in \cite{Schroers2024} in the presence of jumps
and a consistent and asymptotically normal  estimator of $[X^C,X^C]$ in \cite{Benth2022} and \cite{BSV2022} when $J\equiv 0$.  This characterization of the realized covariation $\hat q^n$ also explains the appearance of the scalar $\Delta_n^{-2}$ in front of the covariation in \eqref{eq: abstract covariation convergence}.
\end{remark}

%\begin{remark}
% The sampling scheme in \eqref{difference returns as vector entries} indicates that we need the same 
%   daily resolution in the maturity dimension as in time, which cannot be provided by raw bond market data. On the one hand, this is actually not strictly necessary as long as we can reconstruct the difference returns at several points $x_1,...,x_m$ and assuming $\sup_{j=1,...,m} |x_{j+1}-x_j|\to 0$ as $m\to \infty$ at a potentially different rate than $n$.
%   On the other hand, the functional nature of forward and yield curves allows their approximation in various ways such as splines, kernel smoothing or basis expansions and formally justifies the use of nonparametrically smoothed yield or discount curves, which provide the curves in daily resolution.
%We discuss the aspect of smoothing, the use of nonparametric yield curve data as well as further transformations of the data as functional data objects from a practical point of view in Section \ref{Sec: Practical Considerations}. Our simulation study in Section \ref{Sec: Simulation Study} affirms that if the volatility is sufficiently regular, the sparse sampling in the maturity dimension and subsequent smoothing still leads to good results.
%\end{remark}

Next, we examine how to identify the continuous part of the quadratic covariation.

\subsection{Identification of $[X^C,X^C]$ and $[J,J]$ via truncated covariation estimators}\label{Subsec: Truncated Estimation in term structure models}
We now turn to the estimation  of the continuous part of the quadratic covariation. We will derive jump robust estimators by a truncated form of $\hat q_t^{n}$ defined 
by the piecewise constant kernel
$\hat q^{n,-}_t$ given for $x\in [(j_1-1)\Delta_n,j_1\Delta_n]$, $y\in [(j_2-1)\Delta_n, j_2\Delta_n]$,  $j_1,j_2\in \mathbb N$ and $t\geq 0$ by
\begin{align}\label{truncated estimator}
   \hat q^{n,-}_t(x,y):= 
   &\sum_{i=1}^{\lfloor t/\Delta_n\rfloor}d_i^n(j_1)d_i^n(j_2)\indicator_{ g_{n}\left(d_i^n/\Delta_n\right)\leq u_n}
\end{align}
for $u_n=\alpha \Delta_n^{w}$, with $w\in (0,1/2)$ and $\alpha>0$
and a particular sequence of truncation functions $g_{n}$ that takes into account only the discrete data $d^i_n(j)$ for $j\in \mathbb N$. 
 %That is, we now consider the truncated estimator $\hat q_t^n(-)$  given  by formula \eqref{truncated estimator} for $x\in [(j_1-1)\Delta_n,j_1\Delta_n]$, $y\in [(j_2-1)\Delta_n, j_2\Delta_n]$ and $j_1,j_2\in \mathbb N$ by formula \eqref{truncated estimator}.
%\begin{align*}
%   \hat{q}_t^n(-) (x,y):= 
%   &\frac 1{\Delta_n^2}\sum_{i=1}^{\ul}\tilde \Delta_{i\Delta} d(j_1\Delta_n)\tilde \Delta_{i\Delta_n} d(j_2\Delta_n) \indicator_{ g_n\left((\tilde \Delta_{i\Delta_n} d(j\Delta_n)/\Delta_n)_{j\geq 0}
   %d^n(i,\cdot)
 %  \right)\leq u_n}
%\end{align*}
%for a sequence 
%\begin{equation}\label{def of un}
%    u_n=\alpha \Delta_n^{\omega},\quad \omega\in (0,1/2), \alpha > 0
%\end{equation}
Precisely, the corresponding sequence of truncation functions $g_n: l^2\to \mathbb R_+$ must satisfy for constants $c,C>0$ and for all $f,h\in l^2$ and all $n\in \mathbb N$
\begin{align}\label{def: g}
  c\|f\|_{l^2}\leq g_n(f)\leq C \|f\|_{l^2},\quad \text{ and }\quad
  g_n(f+h)\leq g_n(f)+g_n(h).
\end{align}
%The choice of the truncation level and function require some further explanantion. 
While the particular choice of the functions $g_n$ will not play a role for the asymptotic behavior of $\hat q^{n,-}_t$, it is important to modify it in practice. For the moment, one can take in mind the legitimate choice $g_n=\|\cdot\|_{l^2}$ for all $n\in \mathbb N$ for which we have  with $\Pi_{n,\infty}$ defined as in \eqref{Orthonomal Projection on Indicators}  for $M=\infty$ that
$\|d_i^n/\Delta_n\|_{l^2}= \|
    \Pi_{n,\infty} \tilde \Delta_i^n f\|_{L^2(\mathbb R_+)}.$
   % The intuition behind this estimator is that an increment of a Brownian motion is of magnitude $\sqrt{\Delta_n}\log(1/\Delta_n)>u_n$ for large $n$, due to Levy's modulus of continuity theorem. Thus, the increments corresponding to jumps are either truncated in the estimator, or have no impact on our limit result. 
   We will discuss a data-driven specification of $g_n$ and the truncation level in Section \ref{Sec: Practical Considerations}. 
   
  The next result states that $\hat q^{n,-}$ consistently estimates the quadratic covariation of $X^C$.

%The next theorem shows that the integrated volatility $
%\int_0^t \Sigma_s ds$ is identifiable in the context of infill asymptotics in maturity and time.
%It should be underlined that as in the limit theory for finite-dimensional semimartingales no further assumption than Assumption \ref{As: H}(2) is imposed.
 
%To reduce notation, we say without loss of generality that
%\begin{equation}\label{reduced domain kernel convergence notation}
%    q^{n,-}\overset{u.c.p.}{\longrightarrow} \int_0^{\cdot}q_s^C ds,\quad \text{ in }L^2([0,M]^2),\quad \text{ when } \quad q^{n,M,-}\overset{u.c.p.}{\longrightarrow} \int_0^{\cdot}q_s^C\big|_{[0,M]^2}ds,\quad \text{ in }L^2([0,M]^2)
%\end{equation}
\begin{theorem}\label{T: General Limit discretized truncated LLN}
   Under Assumption \ref{As: H}(2) and  with the notation of \eqref{Hilbert-Schmidt kernel equivalence} it is  as $n\to\infty$ 
   $$\Delta_n^{-2}\mathcal T_{\hat q^{n,-}_{\cdot}} \overset{u.c.p.}{\longrightarrow}[X^C,X^C]. $$
  % where $\Pi_M f (x)= \indicator_{[0,M]}(x)f(x)$ for all $f\in L^2(\mathbb R_+)$.
%Moreover, it is
%   as $n\to\infty$ and for all $M>0$ and w.r.t. the norm on $L^2([0,M]^2)$
 %  $$\hat q^{n,M,-}\overset{u.c.p.}{\longrightarrow} \int_0^{\cdot}q_s^C\big|_{[0,M]^2}ds.$$
\end{theorem}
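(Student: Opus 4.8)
The plan is to reduce the statement to known truncated-covariation results for semigroup-adjusted increments via the projection identity from Remark \ref{rem: semigroup adjustments are difference returns}, then control the truncation discrepancy that arises because we truncate the \emph{projected} increments $d_i^n/\Delta_n = -\Pi_{n,\infty}\tilde\Delta_i^n f$ (read off through $g_n$) rather than the full increments $\tilde\Delta_i^n f$. First I would record that, by Remark \ref{rem: semigroup adjustments are difference returns} and \eqref{Orthonomal Projection on Indicators}, one has the operator identity
\begin{equation*}
\Delta_n^{-2}\mathcal T_{\hat q^{n,-}_t} = \Pi_{n,\infty}\Bigl(\sum_{i=1}^{\lfloor t/\Delta_n\rfloor} (\tilde\Delta_i^n f)^{\otimes 2}\,\indicator_{g_n(d_i^n/\Delta_n)\le u_n}\Bigr)\Pi_{n,\infty},
\end{equation*}
so the object of interest is the projection, sandwiched by $\Pi_{n,\infty}$, of a truncated semigroup-adjusted realized covariation $SARCV^{n,-}_t := \sum_{i} (\tilde\Delta_i^n f)^{\otimes 2}\indicator_{g_n(d_i^n/\Delta_n)\le u_n}$. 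Since $\Pi_{n,\infty}\to\mathrm{Id}$ strongly and $[X^C,X^C]_t$ is Hilbert--Schmidt with range in $L^2(\mathbb R_+)$ (using the standing assumption that the state spaces of $X^C$ and $J$ sit inside $L^2(\mathbb R_+)$), the sandwiching by $\Pi_{n,\infty}$ is asymptotically negligible in Hilbert--Schmidt norm uniformly on compacts, provided $SARCV^{n,-}_t$ itself converges u.c.p.\ to $[X^C,X^C]$; this last reduction is exactly the argument already used for the non-truncated statement in Theorem \ref{T: General discrete LLN}.

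The core step is therefore to show $SARCV^{n,-}_t \overset{u.c.p.}{\longrightarrow} [X^C,X^C]_t$ in $L_{\HS}(L^2(\mathbb R_+))$ under Assumption \ref{As: H}(2). I would deduce this from the truncated-covariation limit theorem of \cite{Schroers2024} (the infinite-dimensional analogue of \cite{Mancini2009} and \cite{Jacod2008}), which gives $\sum_i (\tilde\Delta_i^n f)^{\otimes 2}\indicator_{\|\tilde\Delta_i^n f\|\le u_n} \overset{u.c.p.}{\longrightarrow}[X^C,X^C]$. The gap to close is that our indicator uses $g_n(d_i^n/\Delta_n)$ in place of $\|\tilde\Delta_i^n f\|$. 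By \eqref{def: g} and the equality $\|d_i^n/\Delta_n\|_{l^2} = \|\Pi_{n,\infty}\tilde\Delta_i^n f\|$, we get the two-sided bound $c\|\Pi_{n,\infty}\tilde\Delta_i^n f\| \le g_n(d_i^n/\Delta_n)\le C\|\Pi_{n,\infty}\tilde\Delta_i^n f\|$, and $\|\Pi_{n,\infty}\tilde\Delta_i^n f\|\le \|\tilde\Delta_i^n f\|$ always while the reverse holds up to an $o_P$-controllable projection error. Hence the event $\{g_n(d_i^n/\Delta_n)\le u_n\}$ differs from $\{\|\tilde\Delta_i^n f\|\le u_n'\}$ for comparable truncation levels $u_n' = u_n/c$ or $u_n/C$ only on increments whose size is of the order of the threshold; since changing the constant $\alpha$ in $u_n=\alpha\Delta_n^w$ does not affect the limit (a standard feature of these truncation arguments, and one can interpolate in $\alpha$), the contribution of the symmetric difference of the two truncation events vanishes u.c.p. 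I would make this rigorous by the usual split of increments into a "continuous-dominated" part and a "jump-dominated" part: on the set where the jump contribution over $[(i-1)\Delta_n,i\Delta_n]$ is small, both indicators equal $1$ for $n$ large with high probability; on the set where a jump dominates, both indicators are eventually $0$; the remaining increments (those near a jump time or with borderline size) are $O(1/\Delta_n)\cdot o(1)$ in number and each contributes $O(u_n^2)=O(\Delta_n^{2w})$ with $2w<1$, which is why the exponent restriction $w\in(0,1/2)$ enters.

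The main obstacle I anticipate is precisely handling the projection $\Pi_{n,\infty}$ \emph{inside} the truncation indicator: one must ensure that passing from $\|\tilde\Delta_i^n f\|$ to $\|\Pi_{n,\infty}\tilde\Delta_i^n f\|$ (equivalently $g_n(d_i^n/\Delta_n)$) does not, for some increments, systematically push a genuine jump increment below the threshold or a continuous increment above it in a way that accumulates. The cleanest route is to bound the projection defect $\|\tilde\Delta_i^n f\| - \|\Pi_{n,\infty}\tilde\Delta_i^n f\|$ uniformly in $i$ by a term that is $o_P(u_n)$: for the continuous-martingale and drift parts this follows from the $L^2$-modulus-of-continuity estimates for mild Itô processes in \cite{Schroers2024} together with strong convergence $\Pi_{n,\infty}\to\mathrm{Id}$ on the (compact) covariance range, and for the jump part one notes that a single large jump's semigroup-adjusted increment has norm bounded below (for $n$ large, since $\mathcal S(\Delta_n)\to\mathrm{Id}$ strongly on the jump's state space) so its projection is still above threshold. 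Once the symmetric-difference contribution is shown to be $\overset{u.c.p.}{\longrightarrow}0$, the theorem follows by combining with the cited truncated limit theorem and the $\Pi_{n,\infty}$-sandwiching argument from Theorem \ref{T: General discrete LLN}. I would also remark that Assumption \ref{As: H}(2) is exactly what supplies the jump-activity / moment control needed to invoke the truncated limit theorem of \cite{Schroers2024} in the first place.
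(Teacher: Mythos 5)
Your proposal follows essentially the same route as the paper: rewrite $\Delta_n^{-2}\mathcal T_{\hat q^{n,-}_t}$ via Remark \ref{rem: semigroup adjustments are difference returns} as a projected, truncated semigroup-adjusted realized covariation, show that truncating on $g_n(\Pi_{n,\infty}\tilde\Delta_i^n f)$ instead of on the full increment $\tilde\Delta_i^n f$ is asymptotically harmless, invoke the truncated LLN of \cite{Schroers2024}, and dispose of the outer projections by strong convergence $\Pi_{n,\infty}\to I$. This is exactly the paper's combination of Lemma \ref{L: Bridging Lemma} with Theorem \ref{T: Theorems for abstract semigroups}(ii), so the architecture is right.

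One step in your sketch does not close as written, and it is the crux. You bound the symmetric difference of the two truncation events by ``$o(1/\Delta_n)$ borderline increments, each contributing $O(u_n^2)=O(\Delta_n^{2w})$,'' and assert that $2w<1$ saves the day. But $o(\Delta_n^{-1})\cdot O(\Delta_n^{2w})=o(\Delta_n^{2w-1})$, which \emph{diverges} precisely because $2w<1$; the restriction $w<1/2$ is there for other reasons (it keeps continuous increments, of typical size $\sqrt{\Delta_n}$, below the threshold), not to make this product small. What actually makes the symmetric-difference term vanish under Assumption \ref{As: H}(2) is a moment estimate, not a count: on the borderline event the jump contribution must be of order $u_n$, so one bounds the term by $u_n^2\sum_i\bigl(1\wedge\|\tilde\Delta_i^n f''\|^2/u_n^2\bigr)$ and uses $\mathbb E\bigl[1\wedge\|\tilde\Delta_i^n f''\|^2/u_n^2\bigr]\leq K\Delta_n\phi_n$ with $\phi_n\to 0$ (Lemma A.1 of \cite{Schroers2024}), which after summation gives $O(\phi_n)\to 0$; equivalently, the relevant count of jump-dominated borderline increments is $O(\Delta_n^{-2w}\phi_n)$, not merely $o(\Delta_n^{-1})$. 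Relatedly, your appeal to a single jump's norm being ``bounded below'' only covers finitely many large jumps, whereas Assumption \ref{As: H}(2) permits infinitely many square-summable small jumps, which is why the $1\wedge$ estimates (together with the localization via stopping times carried out in the paper's Lemma \ref{L: Bridging Lemma}, since \ref{As: H}(2) only gives local bounds) are needed rather than a size dichotomy. With that bookkeeping replaced by the moment argument, your proof coincides with the paper's.
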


Let us make a  remark on the feasibility of the estimator.
\begin{remark}\label{Rem: Reduction of the time to maturity is fine}
In practice, we do not observe the $d_i^n(j)$ for all $j\in \mathbb N$ but rather up to a finite maturity $M$, that is ,for $j\in 1,...,\lfloor M/\Delta_n\rfloor-1$.  %In that regard, the truncated realized variation $\hat q^{n,-}$ is not a feasible estimator, so we need to consider a feasible version instead.
    Consistency of $\hat q^{n}$ from Theorem \ref{T: General discrete LLN} implies the consistency of $\hat q^{n}\big|_{[0,M]^2}$ for each $M>0$,  so there is no problem when we do not consider truncation.
    However,  $\hat q^{n,-}\big|_{[0,M]^2}$ is not a feasible estimator in this context, since it uses in the truncation function the whole infinitely long vector $(d^i_n(j))_{j\in\mathbb N}$.
A feasible estimator $\hat q^{n,M,-}$ is defined for $x\in [(j_1-1)\Delta_n,j_1\Delta_n]$, $y\in [(j_2-1)\Delta_n, j_2\Delta_n]$, $j_1,j_2\in \mathbb N$ and $t\geq 0$ by
\begin{align}\label{Spatially maximum truncated difference return estimator}
  \Delta_n^{-2}\hat q^{n,M,-}_t (x,y):= 
   & \Delta_n^{-2}\sum_{i=1}^{\ul}d_i^n(j_1)d_i^n(j_2)\indicator_{ g_n\left((
   %\delta_{j\Delta_n \leq \lfloor M/\Delta_n\rfloor}
   \indicator_{[0,M]}(j\Delta_n)
   d_i^n(j)/\Delta_n)_{j\geq 0}\right)\leq u_n}.
\end{align}
%where $\delta_{x\leq \lfloor M/\Delta_n\rfloor}=1$ if $x\leq \lfloor M/\Delta_n\rfloor$ and $\delta_{x\leq \lfloor M/\Delta_n\rfloor}=0$ otherwise. 

To relax the notation, we do not present the limit theorems for $\hat q^{n,M,-}$ in this Section. % or respectively $\mathcal T_{\hat q^{n,M,-}}$ which is a consistent estimator of $ [ \Pi_MX^C, \Pi_MX^C]$
  However, all results that we state for $\mathcal T_{\hat q^{n,-}}$ with limit $[X^C,X^C]$, that is, Theorems \ref{T: General Limit discretized truncated LLN}, \ref{T: Rate of convergence for discretized estimator}. \ref{T: CLT for truncated estimator}  hold for $\mathcal T_{\hat q^{n,-}}$ with limit $[\Pi_MX^C,\Pi_MX^C]$ and Theorem \ref{T: Long-time asymptotics for termstructure volatiltiy} holds for $\mathcal T_{\hat q^{n,-}_T}/T$ with limit $\Pi_M \mathcal C\Pi_M$  where $\Pi_M h(x)= \indicator_{[0,M]}(x)h(x)$ for all $h\in L^2(\mathbb R_+)$.  The formal proof for that can be found in the appendix. %supplementary material to this article within the proofs of the respective Theorems.
\end{remark}
Observe that we can also define an upward truncated estimator
    $ \Delta_n^{-2}\hat q^{n,+}$ given for $t\geq 0$, $x\in [(j_1-1)\Delta_n,j_1\Delta_n]$, $y\in [(j_2-1)\Delta_n, j_2\Delta_n]$ and $j_1,j_2\in \mathbb N$ by
\begin{align*}
   \hat q^{n,+}_t(x,y):= 
   & \sum_{i=1}^{\ul}d_i^n(j_1)d_i^n(j_2)\indicator_{ g_n(d_i^n/\Delta_n)> u_n}.
\end{align*}
%$$\hat{\Sigma}_t^n(+):=\sum_{i=1}^{\ul}\left(\Pi_n \tilde{\Delta}_{i\Delta_n}d\right)^{\otimes 2}\indicator_{g(\Pi_n \tilde{\Delta}_{i\Delta_n}d)> u_n}.$$
Obviously, 
$\hat q^n=\hat q^{n,-}+\hat q^{n,+}  $  and $\mathcal T_{\hat q^n}=\mathcal T_{\hat q^{n,-}}+\mathcal T_{\hat q^{n,+}  }$. %$\hat{\Sigma}_t^n=\hat{\Sigma}_t^n(-)+\hat{\Sigma}_t^n(+)$. 
Then, combining Theorem \ref{T: General Limit discretized truncated LLN} and Theorem \ref{T: General discrete LLN}, we also obtain 
\begin{corollary}
If  Assumption \ref{As: H}(2) holds, we have as $n\to\infty$ that
  $$ \Delta_n^{-2}\mathcal T_{\hat q^{n,+}_{\cdot}}\overset{u.c.p.}{\longrightarrow} [J,J].$$
\end{corollary}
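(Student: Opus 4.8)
The corollary follows almost immediately by combining the two convergence theorems just established. The plan is to use the algebraic decomposition $\hat q^n = \hat q^{n,-} + \hat q^{n,+}$, which holds pointwise in the kernel representation because for each fixed $i$ and each pair $(j_1,j_2)$ exactly one of the indicators $\indicator_{g_n(d_i^n/\Delta_n)\leq u_n}$ and $\indicator_{g_n(d_i^n/\Delta_n)> u_n}$ equals $1$. Since $\mathcal T_{(\cdot)}$ is linear in the kernel (by the Hilbert–Schmidt kernel correspondence \eqref{Hilbert-Schmidt kernel equivalence}), this lifts to the operator identity $\mathcal T_{\hat q^n} = \mathcal T_{\hat q^{n,-}} + \mathcal T_{\hat q^{n,+}}$, hence
\begin{equation*}
\Delta_n^{-2}\mathcal T_{\hat q^{n,+}_\cdot} = \Delta_n^{-2}\mathcal T_{\hat q^n_\cdot} - \Delta_n^{-2}\mathcal T_{\hat q^{n,-}_\cdot}.
\end{equation*}

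Next I would invoke Theorem \ref{T: General discrete LLN}, which gives $\Delta_n^{-2}\mathcal T_{\hat q^n_\cdot}\overset{u.c.p.}{\longrightarrow}[X,X]$ unconditionally, and Theorem \ref{T: General Limit discretized truncated LLN}, which under Assumption \ref{As: H}(2) gives $\Delta_n^{-2}\mathcal T_{\hat q^{n,-}_\cdot}\overset{u.c.p.}{\longrightarrow}[X^C,X^C]$. The difference of two sequences converging u.c.p. converges u.c.p. to the difference of the limits (the sup of a sum is bounded by the sum of the sups, and a union bound handles the probabilities), so
\begin{equation*}
\Delta_n^{-2}\mathcal T_{\hat q^{n,+}_\cdot}\overset{u.c.p.}{\longrightarrow}[X,X]-[X^C,X^C].
\end{equation*}
Finally one identifies the limit: by definition of the jump part of the quadratic covariation for an Itô semimartingale in a Hilbert space (i.e. $X = X^C + J$ with $X^C$ continuous and $J$ purely discontinuous, both in $L^2(\mathbb R_+)$), one has the orthogonal-type decomposition $[X,X]_t = [X^C,X^C]_t + [J,J]_t$ — this is exactly the identity displayed for the compound Poisson example in Example \ref{Ex: CPP in Hilbert space} and holds in general because $[X^C,J]\equiv 0$. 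Hence $[X,X]-[X^C,X^C] = [J,J]$, which is the claim.

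There is essentially no obstacle here; the only point requiring a word of care is that the stability of u.c.p. convergence under subtraction is being used on operator-valued (Hilbert–Schmidt) processes rather than scalar ones, but this is routine since $L_{\mathrm{HS}}(L^2(\mathbb R_+))$ is itself a separable Hilbert space and the metric structure behaves identically. One should also confirm that the decomposition $[X,X]=[X^C,X^C]+[J,J]$ in the infinite-dimensional setting is either recalled from the technical appendix (Section \ref{Sec:  semimartingales in Hilbert spaces}) or from \cite{Schroers2024}; if one wants to be fully self-contained, it follows from bilinearity of the quadratic covariation together with $[X^C,J]=0$, the latter being a consequence of $X^C$ being continuous of finite variation-free martingale type and $J$ being a compensated pure-jump martingale.
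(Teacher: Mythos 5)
Your proposal is correct and follows exactly the paper's argument: the paper likewise notes the identity $\mathcal T_{\hat q^n}=\mathcal T_{\hat q^{n,-}}+\mathcal T_{\hat q^{n,+}}$ and obtains the result by combining Theorem \ref{T: General discrete LLN} with Theorem \ref{T: General Limit discretized truncated LLN}, using the decomposition $[X,X]=[X^C,X^C]+[J,J]$ recorded in \eqref{semimartingale quadratic variation}. Your added remarks on the stability of u.c.p.\ convergence under subtraction and on the source of the quadratic-variation decomposition are sound and simply make explicit what the paper leaves implicit.
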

This result shows that the quadratic covariation corresponding to the jump part is identifiable in the context of general bond market models.
However, the finer analysis of jumps is not part of this paper and relegated to future work.  Instead, 
we derive convergence rates for the estimation of the continuous part of the quadratic covariation in the next section.

\subsection{Convergence rates and central limit theorem for estimation for $\hat q^{n,-}$}\label{Sec: Error Bounds}

In order to derive rates of convergence and a central limit theorem for estimating the continuous part of the quadratic variation, we need to impose further regularity Assumptions, which depend on the smoothness of the kernel corresponding to the operators $[X^C,X^C]_t$.  For the error bounds,  this is Assumption \ref{As: spatial regularity}, which is discussed in detail in Section \ref{Sec: Assumptions for Error bounds}.  We discuss these Assumptions in the context of Example \ref{Ex: CPP in Hilbert space} right below the subsequent abstract result.
\begin{theorem}\label{T: Rate of convergence for discretized estimator}
          If Assumptions \ref{As: H}(r) and \ref{As: spatial regularity}($\gamma$) hold for some $r\in (0,2)$, $\gamma \in (0,1/2]$, it is for all $\rho<(2-r)w$ \begin{equation}\label{eq: Rate of HF estimator}
              \sup_{t\in [0,T]}\left\|\Delta_n^{-2}\mathcal T_{\hat q^{n,-}_t}-[X^C,X^C]_t\right\|_{\text{HS}}=\mathcal O_p\left(\Delta_n^{\min(\rho,\gamma)}\right).
        % \quad \text{ and }\quad    \sup_{t\in [0,T]}\left\|\Delta_n^{-2}\mathcal T_{\hat q^{n,M,-}_t}-[\Pi_mX^C,\Pi_mX^C]_t\right\|_{\text{HS}}=\mathcal O_p(\Delta_n^{\min(\rho,\gamma)}),
          \end{equation}
         % where, again, $\Pi_M f (x)= \indicator_{[0,M]}(x)f(x)$ for all $f\in L^2(\mathbb R_+)$. 
    In particular, if $r<2(1-\gamma)$,  $w\in (\gamma/(2-r),1/2]$ it is
   \begin{equation}\label{eq: Rate of HF estimator under stricter conditions}
              \sup_{t\in [0,T]}\left\|\Delta_n^{-2}\mathcal T_{\hat q^{n,-}_t}-[X^C,X^C]_t\right\|_{\text{HS}}=\mathcal O_p(\Delta_n^{\gamma}).
         % \quad \text{ and }\quad      \sup_{t\in [0,T]}\left\|\mathcal T_{\hat q^{n,M,-}_t}-[\Pi_MX^C,\Pi_MX^C]_t\right\|_{\text{HS}}=\mathcal O_p(\Delta_n^{\gamma}).
          \end{equation} 
\end{theorem}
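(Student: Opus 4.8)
The plan is to route the estimator through the operator reformulation of Remark~\ref{rem: semigroup adjustments are difference returns} and then transplant the truncated realized variation analysis of \cite{Mancini2009} and \cite{Jacod2008} to the Hilbert--Schmidt-valued, semigroup-adjusted setting of \cite{Schroers2024} and \cite{BSV2022}. By that remark (with $M=\infty$), $\Delta_n^{-2}\mathcal T_{\hat q^{n,-}_t}=\sum_{i=1}^{\ul}(\Pi_{n,\infty}\tilde \Delta_i^nf)^{\otimes2}\indicator_i$, where $\indicator_i$ denotes the indicator that $g_n$ evaluated at the coefficient sequence of $\Pi_{n,\infty}\tilde \Delta_i^nf$ does not exceed $u_n$. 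Since \eqref{def: g} sandwiches $\indicator_i$ between $\indicator_{\|\Pi_{n,\infty}\tilde \Delta_i^nf\|\le u_n/C}$ and $\indicator_{\|\Pi_{n,\infty}\tilde \Delta_i^nf\|\le u_n/c}$ and no stated rate depends on the constant $\alpha$ in $u_n=\alpha\Delta_n^w$, I may treat $\indicator_i$ as a genuine truncation of $\|\Pi_{n,\infty}\tilde \Delta_i^nf\|$ at a level $\asymp\Delta_n^w$. I then split
\[
\Delta_n^{-2}\mathcal T_{\hat q^{n,-}_t}-[X^C,X^C]_t=\underbrace{\sum_{i=1}^{\ul}\big((\Pi_{n,\infty}\tilde \Delta_i^nf)^{\otimes2}-(\tilde \Delta_i^nf)^{\otimes2}\big)\indicator_i}_{=:A_t^n}+\underbrace{\Big(\sum_{i=1}^{\ul}(\tilde \Delta_i^nf)^{\otimes2}\indicator_i-[X^C,X^C]_t\Big)}_{=:B_t^n},
\]
and bound $\sup_{t\le T}\|A^n_t\|_{\HS}$ and $\sup_{t\le T}\|B^n_t\|_{\HS}$ separately.

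\textbf{The projection error $A^n$.} From $\|g^{\otimes2}-h^{\otimes2}\|_{\HS}\le(\|g\|+\|h\|)\|g-h\|$ one gets $\sup_{t\le T}\|A^n_t\|_{\HS}\le2\sum_{i=1}^{\ulT}\|\tilde \Delta_i^nf\|\,\|(I-\Pi_{n,\infty})\tilde \Delta_i^nf\|\indicator_i$. Splitting $\tilde \Delta_i^nf$ into its mild continuous and jump parts $\tilde \Delta_i^nX^C$ and $\tilde \Delta_i^nJ'$, Assumption~\ref{As: spatial regularity}($\gamma$) --- which quantifies the $L^2$ step-function approximation error, equivalently the fractional regularity of $\sigma$ under the shift semigroup --- gives $\EE[\|(I-\Pi_{n,\infty})\tilde \Delta_i^nX^C\|^2\mid\Fi]=\mathcal O(\Delta_n^{1+2\gamma})$ together with $\EE[\|\tilde \Delta_i^nX^C\|^2\mid\Fi]=\mathcal O(\Delta_n)$, so a Cauchy--Schwarz step on the sum over the $\mathcal O(\Delta_n^{-1})$ increments bounds the continuous contribution by $\mathcal O_p(\Delta_n^{\gamma})$. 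For the jump part I use $\|(I-\Pi_{n,\infty})\tilde \Delta_i^nJ'\|\le\|\tilde \Delta_i^nJ'\|$ and, under Assumption~\ref{As: H}(r), the Hilbert-space analogue of Mancini's small-jump estimate $\sum_i\|\tilde \Delta_i^nJ'\|^2\indicator_{\|\tilde \Delta_i^nJ'\|\le C\Delta_n^w}=\mathcal O_p(\Delta_n^{(2-r)w})$; the continuous--jump cross term is of strictly lower order. Hence $\sup_{t\le T}\|A^n_t\|_{\HS}=\mathcal O_p(\Delta_n^{\min(\gamma,(2-r)w)})$, and since the truncation sets are random only $\mathcal O_p(\Delta_n^{\min(\gamma,\rho)})$, $\rho<(2-r)w$, is claimed, which also absorbs logarithmic slack.

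\textbf{The truncated SARCV error $B^n$.} Writing $\indicator_i=1-\indicator_i^c$, I decompose $B^n_t$ in the classical way. The term $\sum_i(\tilde \Delta_i^nX^C)^{\otimes2}\indicator_i^c$ is negligible: a continuous increment can only be discarded if it is itself abnormally large, which by $w<1/2$ and the Gaussian-type tails of the mild continuous increment (Assumption~\ref{As: H}(r)) has super-polynomially small probability, or if a macroscopic jump occurs in the same cell, of which there are $\mathcal O_p(\Delta_n^{-wr})$, giving a contribution $\mathcal O_p(\Delta_n^{1-wr})=\mathcal O_p(\Delta_n^{(2-r)w})$ since $w\le1/2$. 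The kept-jump term $\sum_i(\tilde \Delta_i^nJ')^{\otimes2}\indicator_i$ is $\mathcal O_p(\Delta_n^{(2-r)w})$ by the same small-jump estimate. For the fully truncated continuous term $\sum_i(\tilde \Delta_i^nX^C)^{\otimes2}-[X^C,X^C]_t$ I use a centering--martingale decomposition: its conditional mean equals $\sum_i\int_{(i-1)\Delta_n}^{i\Delta_n}\mathcal S(i\Delta_n-s)\sigma_s\sigma_s^*\mathcal S(i\Delta_n-s)^*\,ds$ up to a drift term of order $\Delta_n$, and Assumption~\ref{As: spatial regularity}($\gamma$) bounds the semigroup displacement and the Hölder modulus of $s\mapsto\sigma_s\sigma_s^*$ so that this lies within $\mathcal O(\Delta_n^\gamma)$ of $[X^C,X^C]_t$ uniformly in $t$, while the martingale remainder is $\mathcal O_p(\Delta_n^{1/2})$ by Doob's inequality and a Hilbert-space Burkholder--Davis--Gundy bound on the $\HS$ norm; since $\gamma\le1/2$ this is $\mathcal O_p(\Delta_n^\gamma)$. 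Cross terms are lower order by Cauchy--Schwarz. Adding the bounds for $A^n$ and $B^n$ yields \eqref{eq: Rate of HF estimator}; for \eqref{eq: Rate of HF estimator under stricter conditions}, the hypotheses $r<2(1-\gamma)$ and $w\in(\gamma/(2-r),1/2]$ force $(2-r)w>\gamma$, so one may pick $\rho\in(\gamma,(2-r)w)$ and then $\min(\rho,\gamma)=\gamma$.

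\textbf{Expected main obstacle.} The crux is the interaction between the \emph{discrete}, coefficient-based truncation through $g_n$ and the underlying continuous-time, operator-valued semimartingale: one must show this truncation retains essentially all continuous increments while discarding essentially all macroscopic jumps even though $\Pi_{n,\infty}$ can damp a spatially narrow jump profile, and one must control the residual $(I-\Pi_{n,\infty})\tilde \Delta_i^nf$ --- which in $A^n$ multiplies the only-barely-truncated increment --- where Assumption~\ref{As: spatial regularity}($\gamma$) is indispensable. The secondary technical cost is establishing the Hilbert--Schmidt, operator-valued versions of the Mancini--Jacod small-jump moment bounds uniformly in $t\in[0,T]$.
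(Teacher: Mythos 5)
Your overall architecture is sound and close to the paper's: isolate the cross-sectional discretization error, control it by $\mathcal O_p(\Delta_n^{\gamma})$ via Assumption \ref{As: spatial regularity}($\gamma$) (this is exactly the content of the paper's Theorem \ref{T: Uniform spatial error bound}), and handle the remaining truncated realized covariation by a Hilbert-space Mancini--Jacod analysis; your derivation of \eqref{eq: Rate of HF estimator under stricter conditions} from \eqref{eq: Rate of HF estimator} is also correct. The paper does not reprove the truncated-SARCV rate, however: it reduces the statement to Theorem \ref{T: Theorems for abstract semigroups}(iii), which imports that rate from Theorem 3.3 of \cite{Schroers2024}, and the entire novelty of the proof lies in a different decomposition from yours, organized around Lemma \ref{L: Bridging Lemma}.

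\textbf{The gap.} Both your $A^n$ and your $B^n$ contain sums of \emph{un-projected} increments weighted by the \emph{discretized} truncation indicator $\indicator_i=\indicator_{g_n(d_i^n/\Delta_n)\le u_n}$. By Remark \ref{rem: semigroup adjustments are difference returns} and \eqref{def: g}, $\indicator_i=1$ only caps $\|\Pi_{n,\infty}\tilde\Delta_i^n f\|\le u_n/c$; since $\Pi_{n,\infty}$ is an orthogonal projection, this gives no bound on $\|\tilde\Delta_i^n f\|$ or $\|\tilde\Delta_i^n J'\|$. Hence the kept-jump term $\sum_i(\tilde\Delta_i^n J')^{\otimes2}\indicator_i$ in $B^n$, and the factor $\|\tilde\Delta_i^n f\|\,\|(I-\Pi_{n,\infty})\tilde\Delta_i^n J'\|\indicator_i$ in $A^n$, are \emph{not} of the form $\sum_i\|\tilde\Delta_i^n J'\|^2\indicator_{\|\tilde\Delta_i^n J'\|\le C\Delta_n^w}$, and the small-jump estimate you invoke does not apply to them: a jump whose spatial profile is poorly resolved at mesh $\Delta_n$ can pass the discrete truncation while its full $L^2$ norm is macroscopic. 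You flag exactly this as the ``expected main obstacle'' but then write the bound as if it were already available. The paper avoids the problem by never letting an un-projected increment sit inside a discretely truncated sum: it first swaps the truncation rule, comparing the estimator to $\sum_i(\Pi_{n,\infty}\tilde\Delta_i^n f)^{\otimes2}\indicator_{g_n(\tilde\Delta_i^n f)\le u_n}$; the difference lives on the symmetric difference of the two truncation events, where the summand $\|\Pi_{n,\infty}\tilde\Delta_i^n f\|^2$ \emph{is} capped by $(u_n/c)^2$, which is what lets the moment estimates of Lemma \ref{L: Bridging Lemma} (via Lemma A.1 of \cite{Schroers2024}) close at order $o_p(\Delta_n^{\rho})$. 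Only then does it apply the known rate to the projected, fully truncated SARCV and add the $\mathcal O_p(\Delta_n^{\gamma})$ spatial bias. If you reorganize your decomposition in that order --- change the truncation rule first, then remove the projection --- your remaining steps go through. A minor further imprecision: the bias of the continuous part is controlled by the shift regularity $\|(\mathcal S(r)-I)\sigma_s\|$ encoded in $\mathfrak F_{\gamma}$, not by any temporal H\"older modulus of $s\mapsto\Sigma_s$, which is neither assumed nor needed ($\sigma$ is only c\`adl\`ag).
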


The rate implied by Theorem \ref{T: Rate of convergence for discretized estimator} is at most $\mathcal O_p(\Delta_n^{ 1/2})$, which is achieved if Assumptions \ref{As: H}(r) and Assumption \ref{As: spatial regularity}($\gamma$) hold for $\gamma = 1/2$ and $r<1$. 
Let us now discuss Theorem \ref{T: Rate of convergence for discretized estimator} and  Assumptions \ref{As: spatial regularity}($\gamma$) and \ref{As: H}($r$) in the context of Example  \ref{Ex: CPP in Hilbert space}.
\begin{example}\label{Ex: continued about spatial regularity}  [Example \ref{Ex: CPP in Hilbert space} ctn.]
Let us again assume that $\sigma$ is constant,  write $Q:=\sigma\sigma^*$  and let $J$ be a compound Poisson process.  Since $Q$ is Hilbert-Schmidt, it can be written as an integral operator $\mathcal T_{q}$ corresponding to a kernel $q\in L^2(\mathbb R_+^2)$. The regularity Assumption \ref{As: spatial regularity}($\gamma$) for some $\gamma\in [0,1/2]$ is then guaranteed if for all $M>0$ it is
$$\sup_{r>0}\int_0^{M}\int_0^{M} \frac{(q(r+x,y)-q(x,y))^2}{r^{2\gamma}}dxdy<\infty.$$
This is the case, for instance, if $q$, as a function on $\mathbb R^2_+$  is locally $\gamma$-Hölder continuous. 

The regularity Assumption \ref{As: H}(r) is foremost an Assumption on the jump activity. Indeed,  in our case,  in which the jumps correspond to a compound Poisson process with jump-distribution $F$,  we always have that
$\int_{H\setminus \{0\}} (z\wedge 1)^r F(dz)\leq F(H\setminus \{0\})=1<\infty$ does hold for all $r> 0$,  and hence, the Assumption holds for all $r\in (0,2)$.  In particular, we can choose $w\in [\gamma ,\frac 12]$ and $r<2(1-\gamma)$ to derive the rate of convergence in \eqref{eq: Rate of HF estimator under stricter conditions}.
\end{example}

If we assume a slightly stronger Assumption than \ref{As: spatial regularity}($1/2$), which can also be found in Section \ref{Sec: Assumptions for Error bounds},  we can even obtain a stable CLT in the next result, where stable convergence in law is denoted by $\overset{st.}{\longrightarrow}$ \footnote{ Recall that a sequence of random variables $(X_n)_{n\in\mathbb N}$ defined on a probability space $(\Omega, \mathcal F,\mathbb P)$ and with values in a Hilbert space $H$ converges stably in law to a random variable $X$ defined on an extension $(\tilde{\Omega}, \tilde{\mathcal F},\tilde{\mathbb P})$ of $(\Omega, \mathcal F,\mathbb P)$ with values in $H$, if for all bounded continuous $f:H\to \mathbb R$ and all bounded random variables $Y$ on $(\Omega,\mathcal F)$ we have
$\mathbb E[Y f(X_n)]\to\tilde {\mathbb E}[Y f(X)]$ as $n\to\infty$,
where $\tilde {\mathbb E}$ denotes the expectation w.r.t. $\tilde{\mathbb P}$.}.
\begin{theorem}\label{T: CLT for truncated estimator}
  Let Assumption \ref{as: CLT} hold.
  Then Assumption \ref{As: spatial regularity}($1/2$) holds. Moreover, let Assumptions \ref{As: H}(r)  hold for  $r<1$. Then for $w\in [1/(2-r),1/2]$ we have  for every $t\geq 0$ that
  $$\sqrt n \left(\Delta_n^{-2}\mathcal T_{\hat q^{n,-}_t}-[X^C,X^C]_t\right)\overset{st.}{\longrightarrow} \mathcal N(0,\mathfrak Q_t),$$
    where $\mathcal  N(0,\mathfrak Q_t)$ is for each $t\geq 0$ a Gaussian random variable in $L_{\text{HS}}(L^2(\mathbb R_+))$ defined on a very good filtered extension\footnote{See Section 2.4.1 in \cite{JacodProtter2012} for the definition of very good filtered extensions.} $(\tilde{\Omega},\tilde{\mathcal F},\tilde{\mathcal F}_t,\tilde {\mathbb P})$ of $(\Omega,\mathcal F,\mathcal F_t, \mathbb P)$ with mean $0$
  and covariance process  $\mathfrak Q_t:L_{\text{HS}}(L^2(\mathbb R_+))\to  L_{\text{HS}}(L^2(\mathbb R_+))$ given as
  $$\mathfrak Q_t K =\int_0^t \Sigma_s\left(K+K^*\right) \Sigma_s   ds.$$
  Here $\Sigma_t:=\sigma_t\sigma_t^*=\partial_t [X,X]_t$ is the squared volatility operator.
%$$\int_0^t \int_0^{\infty}q_s^c(\cdot,y) \cdot(y)dy(\int_0^{\infty}q_s^c(\cdot,y) \cdot(y)dy\int_0^{\infty} a(\cdot, s)\cdot(s)ds+ \int_0^{\infty} a(s,\cdot)\cdot(s)ds)\int_0^{\infty}q_s^c(\cdot,y) \cdot(y)dy ds$$
%$$Q A f (t)=QAQ f(t)+QA^*Q f(t)$$
%$$QAQ f(\cdot)=QA \int_0^{\infty} q(\cdot,y) f(s)ds (\cdot)=\int_0^{\infty}\int_0^{\infty}\int_0^{\infty} q(\cdot,x) a(x,y) q(y,s)  dxdy f(s) ds $$
%so it has kernel
%$$qaq(t,s)=\int_0^{\infty}\int_0^{\infty} q(t,x) a(x,y) q(y,s)  dxdy+\int_0^{\infty}\int_0^{\infty} q(\cdot,x) a(x,y) q(y,s)  dxdy$$
\end{theorem}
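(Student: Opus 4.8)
The plan is to reduce the assertion, in three steps, to the stable central limit theorem for the \emph{continuous} semigroup-adjusted realized covariation established in \cite{Benth2022} and \cite{BSV2022}. By Remark \ref{rem: semigroup adjustments are difference returns},
\begin{align*}
\Delta_n^{-2}\mathcal T_{\hat q^{n,-}_t}=\Pi_{n,\infty}\Bigl(\textstyle\sum_{i=1}^{\ul}(\tilde\Delta_i^n f)^{\otimes 2}\,\indicator_{g_n(d_i^n/\Delta_n)\leq u_n}\Bigr)\Pi_{n,\infty},
\end{align*}
with $\Pi_{n,\infty}$ as in \eqref{Orthonomal Projection on Indicators} for $M=\infty$, $\tilde\Delta_i^n f=\int_{(i-1)\Delta_n}^{i\Delta_n}\mathcal S(i\Delta_n-s)\,dX_s$, and — by \eqref{def: g} together with the identity $\|d_i^n/\Delta_n\|_{l^2}=\|\Pi_{n,\infty}\tilde\Delta_i^n f\|$ from Remark \ref{rem: semigroup adjustments are difference returns} — the indicator sandwiched between events of the form $\{\|\Pi_{n,\infty}\tilde\Delta_i^n f\|\le\kappa u_n\}$ with fixed $\kappa>0$. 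I would therefore work with the truncated semigroup-adjusted realized covariation throughout and reinsert the projections only at the end.

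\textbf{Step 1 (eliminating the jumps).} Splitting $X=X^C+J$, I would first show that, under the jump-activity Assumption \ref{As: H}(r) with $r<1$ and for $w$ in the stated range, the displayed truncated statistic agrees, up to a term that is $o_p(n^{-1/2})$ in Hilbert--Schmidt norm, with the \emph{untruncated} continuous semigroup-adjusted realized covariation $SARCV^{n,C}_t:=\sum_{i=1}^{\ul}(\tilde\Delta_i^{n}f^{C})^{\otimes 2}$, where $\tilde\Delta_i^{n}f^{C}=\int_{(i-1)\Delta_n}^{i\Delta_n}\mathcal S(i\Delta_n-s)\,dX^C_s$. This is the Hilbert-space counterpart of the Mancini--Jacod truncation analysis of \cite{Mancini2009} and \cite{Jacod2008}: on $[0,t]$ there are only finitely many ``large'' jumps, each of which is discarded by the indicator with probability tending to one (so that at most finitely many summands, hence an $o_p(n^{-1/2})$ contribution, are affected), while the aggregate effect of the jumps below the threshold together with that of the continuous increments that are erroneously truncated is of order $u_n^{2-r}=\Delta_n^{(2-r)w}$, which is $o(\Delta_n^{1/2})=o(n^{-1/2})$ for the admissible $w$; the required increment moment bounds for infinite-dimensional It{\^o} semimartingales are those of \cite{Schroers2024}. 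It thus suffices to establish the CLT for $\sqrt n\,(\Pi_{n,\infty}SARCV^{n,C}_t\Pi_{n,\infty}-[X^C,X^C]_t)$.

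\textbf{Step 2 (eliminating the projections).} Next I would control the discretization error stemming from $\Pi_{n,\infty}$ via
\begin{align*}
\Pi_{n,\infty}SARCV^{n,C}_t\Pi_{n,\infty}-[X^C,X^C]_t
&=\Pi_{n,\infty}\bigl(SARCV^{n,C}_t-[X^C,X^C]_t\bigr)\Pi_{n,\infty}\\
&\quad+\bigl(\Pi_{n,\infty}[X^C,X^C]_t\Pi_{n,\infty}-[X^C,X^C]_t\bigr).
\end{align*}
The second, bias-type term is governed by the spatial regularity of the kernel of $[X^C,X^C]_t$: Assumption \ref{As: spatial regularity}($1/2$) alone only gives $\mathcal O_p(\Delta_n^{1/2})=\mathcal O_p(n^{-1/2})$ here, which is exactly why the \emph{strictly stronger} Assumption \ref{as: CLT} is imposed, under which this term is $o_p(n^{-1/2})$ in Hilbert--Schmidt norm. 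For the first term one combines the consistency rate of Theorem \ref{T: Rate of convergence for discretized estimator} (with $\gamma=1/2$) with the same projection estimate applied to the fluctuation $SARCV^{n,C}_t-[X^C,X^C]_t$, using that $\Pi_{n,\infty}\to\mathrm{Id}$ strongly and that $\|\Pi_{n,\infty}\mathcal T\Pi_{n,\infty}-\mathcal T\|_{\HS}\to0$ for every fixed $\mathcal T\in L_{\HS}(L^2(\R_+))$; this identifies $\sqrt n\,(\Pi_{n,\infty}SARCV^{n,C}_t\Pi_{n,\infty}-[X^C,X^C]_t)$ with $\sqrt n\,(SARCV^{n,C}_t-[X^C,X^C]_t)$ up to $o_p(1)$ and shows that the limit is unaffected by the projections.

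\textbf{Step 3 (the continuous CLT) and the main obstacle.} Under the It{\^o}-semimartingale structure for $\sigma$ built into Assumption \ref{as: CLT}, the stable CLT for the semigroup-adjusted realized covariation of a continuous mild It{\^o} process from \cite{Benth2022} and \cite{BSV2022} — a consequence of the Hilbert-space martingale CLT in \cite{JacodProtter2012} — yields $\sqrt n\,(SARCV^{n,C}_t-[X^C,X^C]_t)\overset{st.}{\longrightarrow}\mathcal N(0,\mathfrak Q_t)$ on a very good filtered extension, with $\mathcal F$-conditional covariance $\mathfrak Q_tK=\int_0^t\Sigma_s(K+K^*)\Sigma_s\,ds$ for $K\in L_{\HS}(L^2(\R_+))$; the symmetrization $K+K^*$ and the product $\Sigma_s\cdot\Sigma_s$ are the operator-valued counterpart of the classical asymptotic variance $\int_0^t2\sigma_s^4\,ds$. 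Chaining Steps 1--3 gives the theorem. The delicate point is Step 2: Hölder-$1/2$ regularity of the kernel of $[X^C,X^C]$ produces a projection error of exactly the CLT order $n^{-1/2}$, so the strengthening in Assumption \ref{as: CLT} is genuinely needed, and one must verify that both the bias term and the projected fluctuation are $o_p(n^{-1/2})$; the latter is subtle because $\Pi_{n,\infty}\to\mathrm{Id}$ only strongly (not in operator norm), so stable convergence cannot be transported through it by a naive continuous-mapping argument and must instead be pushed through using tightness of the prelimit fluctuations in $L_{\HS}(L^2(\R_+))$ together with the fact that their limit is a fixed Hilbert--Schmidt-operator-valued Gaussian.
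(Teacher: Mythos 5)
Your overall architecture is the same as the paper's: the paper also reduces the statement to the stable CLT for the (truncated) semigroup-adjusted realized covariation from the companion work (\cite{Schroers2024}, building on \cite{Benth2022}, \cite{BSV2022}), treats separately the discrepancy caused by evaluating the truncation functional on the discretized increments $\Pi_{n,\infty}\tilde\Delta_i^n f$ rather than on $\tilde\Delta_i^n f$ (its Lemma \ref{L: Bridging Lemma}, which plays the role of your Step 1 sandwich argument), and isolates the projection bias $\int_0^t\bigl(\Pi_{n,\infty}\Sigma_s\Pi_{n,\infty}-\Sigma_s\bigr)ds$ as the term for which Assumption \ref{as: CLT} is genuinely needed. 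You have correctly located the crux of the proof.

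There is, however, one genuine gap: in Step 2 you assert that under Assumption \ref{as: CLT} the bias term is $o_p(n^{-1/2})$, but you supply no argument, and this is precisely the nontrivial content of the paper's proof. The direct estimate available from the spatial-regularity machinery (Theorem \ref{T: Uniform spatial error bound}) only yields
\begin{equation*}
\left\|\Pi_{n,\infty}\Sigma_s\Pi_{n,\infty}-\Sigma_s\right\|_{\text{HS}}\leq 2\|\sigma_s\|_{\text{op}}\,\Delta_n^{1/2}\sup_{r\leq\Delta_n}\frac{\|(\mathcal S(r)-I)\sigma_s\|_{\text{HS}}}{r^{1/2}},
\end{equation*}
i.e.\ $\mathcal O_p(\Delta_n^{1/2})$ — exactly the CLT order, as you note — and Assumption \ref{as: CLT} does not upgrade this bound termwise, because the supremum need not vanish as $\Delta_n\to 0$. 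The paper's upgrade to $o_p(\Delta_n^{1/2})$ requires an additional idea: one fixes a smooth, compactly supported orthonormal basis $(e_j)$, splits $(\Pi_{n,\infty}-I)\Sigma_s$ into a finite-rank part $P_N(\cdot)$, which is $\mathcal O(\Delta_n)$ by Lipschitz continuity of the $e_j$, and a tail part $(I-P_N)(\cdot)$, which after a Cauchy--Schwarz step is controlled by $\bigl(\int_0^t\sup_{r}\|(I-\mathcal S(r))\sigma_s\|_{\text{op}}^2/r\,ds\bigr)^{1/2}\bigl(\int_0^t\|\sigma_s^*(I-p_N)\|_{\text{HS}}^2ds\bigr)^{1/2}$ and made uniformly (in $n$) small by letting $N\to\infty$ via dominated convergence. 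Without this two-scale argument (or an equivalent one) your Step 2 does not close. A second, minor, omission: the theorem also asserts that Assumption \ref{as: CLT} implies Assumption \ref{As: spatial regularity}($1/2$); you use this implicitly when you call \ref{as: CLT} ``strictly stronger,'' but it needs the short Hölder/\,$\|AB^*\|_{\text{HS}}\leq\|A\|_{\text{HS}}\|B\|_{\text{op}}$ argument to be recorded. Your closing observation that stable convergence cannot be pushed through $\Pi_{n,\infty}$ by continuous mapping alone, and must be handled via tightness of the prelimit fluctuations, is correct and in fact more explicit than what the paper writes at that point.
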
 
The partial  derivative  $\Sigma_t:=\partial_t [X,X]_t$  has to be interpreted as a Frechet-derivative and does always exists, due to the Assumption on $X$ being an It{\^o} semimartingale (c.f. Section \ref{Sec:  semimartingales in Hilbert spaces}).
Let us derive the form of $\mathfrak Q_t$ in the context of Example   \ref{Ex: CPP in Hilbert space}:
\begin{example}[Example  \ref{Ex: CPP in Hilbert space} ctn.]
in the setting of example \ref{Ex: CPP in Hilbert space} the asymptotic covariance operator $\mathfrak Q_t$ has the form
$$\mathfrak Q_t=t (Q(\cdot+\cdot^*)Q).$$
Equivalently,  $\mathfrak Q_t$ can be interpreted as a kernel operator on $L^2(\mathbb R_+^2)$ with kernel
$$\mathfrak q_t(x,z,w,y):= t\left(q(x,z)q(w,y)+q(x, w)q(z,y)\right).$$
In particular, $\mathfrak q_t(x,z,w,y)$ can be consistently estimated by the plug-in estimator
$\hat{\mathfrak q}_T^n(x,z,q,y):= \Delta_n^{-4}T^{-1}(\hat q_T^{n,-}(x,z)\hat q_T^{n,-}(w,y)+\hat q_T^{n,-}(x,w)\hat q_T^{n,-}(z,y))$.
%$$\mathfrak Q_tB f(x)=tQ(B+B^*)Q f(x)=\int_{\mathbb R_+}\int_{\mathbb R_+} \int_{\mathbb R_+}  q(x,z) q(w,y) (b(w,z)+b(z,w))   f(y)dy dw dz$$
%$$\int_{\mathbb R^2 }\mathfrak q_t(\cdot,   \int_{\mathbb R_+}\left(\int_{\mathbb R_+^2}  q(\cdot,z)b(z,w) q(w,y) dw dz+\int_{\mathbb R_+^2}  q(\cdot,w) b(z,w) q(z,y) dw dz \right) f(y)dy (x)$$
Assumption \ref{as: CLT} holds, for instance, if $q$ is locally $\gamma$-Hölder  continuous for some $\gamma >1/2$ except on finitely many discontinuity points. In particular,  the CLT holds if $q$ is smooth.
\end{example}

So far we have discussed limit theorems for infill asymptotics leaving $T$ fixed. In the next section, we outline how it is possible under additional assumptions and letting $T\to\infty$ to make use of all available data to estimate the stationary instantaneous covariance for difference returns.

\subsubsection{Long-time volatility estimation}\label{Sec: Long time Volatility estimation}
%Our analysis in section ... shows that interpretation of the sum of sqared forward returns as a proxy of a global covariance might be flawed, due to the heteroskedastic behavior of the volatility operator and the different behavior when using varying sampling frequencies.

The truncated estimation procedure described previously enables estimations of a time series of the integrated volatilities $\int_{i}^{i+1}\Sigma_s ds=[X^C,X^C]_i-[X^C,X^C]_{i-1} $ for $i\in \mathbb N$. %, or respectively the quadratic variations $[X,X]_i-[X,X]_{i-1}$ for $i=1,...,T-1$, which might vary with $i$. 
If the aim is to derive a time-invariant mean for the volatility, we have to impose further conditions, which are described in detail in Section \ref{Sec: Assumptions for Long-time estimators}. 
These Assumptions are much stricter than the ones we considered in the previous section for the infill asymptotics on finite intervals and in particular imply that the mean 
$$\mathcal C:=\frac 1 T\mathbb E[[X,X]_T]$$
is independent of $T$.
However, they
 allow us to derive a stationary mean of $\Sigma_t$ via large $T$ asymptotics. 
\begin{theorem}\label{T: Long-time asymptotics for termstructure volatiltiy}   %gamma=1/2, r>0, 1/(4-2r)=w
%p>2(2-r)/(1-r)>4
%r=1/2
%p>(4-3r)/(2-r) =1+2(1-r)/(2-r)>2
Let Assumptions \ref{As: Mean stationarity and ergodicity}, \ref{In proof: Very very Weak localised integrability Assumption on the moments}(p,r) and \ref{In proof II}($\gamma$) hold for some $r\in (0,2)$, $\gamma \in (0,1/2]$
and $p>\max(2/(1-2w),(1-wr)/(2w-rw))$. 
%Moreover, let $c:\mathbb R_+^2\to\mathbb R$ be the integral kernel corresponding to the covariance operator $\mathcal C$.
Then we have as $n,T\to\infty$ that 
$$\Delta_n^{-2}T^{-1}\mathcal T_{\hat q_T^{n,-}}\overset{p}{\longrightarrow}\mathcal C. %\quad \text{ and } \quad\frac{\mathcal T_{\hat q_T^{n,M,-}}}T\overset{p}{\longrightarrow}\Pi_M\mathcal C\Pi_M,
$$ If even $r<2(1-\gamma)$ and $w\in(\gamma/(1-2w),1/2)$ and additionally $p\geq 4$ we have with $a_T= \| [X^C,X^C]/T- \mathcal C\|_{\text{HS}}$ that
$$\left\| \Delta_n^{-2}T^{-1}\mathcal T_{\hat q_T^{n,-}}-\mathcal C\right\|_{L^2(\mathbb R_+^2)}=\mathcal O_p(\Delta_n^{\gamma}+a_T).
%\quad\text{ and } \quad \left\| \frac{\mathcal T_{\hat q_T^{n,M,-}}}T-\Pi_M \mathcal C\Pi_M\right\|_{L^2([0,M]^2)}=\mathcal O_p(\Delta_n^{\gamma}+a_T).
$$
\end{theorem}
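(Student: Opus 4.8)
The plan is to split the total error into an infill/discretization part and a long‑time averaging part,
$$\Delta_n^{-2}T^{-1}\mathcal T_{\hat q_T^{n,-}}-\mathcal C \;=\; T^{-1}\Bigl(\Delta_n^{-2}\mathcal T_{\hat q_T^{n,-}}-[X^C,X^C]_T\Bigr)+\Bigl(\tfrac{1}{T}[X^C,X^C]_T-\mathcal C\Bigr),$$
and to bound the two summands separately. The Hilbert–Schmidt norm of the second summand is by definition $a_T$, and Assumption \ref{As: Mean stationarity and ergodicity} is precisely what forces $a_T\to 0$ in probability as $T\to\infty$; this summand therefore produces the additive $a_T$ in the stated rate and vanishes in the consistency statement. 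So the crux is to show that the first summand is $\mathcal O_p(\Delta_n^{\gamma})$ with an $\mathcal O_p$ that is \emph{uniform in $T$}, not merely the fixed‑horizon bound of Theorem \ref{T: Rate of convergence for discretized estimator}.

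To this end I would partition $[0,T]$ into the unit blocks $((i-1)\wedge T,\,i\wedge T]$, $i=1,\dots,\lceil T\rceil$, and write the infill error as $\sum_{i=1}^{\lceil T\rceil}E_i^n$, where
$$E_i^n:=\Delta_n^{-2}\bigl(\mathcal T_{\hat q^{n,-}_{i\wedge T}}-\mathcal T_{\hat q^{n,-}_{(i-1)\wedge T}}\bigr)-\bigl([X^C,X^C]_{i\wedge T}-[X^C,X^C]_{(i-1)\wedge T}\bigr).$$
By the Markov/adaptedness structure of $\hat q^{n,-}$ and of $[X^C,X^C]$, each $E_i^n$ is exactly the object controlled in Theorem \ref{T: Rate of convergence for discretized estimator}, restricted to a time window of length one. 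The point of replacing the finite‑horizon hypotheses \ref{As: H}(r), \ref{As: spatial regularity}($\gamma$) by the long‑time hypotheses \ref{In proof: Very very Weak localised integrability Assumption on the moments}(p,r) and \ref{In proof II}($\gamma$) is that the latter control the relevant characteristics ($\sigma$, $\alpha$, the jump compensator, the spatial‑regularity constant) through localised $p$‑th moments that are uniformly bounded in time. Running the estimates behind Theorem \ref{T: Rate of convergence for discretized estimator} with these moment bounds in place of pathwise suprema then yields $\mathbb E\bigl[\|E_i^n\|_{\text{HS}}\bigr]\leq K\,\Delta_n^{\min(\rho,\gamma)}$ for all $i$ with $K$ independent of $i$ and $T$, and $\mathbb E\bigl[\|E_i^n\|_{\text{HS}}\bigr]\leq K\,\Delta_n^{\gamma}$ under the sharper conditions $r<2(1-\gamma)$, the stated condition on $w$, and $p\geq 4$. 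The lower bound $p>\max(2/(1-2w),(1-wr)/(2w-rw))$ is exactly what makes the drift, truncation‑leakage and residual small‑jump contributions in that per‑block estimate $o(\Delta_n^{\gamma})$, so that only the spatial rate survives.

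Granting this uniform block bound, the triangle inequality gives $\mathbb E\bigl[\|T^{-1}\sum_{i=1}^{\lceil T\rceil}E_i^n\|_{\text{HS}}\bigr]\leq (\lceil T\rceil/T)\,K\,\Delta_n^{\gamma}=\mathcal O(\Delta_n^{\gamma})$ uniformly in $T$, whence by Markov's inequality $T^{-1}(\Delta_n^{-2}\mathcal T_{\hat q_T^{n,-}}-[X^C,X^C]_T)=\mathcal O_p(\Delta_n^{\gamma})$ with $T$‑free constants (and, in the weaker regime, $\mathcal O_p(\Delta_n^{\min(\rho,\gamma)})\to 0$). Combined with $a_T\to 0$ this gives $\Delta_n^{-2}T^{-1}\mathcal T_{\hat q_T^{n,-}}\overset{p}{\longrightarrow}\mathcal C$ as $n,T\to\infty$, and combined with the identity $\|T^{-1}[X^C,X^C]_T-\mathcal C\|_{\text{HS}}=a_T$ it gives the bound $\mathcal O_p(\Delta_n^{\gamma}+a_T)$ in the $L^2(\mathbb R_+^2)\cong L_{\text{HS}}(L^2(\mathbb R_+))$ norm. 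The feasible maturity‑truncated version follows by pre‑ and post‑composing everything with the projection $\Pi_M$ exactly as in Remark \ref{Rem: Reduction of the time to maturity is fine}. The step I expect to be the main obstacle is precisely the uniform‑in‑$T$ per‑block estimate: the $\mathcal O_p$ in \eqref{eq: Rate of HF estimator} is for a fixed horizon and hides a constant built from objects such as $\sup_{s\le T}\|\sigma_s\|$ that may grow with $T$, so one cannot merely quote Theorem \ref{T: Rate of convergence for discretized estimator} block by block — its proof must be redone quantitatively with the stationary localised $p$‑th‑moment control, checking that the exponents line up so each block costs $\mathcal O(\Delta_n^{\gamma})$ in $L^1$, which is where the lower bound on $p$ is consumed.
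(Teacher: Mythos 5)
Your decomposition into the infill error $T^{-1}(\Delta_n^{-2}\mathcal T_{\hat q_T^{n,-}}-[X^C,X^C]_T)$ plus the ergodic error $a_T$, with the infill part controlled by a $T$-uniform per-unit-time $L^1$ bound of order $\Delta_n^{\gamma}$ obtained by rerunning the fixed-horizon estimates under the stationary moment Assumptions \ref{In proof: Very very Weak localised integrability Assumption on the moments} and \ref{In proof II} instead of pathwise localization, is essentially the paper's argument: the paper proves exactly the bound $\mathbb E[\sup_{t\le T}\|SARCV_t^n(u_n,-,n)-\int_0^t\Pi_{n,M}\Sigma_s\Pi_{n,M}ds\|_{\text{HS}}]\le KT\Delta_n^{\gamma}$ with $K$ independent of $T$ (via Theorem \ref{T: Theorems for abstract semigroups}(v), Lemma \ref{L: Bridging Lemma} for the discretized truncation function, and Theorem \ref{T: Uniform spatial error bound} for the spatial projection error), divides by $T$, and invokes Assumption \ref{As: Mean stationarity and ergodicity} for the remaining term. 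You correctly identify the crux — that the $\mathcal O_p$ constants of Theorem \ref{T: Rate of convergence for discretized estimator} cannot simply be quoted blockwise and must be replaced by moment bounds uniform in time — which is precisely where the paper delegates to the companion results under those assumptions.
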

Assumption \ref{As: Mean stationarity and ergodicity} does not impose very strong Assumptions on the dynamics of the volatility and is satisfied by most stochastic volatility models.
To verify this condition for particular models for the infinite-dimensional volatility process one might investigate the vast literature for ergodic properties of Hilbert space-valued processes and, in particular, SPDEs (c.f. \cite[Sec.10]{DPZ2014} or \cite[Sec.16]{PZ2007}). For the existence of invariant measures for term structure models, we further mention \cite{vargiolu1999}, \cite{Tehranchi2005}, \cite{marinelli2010}, \cite{rusinek2010},% \cite{goldys2000}
and \cite{FFRS2020}. Recently, \cite{friesen2022} examined the long-time behavior of infinite-dimensional affine volatility processes.
Here, we only review the validity of the Assumptions employed in Theorem \ref{T: Long-time asymptotics for termstructure volatiltiy} in the context of our running Example \ref{Ex: CPP in Hilbert space}:
\begin{example}[Example  \ref{Ex: CPP in Hilbert space} ctn.]
Once more,  consdier the setting of Example \ref{Ex: CPP in Hilbert space}.  Assumption \ref{As: Mean stationarity and ergodicity} requires stationarity and mean ergodicity  on the continuous part of the quadratic variation, which is trivially fulfilled, since $\mathcal C=[X,X]_T/T=Q$ for all $T>0$. Assumption \ref{In proof: Very very Weak localised integrability Assumption on the moments}(p,r) is valid for all $p>0$ and $r>0$ since all coefficients of the semimartingale $X$ are deterministic and constant
%, since the drift is deterministic and constant over time, while $\mathbb E[\|\sigma_s\|^p_{\text{HS}}]=\|\sigma\|^p_{\text{HS}}$ and $\nu= F$ such that $\mathbb E[\int_{H\setminus\{0\}}\|\gamma_s(z)\|^r F(dz)]=\int_{H\setminus\{0\}}\|z\|^r F(dz)= \mathbb E[\chi_1^r]$, which is, by Assumption, finite for all $r\in [0,2]$.  
Assumption \ref{In proof II}($\gamma$)
holds for $\gamma\in (0,\frac 12]$ under analogous conditions in  as Example \ref{Ex: continued about spatial regularity}.

While $Q$ can be estimated without the long-time regime,  it is simple to find situations when long time asymptotics provide additional information such as for the estimation of HEIDIH models from \cite{Petersson2022}, which is described in \cite{Schroers2024}.  Another example, which is implemented in the simulation study in Section \ref{Sec: Simulation Study} is that $\Sigma_s = x_s Q$ for a positive scalar mean reverting process $x$, which models the changing magnitude of  volatility over time. Then the long time estimator can be used to determine the mean-reversion level of $x$. 
\end{example}

\subsection{ Practical considerations}\label{Sec: Practical Considerations}

In this section,  we discuss some practical complications on the implementations of the estimator. Namely,  we present a data-driven truncation rule and comment on the use of nonparametrically smoothed yield or bond price curve data.

We start with a data driven choice of the truncation function $g_n$ and the tuning parameters $\alpha$ and $w$.

\subsubsection{Truncation in practice}\label{Sec: truncation in practice}

While the asymptotic theory of Section \ref{Sec: term structure models theory} justifies the use of truncated estimators, the choice of the truncation level and functions remains a practical issue. Even in finite dimensions, this can be challenging and we refrain from finding optimal choices. However, we outline how the truncation rule can be reasonably implemented.

%We assume that the jumps are given by a compound Poisson process defined in Example \ref{Ex: CPP in Hilbert space} with jump intensity $\lambda>0$. We assume that we have a rough idea on an upper bound of how many jumps occur on average in the course of a year in the daily bond market data. For instance, we might assume that not more than 10 percent of the incremental data are contaminated by a jump, such that we have $\lambda<25$. This is a decision on how many samples we want to sort out, but it does not tell us, which increments should be discarded. For that, we need a reasonable way to measure the outlingness of increments.

Truncation rules in finite dimensions often necessitate preliminary estimators for the average realized variance in the corresponding interval of interest (c.f. \cite{Mancini2012}, page 418, for an overview of some truncation rules). One sorts
out a large amount of data first, to obtain a preliminary estimator of $[X^C,X^C]_T$. As this can be interpreted as the average covariation of the increments in the interval $[0,T]$, one then chooses truncation levels in terms of multiples of standard deviations as measured by the preliminary estimate.
%In that way, one can gain a rough idea of what are typical increments corresponding to the continuous part of the semimartingale and sort out untypical ones. 
In our infinite-dimensional setting, we mimic this procedure,
but it is harder to distinguish typical increments and outliers as we cannot argue componentwise. 
While the choice $g_n\equiv \|\cdot\|$ leads to consistent estimators in terms of the limit theory developed in Section \ref{Subsec: Truncated Estimation in term structure models}, it is not necessarily a good choice in the context of finite data since the continuous martingale might vary considerably more in one direction %$d_1\in L^2(\mathbb R_+)$ 
than another% $d_2\in L^2(\mathbb R_+)$
. 

We Therefore present a method that is based on a measure of functional outlyingness in the spirit of \cite{ren2017}: Assuming that $\Sigma$ is independent of the driving Wiener process and does not vary too wildly on the interval $[0,T]$, and that no jumps exist, we have approximately that $\Sigma_t \approx \frac 1T\int_0^T \Sigma_s ds$ for $t\in [0,T]$ and $\tilde \Delta_i^n f/\sqrt{\Delta_n}|\Sigma\sim N(0, \frac 1T\int_0^T \Sigma_s ds)$.  
If the largest $d$ eigenvalues $e_1,...,e_d$ of $\int_0^T \Sigma_s ds/T$ account for a large amount of the variation as measured by the summed eigenvalues of $\int_0^T \Sigma_s ds/T$ (e.g. 90 percent), we know that $P_d \tilde \Delta_t^n f$ with $P_d=\sum_{i=1}^d e_i^{\otimes 2}$ is a linearly optimal approximation of $\tilde \Delta_t^n f$ in the $L^2(\mathbb R_+)$-norm.  We can also define
$P_d(\int_0^T \Sigma_s ds/T)^{-1}P_d =\sum_{i=1}^d(1/\lambda_i)e_i^{\otimes 2}$ and define $g^d(h):=\| P_d(\int_0^T \Sigma_s ds/T)^{-1}P_d h\|^2$.
This distance resembles the measure proposed in \cite{ren2017}, however, it is not a valid truncation function, since \eqref{def: g} cannot hold. Further, if a truncation at level $d$ is made, outliers impacting the higher eigenfactors might be overlooked.
  We Therefore propose an adjusted method defining $g(x)=g^d(x)+ \frac{\|(I-P_d)x\|^2}{\sum_{i=d+1}^{\infty} \lambda_i} $. Then, for $p_n:l^2\to L^2(\mathbb R_+)$ given by 
        $p_n x:= \sum_{j=1}^{\infty} x_j \indicator_{[(j-1)\Delta_n,j\Delta_n]}$
        we define the sequence of truncation functions $(g_n)_{n\in \mathbb N}$ via
\begin{equation}\label{Mahalanobis truncation}
            g_n(x)^2:=g(p_n(x))^2:= \sum_{i=1}^d \frac {\langle p_n(x), e_i\rangle}{\lambda_i}^2+ \frac{\sum_{i=d+1}^{\infty} \langle p_n(x), e_i\rangle^2}{\sum_{i=d+1}^{\infty} \lambda_i},
        \end{equation}
      where the index $d$ can be chosen freely as long as $\lambda_{d+1}>0$. E.g. we can choose $d$ such that the first $d$ eigenfactors for $\mathcal C$ explain $90$ \% of the variation. %If there was a jump in $[(i-1)\Delta_n,i\Delta_n]$ that differs from typical increments in size or shape,
       %then $$   \mathbb E\left[g_n\left(\delta_{\cdot\leq \lfloor M/\Delta_n\rfloor}\frac {d^n(i,\cdot)}{\Delta_n}\right)\right]
       %    \approx  \mathbb E\left[g\left(\Pi\tilde \Delta_i^n f\right)\right]$$
      %    should be large. Hence, $g_n$ is a reasonable measure of outlyingness. A similar approach to detecting outliers for functional data was conducted by \cite{ren2017}.
  It is then with $\lambda_i^d:=\lambda_i$ for $i\leq d$ and $\lambda_i^d:=\sum_{j=d+1}^{\infty} \lambda_j$ for $i\geq d$ and with $\Delta_n$ small 
     \begin{align*}
           \mathbb E\left[g_n\left(
          \frac{d^i_n}{\Delta_n}\right)^2|\Sigma\right]
           \approx  \mathbb E\left[g\left(\tilde \Delta_i^n f\right)^2|\Sigma\right]
           %= & \mathbb E\left[ \sum_{i=1}^{\infty}\frac{\langle\tilde \Delta_i^n f,e_i\rangle^2}{\lambda_i^d}|\Sigma\right] 
           \approx %\sum_{i=1}^{\infty} \frac{\langle \Delta_n \frac 1T(\int_0^T \Sigma_s ds) e_i,e_i\rangle}{\lambda_i^d} =
           \Delta_n \left( d+ 1\right).
       \end{align*}
    In practice, we do not know the eigenvalues and eigenfunctions of $\int_0^T \Sigma_s ds/T$ and derive them from a preliminary estimate. 
We suggest a simple truncation procedure in two steps: 
\begin{itemize}
%    \item[(i)] For large $T$ choose a truncation level $u$, such that a large amount, say $0.25$, of the increments is sorted out by $\hat q_t(-)$ (resp. $SARCV_t^n(u,-)$) with the choice $g_n(x)=\|\cdot\|_{l^2}$. That is, $25$ percent of the increments satisfy $g_n\left(%\delta_{\cdot\leq \lfloor M/\Delta_n\rfloor}
%    \frac {d^n(i,\cdot)}{\Delta_n}\right)\approx\|\tilde{\Delta}_i^n f\|_{L^2(\mathbb R_+)}>u$. Then we define the preliminary estimate
 %   \begin{align*}
%\frac {1}{0.75*T}q_T(-)        \approx\mathcal C.   \end{align*}
   \item[(i)] First we have to specify a preliminary estimator which can be found as follows: For fixed $T$ choose a truncation level $u$, such that a large amount, say $0.25$, of the increments is sorted out by $\hat q_T^{n,-}$ with the choice $g_n(x)=\|\cdot\|_{l^2}$. That is, $25$ percent of the increments satisfy $g_n\left(%\delta_{j\leq \lfloor M/\Delta_n\rfloor}
    d_n^i/\Delta_n\right)
    %\approx\|\tilde{\Delta}_i^n f\|_{L^2(\mathbb R_+)}
    >u$. Then we define the preliminary estimate
    \begin{align*}
\rho^*\Delta_n^{-2}T^{-1} \hat q^{n,-}_t       \approx\frac 1T \int_0^Tq_t^cdt,   \end{align*}
where $\rho^*>0$ properly rescales the preliminary estimator  (one rescaling procedure is outlined in the appendix. %supplementary material to this article). %No rescaling leads to lower truncation levels, which can be acceptable.
%The scaling factor can be chosen in many reasonable ways and one is described in Section \ref{Sec: bias adjustment preliminary estimator}. 
\item[(ii)] Now set $g_n$ as in \eqref{Mahalanobis truncation}, $d$ such that the first $d$ eigenvalues of the operator corresponding to the kernel $\rho^*\Delta_n^{-2}T^{-1} \hat q^{n,-}_t$ explain $90\%$ of the variation measured by the sum of eigenvalues of this operator and choose $u_n= l \sqrt{d+1}\Delta_n^{0.49}$ for an $l\in \mathbb N$. E.g. we might take $l=3,4$ or $5$. Observe that for $d$ large enough $g_n(d_n^i/\Delta_n)^2/\Delta_n\approx g^d(p_n(d_n^i/\Delta_n)^2/\Delta_n$ is under the above local normality assumptions approximately $\chi^2$ distributed with $d$ degrees of freedom. Hence, the probability that $g_n(d_n^i/\Delta_n)<u_n$ can be approximated by the cumulative distribution function of a $\chi^2$-distribution with $d$ degrees of freedom. For instance, if $d=4$ we have that $g_n(d_n^i/\Delta_n)<u_n$ with $l=4$ would hold for approximately 98.26\% of the increments.
Then we can implement the estimators of Section \ref{Sec: term structure models theory} with these choices for truncation function and level.
%as the largest real number such that for at least 10 percent of the incremental data we have \textcolor{red}{Better in terms of 3,4,5 times the standard deviations!}
%$$g_n\left(\delta_{\cdot\leq \lfloor M/\Delta_n\rfloor}\frac {d^n(i,\cdot)}{\Delta_n}\right)>u_n.$$
\end{itemize}
Arguably, there can be many other methods for deriving truncation rules, which however have to take into account the infinite dimensionality of the data and deal with the subtlety of functional outliers. The simulation study in Section \ref{Sec: Simulation Study} shows the good performance of our method.  

\subsubsection{Presmoothing bond market data}
It is rarely the case that term structure data are observed in the same resolution in time as in the maturity dimension. For bond market data, points on the discount curve $x\mapsto P_{i\Delta_n}(x)$ are observed irregularly with a lower resolution than daily along the maturity dimension. Additionally, information on the discount curve is sometimes latent as bonds are often coupon-bearing and assumed to be corrupted by market microstructure noise.
To account for these difficulties and in accordance with the classical ``smoothing first, then estimation" procedure for functional data analysis advocated in \cite{Ramsay2005}
we pursue the simple yet effective approach of presmoothing the data. We derive smoothed yield or discount curves, as described, for instance, in \cite{FPY2022}, \cite{liu2021} or \cite{Linton2001}, which allows us to derive approximate zero coupon bond prices for any desired maturity and for which the impact of market microstructure noise is mitigated.
%Such preprocessed  data are also widely considered as primary data inputs  for the calibration of parametric term structure models. 
%From the point of view of functional data analysis this corresponds to the classical approach advocated in \cite{Ramsay2005} to smooth first and then use the derived curves as data inputs for estimation. 
While some theoretical guarantees in terms of asymptotic equivalence of discrete and noisy to perfect curve observation schemes could be derived for certain smoothing techniques and the task of estimating means and covariances of i.i.d. functional data (c.f. \cite{Zhang2007}), in our case they would depend on the respective smoothing technique, the volatility and the semigroup as well as the magnitude of distorting market microstructure noise. A detailed theoretical analysis in that regard is beyond the scope of this article and instead, we showcase the robustness of our approach in the context of sparse, irregular and noisy bond price data within a simulation study in Section \ref{Sec: Simulation Study}.

\section{Simulation study}\label{Sec: Simulation Study}
%\textcolor{red}{Our aim is to focus on effects of sparse and random sampling as well as microstructure noise on estimation errors and feauture selection of the volatilities. For that reason, we choose a framework which can mimic easily potential low-dimensional and infinite-dimensional volatility situations and from which we can sample more or less exactly. For that purpose, and since there is not much work on exact sampling of infinite-dimensional stochastic volatility model, we necessarily trade off some realistic features of forward curves, most prominently their positivity, for the simplicity of the frameworks, which allows us to draw (discretized) trajectories exactly.  }
% As an important application of our theory is the identification of the number of statistically relevant drivers, we examine how reliable the estimator can be used to determine the effective dimensions of X.

In our simulation study we examine the performance of the truncated estimator $\mathcal T_{\hat q^{n,-}_{\cdot}}$ defined in \eqref{truncated estimator} as a measure of the continuous part $[X^C,X^C]$ of the quadratic covariation of the latent driver. As an important application of our theory is the identification of the number of statistically relevant drivers, we also examine how reliable the estimator can be used to determine the effective dimensions of $X^C$.
In this context, we also want to assess the robustness of our estimator concerning three important aspects: First, we need to confirm the robustness of the truncated estimator to jumps. Second,  we study the effect of the  common practice of presmoothing sparse, noisy, and irregular bond price data on the estimator's performance. Moreover, we examine how the routine of projecting these data onto a small finite set of linear factors (c.f. for instance \cite{Litterman1991} or the survey \cite{piazzesi2010}), influences conclusions on the quadratic covariation. % and in particular, the measured number of statistically relevant random drivers.

%We make two comparisons: 
%First, we  compare its performance of $\hat q^{n,-}_t$ when derived from perfectly observed dense data to the case when it is derived via presmoothing of the discount curves from sparse and noisy data as it is often the case in applications. Second we evaluate its performance when the yield data are projected onto the first three eigenfunctions of the covariance of yields, which is a common procedure for term structure data. 

%investigate if $\hat q^{n,-}_t$ indicates the correct number of random drivers needed to describe the variation of the latent driving semimartingale adequately. As a benchmark, we compare these numbers to the ones indicated by the commonly considered  principal component analysis for yield curves. Third, we want to check whether our method correctly detect jumps.

%accuracy, driver number, Jumpscompared to the quadratic variation in the case without jumps and perfectly observed discount curves.

%In our simulation study we want to compare the performance of the truncated estimator $\hat q^{n,-}_t$ in different scenarios considering different sizes and numbers of jumps and the effect of using nonparametrically smoothed discount curvesfrom sparse and noisy data compared to the quadratic variation in the case without jumps and perfectly observed discount curves. 

For that,  
%From the local average-samples we can derive synthetic log-Bond prices 
%$$\log P_{i\Delta_n}(j\Delta_n)= -\sum_{l=1}^{j} F_{i,l}, \qquad i=1,...,100,\quad j=1,...,1000.$$
%To mimic the situation for bond market data, we  sample sparsely and irregularly %where the number of samples locally decreases with the size of the maturity. 
%and consider the prices to be contaminated by i.i.d. noise to account for the existence of market-microstructure noise. Precisely, 
 we simulate log bond prices for some sampling size $m\leq 1000$, and $n=100$ time points, that is,
$$ P_{i,l}:=\log P_{i\Delta_n}(j_{i,l}\Delta_n)+\epsilon_{i,l}, \qquad i=1,...,100,\quad l=1,...,m,$$
where $(\epsilon_{i,j})_{i,j=1,...,100}$ are i.i.d centered Gaussian errors with variance $\sigma_{\epsilon }>0$ and the $j_{i,r}$ are drawn randomly from $\{0,...,1000\}$ without replacement for $r=1,...,m$.
We distinguish two cases: First, as a benchmark, we observe the data densely and without noise such that $\sigma_{\epsilon}=0$ and $m=1000$ and, second, we observe the data with noise $\sigma_{\epsilon}=0.01$ and sparsely with $m=100$.
In this case, the prices  for all maturities $j\Delta_n, j=1,...,1000$ are recovered  by quintic spline smoothing.
%and with that proxies for the discretized forward curve data $F_{i,j},j=1,...,n$, with which we can implement the estimator $\hat q_t^{n,10}(-)$ from section \ref{Sec: Estimation of bond market Volatility}.
The roughness penalty for the smoothing splines is for each date chosen by a Bayesian information criterion and implemented via the \texttt{ss}-function from the \texttt{npreg} package in \texttt{R}.  
Using quintic splines and a Bayesian information criterion induces smooth implied forward curves.% and, hence, yields high smoothness on the estimated volatility term structures.

To analyze the impact of the customary procedure
of projecting the bond price data onto a low-dimensional linear subspace, we conduct our experiments in two scenarios. Scenario 1 in which  we do not project the log bond prices and Scenario 2 in which
we project the log bond prices onto the first three eigenvalues of their covariance $c_{logbond}\equiv\frac 1{100}\sum_{i=2}^{100} (P_{i,\cdot}-P_{i-1,\cdot})(P_{i,\cdot}-P_{i-1,\cdot})'$ before we calculate $\hat q^{n,-}_t$. Indeed, as usual for bond market data, the first three eigenvectors of $c_{logbond}$ explain over 99\% the variation in the log bond prices.

 The log bond prices are derived from simulated instantaneous forward rates
%discrete samples
$F_{i,j}:=\langle \indicator_{[(j-1)\Delta_n, j\Delta_n]}, f_{i\Delta_n}\rangle_{L^2(\mathbb R_+)}$
for $n=100$, $i=1,...,100$ and $j=1,...,1000$
from a forward rate process driven by a semimartingale $X$.
%The dynamics and parameter specifications of $X$ reflect different qualitative properties such as the dimension of the quadratic variation as well as the number and magnitude of jumps. 
Precisely, we define
$X_t= \int_0^t \sqrt{\Sigma_s}dW_s+ J_t$ where
 \begin{align*}
   % \alpha(\cdot)= e^{-a_0 \cdot},\qquad 
   \Sigma_s = x(s) Q_{\alpha},%+Q_0, 
    \qquad J_t=J_t^1+ J_t^2 \text{ where }J_t^i=  \sum_{l=1}^{N^i_t} \chi_l^i.
\end{align*}
Here $x$ is a univariate mean-reverting square root process 
$$dx(t)=1.5\left(0.058- x(t)\right)dt+ 0.05 \sqrt{x(t)}d\beta(t), \quad t\geq 0,\,\,x(0)=0.058$$
%$$dx(t)=\kappa\left(\mu- x(t)\right)dt+ b \sqrt{x(t)}d\beta(t), \quad t\geq 0,\,\,x(0)=\mu$$
and $Q_{a}$ is a covariance operator on $L^2(\mathbb R_+)$ such that the corresponding covariance kernel $q_a$ (s.t. $Q_a=\mathcal T_{q_a}$) restricted to $[0,M]$ is a Gaussian covariance kernel $q_{a}(x,y) \propto \exp(-a(x-y)^2)$ for some $a>0$ and $\|q_{a}\big|_{[0,M]^2}\|_{L^2([0,M]^2)}=1$.   The jumps are specified by two Poisson processes $N^1,N^2$ with intensities $\lambda_1, \lambda_2>0$ and jump distributions $\chi_i^2  \sim N(0,\rho_2 Q_{0.01})$ and $\chi_i^1  \sim N(0,\rho_1K)$ for $\rho_1,\rho_2 \geq 0$ and where $K$ is another covariance operator with kernel $k(x,y)\propto e^{-(x+y)}$ and $\|k\big|_{[0,M]^2}\|=1$. 
% The operator $Q_a$ has infinitely many nonzero eigenvalues and induces an infinite-dimensional term structure model. However, a high parameter $a$ implies more complex eigenstructures of this operator.
%We will compare both a high-dimensional ($a=50$) and a low-dimensional ($a=5$) setting.

We specify the corresponding parameters of this infinite-dimensional model as follows: We choose $a=50$ reflecting a high dimensional setting since the decay rate of the eigenvalues of $Q_a$ is slow (10 eigencomponents are needed to explain 99\% of the variation of $X^C$). %A corresponding low-dimensional setting is discussed in the appendix.
 The mean reversion level $0.058$ of the square root process corresponds to the Hilbert-Schmidt norm of the long-time estimator of volatility derived from bond market data as discussed in the next section.
 %For instance, choosing $a\approx 30 $ we need ... eigenelements to explain 99\% of the variation induced by the covariance $Q_a$ whereas for $a=.1$ we only need one.
  %, $Q_1$ leads to the nonsemimartingality of the continuous component and implies a rough short rate model. (int_0^10 e^{-10 x}dx)^2=(1/10)((1-e^{-10*10})/10)^2
 Jumps corresponding to the first component are considered large and rare outliers reflected by a high $\rho_1=0.0116$ and low $\lambda_1=1$. The second component describes outliers which are more frequent and smaller in norm reflected by a lower $\rho_2=0.0029$ and higher $\lambda_2=4$ but correspond to changes of the shape of the forward curves. Both jump processes are chosen such that their Hilbert-Schmidt norm accounts for approximately 10 \% of the quadratic variation.
 %, that is, $\rho_1/(0.058+\rho_1+4\rho_2)=0.1$ and $4\rho_2/(0.058+\rho_1+4\rho_2)=0.1$. 
 We also consider cases in which no jumps are present (corresponding to the parameter choices $\lambda_1=\lambda_2=0$).

In each considered scenario, we compute the estimator $\Delta_n^{-2}\hat q_1^{n,10,-}$ (as defined in Remark \ref{Rem: Reduction of the time to maturity is fine}) for $q_a\cdot \int_0^1 x(s)ds$.
In the cases in which jumps are present,
we consider the truncated estimator via the data-driven truncation rule of Section \ref{Sec: truncation in practice} with different values of $l=3,4,5$ for the truncation level $u_n=l\sqrt{d+1}\Delta_n^{0.49}$. Here $d$ is chosen  as the smallest value such that the first $d$ eigencomponents account for 90\% of the variation as measured by the preliminary covariance estimator for which we truncate at the 0.75-quantile of the sequence of difference return %$(\delta_{j\leq 1/\Delta_n}\tilde \Delta_{i\Delta_n}d(j\Delta_n))_{j\geq 0}$
curves as measured in their $l^2$ norm.  Models M1 and M2 for which no jumps are present serve as benchmarks for the truncated estimators and no truncation is conducted ($l=\infty$). 

%Moreover, we consider three different levels of sparsity. A very sparse scenario in which we only use $m=10$ data per day, a sparse scenario in which we observe $m=100$ points and a fully observed scenario with $m=1000$ without noise, which serves as our benchmark for the other scenarios. We consider the i.i.d. noise to be centered Gaussian with variance $\sigma_{\epsilon}=0.0058$. 

We assess the performance of the respective estimator in the context of two criteria of which each reflects an important application of our estimator. First, we measure the relative approximation error $rE(\Delta_n^{-2} \hat q_1^{n,10,-}, IV)$ where
for $x\in [(j_1-1)\Delta_n,j_1\Delta_n] $ and $y\in [(j_2-1)\Delta_n,j_2\Delta_n]$ the $\Delta_n$-resolution of the integrated volatility is  $IV(x,y):=\int_0^1x(s)ds \cdot n^2\int_{(j_1-1)\Delta_n}^{j_1\Delta_n}\int_{(j_2-1)\Delta_n}^{j_2\Delta_n}q_{a}(z_1,z_2) dz_1,dz_2$    and
 $$rE(k_1,k_2)
 :
 %=\frac{\|(k_1(j_1\Delta_n,j_2\Delta_n)-k_1(j_1\Delta_n,j_2\Delta_n))_{j_1,j_2=1,...,\lfloor M/1000\rfloor}\|_{Frob}}{\|(k_1(j_1\Delta_n,j_2\Delta_n))_{j_1,j_2=1,...,\lfloor M/1000\rfloor}\|_{Frob}}\approx 
 \frac{\| k_1- k_2 \|_{L^2([0,10]^2)}}{\| k_2 \|_{L^2([0,10]^2)}}\qquad k_1,k_2\in L^2([0,M]^2).$$
%$$rE(k_1,k_2):=\frac{\|(\hat q_1^{n,10}(-)(j_1\Delta_n,j_2\Delta_n)-IV(j_1\Delta_n,j_2\Delta_n))_{j_1,j_2=1,...,\lfloor M/1000\rfloor}\|_{Frob}}{\|(IV(j_1\Delta_n,j_2\Delta_n))_{j_1,j_2=1,...,\lfloor M/1000\rfloor}\|_{Frob}}\approx \frac{\|\hat q_1^{n,10}(-)- IV\big|_{[0,10]} \|_{L^2([0,10]^2)}}{\| IV\big|_{[0,10]} \|_{L^2([0,10]^2)}}.$$
%where $\|\cdot\|_{Frob}$ denotes the Frobenius norm. 
%Second, we report ratio of the number of detected jumps of the number $N_1^1+N_1^2$ of actual jumps
%$$JDR:= \frac{\text{Number of correctly detected jumps}}{\text{True number of jumps}}.$$
Second, we will investigate how reliably the estimator can be used to determine the number of factors needed to explain certain amounts of variation of the latent driving semimartingale. For that, we define 
\begin{equation}\label{Dimensionality measure}
    D_{ C}^{e}(p):=min\left\{d\in \mathbb N:\frac{\sum_{i=1}^d \langle e_i,Ce_i\rangle}{\sum_{i=1}^{\infty} \langle e_i,Ce_i\rangle}>p\right\}\qquad p\in [0,1]
\end{equation} 
for a symmetric positive nuclear operator $ C$ and an orthonormal basis $e=(e_i)_{i\geq 0}\subset L^2([0,M])$. Let  $\hat e=(\hat e_i)_{i\in \mathbb N}$ denote the eigenfunctions of $\mathcal T_{\hat q_1^{n,10,-}}$ ordered by the magnitude of the respective eigenvalues. We  report the numbers $D_C^e$ for
$C=\mathcal T_{\hat q_1^{n,10,-}}$, $e=\hat e$ and $p=0.85$, $0.90$, $0.95$ and $0.99$, which are the numbers of factors needed to  explain respectively $85\%$, $90\%$, $95\%$ and $99\%$ of the variation.
%It reports medians and quartiles for the relative errors defined for
% $IV(x,y):=\int_0^1x(s)^2ds q_{a}(x,y)$
%$$rE=\frac{\|(\hat q_1^{n,10}(-)(j_1\Delta_n,j_2\Delta_n)-IV(j_1\Delta_n,j_2\Delta_n))_{j_1,j_2=1,...,\lfloor M/1000\rfloor}\|_{Frob}}{\|(IV(j_1\Delta_n,j_2\Delta_n))_{j_1,j_2=1,...,\lfloor M/1000\rfloor}\|_{Frob}}\approx \frac{\|\hat q_1^{n,10}(-)- IV\big|_{[0,10]} \|_{L^2([0,10]^2)}}{\| IV\big|_{[0,10]} \|_{L^2([0,10]^2)}}$$
%and for any symmetric positive nuclear operator $\mathcal C$ and a an orthonormal basis $e=(e_i)_{i\geq 0}\subset L^2([0,M])$ we define
%$$d_{ C}^{e}(p):=min\left\{d\in \mathbb N:\frac{\sum_{i=1}^d \langle e_i,Ce_i\rangle)}{\|C\|_{nuc}}>p\right\}$$ for $p=0.85,0.9, 0.95, 0.99$ where $\hat \lambda_i^C$ denote the $i$th leading eigenvalue of the respective estimated operators $\hat \Sigma_1^{n,10}(-)$.$

Table \ref{Tab: Monte Carlo Results1} shows the results of the simulation study based on $500$ Monte-Carlo iterations. 
For Scenario S1, reflecting our proposed fully infinite-dimensional estimation procedure,  the log bond prices are not projected onto a finite-dimensional subspace a priori.
In this scenario, at least if jumps are truncated at a low level ($l=3$), the medians of relative errors are of a comparable magnitude when using either nonparametrically smoothed data (M2,M4) or perfect observations (M1,M3). While some jumps are overlooked by the truncation rule,
the medians of the relative errors in the cases  with jumps (M3,M4)  just moderately increased compared to the respective cases in which no jumps appeared (M1,M2), at least for a low truncation level.  %The jump detection rates indicate that the task of classifying an increment as an outlier performs a bit too conservatively, although, at least for $l=3$, this does not seem to affect the estimation accuracy severly. 
The reported dimensions needed to explain the various levels of explained variation are estimated quite reliably (observe that the true thresholds %as measured by the dimensionality of $Q_{50}$ 
are respectively $5,6,7$ and $10$). In the noisy and irregular settings (M2, M4) the estimators tend to add a dimension compared to the perfectly observed settings in the median but have low interquartile ranges, which contain the correct dimension. %For the 99\% threshold, the noise even corrects for the dimension which was too low in the case of perfect observations. While the latter is arguably a coincidence, 
We conclude that the measurement of dimensions can be conducted accurately under realistic conditions. 

Comparing scenarios S1 and S2 it becomes evident that the customary finite-dimensional projections of log bond prices (S2) affected the estimator's performance significantly. 
All of the medians of relative errors are significantly higher compared to the case in which no projection was conducted, while for the practically important case in which data were smoothed from irregular sparse and noisy observations ($M4$) and jumps were truncated at level $l=3$, the error more than doubled.
For all considered thresholds of explained variation ($85\%$, $90\%$, $95\%$ and $99\%$) the reported dimension is constantly $3$, where we just reported the results for the $99\%$ threshold in the table. This is not surprising, since we started from a three-factor model for the log bond prices, but it demonstrates, that the common practice of projecting price or yield curves onto a few linear factors can disguise statistically important information, despite their high explanatory power for the variation of log bond prices, which is in line with \cite{Crump2022}. %relative errors suggests that the projection of discount curves onto a few linear factors can indeed disguise statistically important information.

%  Moreover, the table reports the medians of the correctly detected jumps in each setting as well as the measured dimensions of the the estimators in terms of the number of eigenfunctions needed to explain respectively 85\%, 90\%, 95\% and 99\% of the variation. 
\begin{table}\caption{The table reports the medians and quartiles (in braces) rounded to the second decimal place for the relative errors and the  number of factors that is necessary to explain respectively 85\%, 90\%, 95\% and 99\% of the variation for which the true thresholds are resp. $5$, $6,$ $7$ and $10$.
M1 is a benchmark case in which no jump took place and log bond prices were observed densely without error ($m=1000$, no jumps). M2 shows results for the case without jumps but for noisy and sparse samples ($m=100$, no jumps). M3 ($m=1000$, with jumps) and M4 ($m=100$, with jumps) report the results for the cases with jumps, for which in M3 the data were observed densely and perfectly and for M4 sparsely and noisy. While for Scenario $S1$ no preliminary projection of the data was conducted, Scenario $S2$ describes the customary case in which log bond prices were projected onto the first three leading eigenvalues of their empirical covariance. }\label{Tab: Monte Carlo Results1}
\resizebox{\textwidth}{!}{
\begin{tabular}{ cc  c  c c c c c c c }
\toprule
& Model& $M1$ & $M2$ & & $M3$&  &  & $M4$&  \\
\cmidrule(lr{1em}){3-3}\cmidrule(lr{1em}){4-4}\cmidrule(lr{1em}){5-7}\cmidrule(lr{1em}){8-10}
%\hline & $a$ & $50$ & $50$ &$50$ & $50$&  $50$& $50$ & $50$&  $50$\\
% &$m$, Jumps & $1000$, no jumps& $100$, no jumps& & $1000$, with jumps&  &  & $100$, with jumps&  \\
 &  trunc. level &  $l=\infty$ & $l=\infty$ & $l=3$ & $l=4$&   $l=5$&  $l=3$ & $l=4$&   $l=5$\\
\midrule
  %Truncation level& $l=\infty$ & $l=\infty$ &$l=3$ & $l=4$&  $l=5$& $l=3$ & $l=4$&  $l=5$\\
  \multirow{6}{*}{S1}  &   $rE(\Delta_n^{-2}\hat q_1^{n,10,-},IV)$    & $.26 (.23,.29)$& $.30 (.26,.35)$ & $.30 (.26,.36)$ &$.37 (.28,.56)$&$.56 (.32,.90)$&  $.35 (.29,.44)$& $.43 (.33,.62)$ &$.63 (.40,.98)$ \\
   % &   JDR   & $0$& $0$& $.67 (.50,.80)$ &$.50 (.30,.67)$&$.40 (.25,.60)$&   $.62 (.44,.80)$&  $.50 (.33,.67)$&$.40 (.25,.60)$  \\
    &   $D_{\hat q_1^{n,-}}^{\hat e}(0.85)$  & $5 (5,5)$& $6 (5,6)$& $5 (5,5)$ &$5 (5,5)$&$5 (5,5)$&$6 (5,6)$& $6 (5,6)$ &$6 (5,6)$ \\
    &  $D_{\hat q_1^{n,-}}^{\hat e}(0.90)$   & $6 (6,6)$& $6 (6,6)$& $6 (6,6)$ &$6 (6,6)$&$6 (6,6)$&  $6 (6,7)$& $6 (6,7)$ &$6 (6,7)$ \\
    &  $D_{\hat q_1^{n,-}}^{\hat e}(0.95)$   & $7 (7,7)$& $8 (7,8)$& $7 (7,7)$ &$7 (7,7)$&$7 (7,7)$&  $8 (7,8)$& $8 (7,8)$ & $8 (7,8)$ \\
    &  $D_{\hat q_1^{n,-}}^{\hat e}(0.99)$   & $9 (9,9)$& $10 (10,11)$& $9 (9,9)$ &$9 (9,9)$&$9 (9,9)$&  $11 (10,11)$& $11 (10,11)$ &$11 (10,11)$\\
      \midrule
   \multirow{3}{*}{S2} &  $rE(\Delta_n^{-2}\hat q_1^{n,10,-},IV)$    & $.83 (.73,.96)$& $.83 (.73,.96)$ & $.82 (.74,.93)$ &$.88 (.78,1.04)$&$.99 (.81,1.21)$&  $.82 (.73,.93)$& $.87 (.78,1.05)$ &$.99 (.81,1.19)$ \\
  % &    JDR   & $0$& $0$& $.67 (.50,.80)$ &$.50 (.30,.67)$&$.40 (.25,.60)$&   $.62 (.44,.80)$&  $.50 (.33,.67)$&$.40 (.25,.60)$  \\
   &    $D_{\hat q_1^{n,-}}^{\hat e}(0.99)$  & $3 (3,3)$& $3 (3,3)$& $3 (3,3)$ &$3 (3,3)$&$3 (3,3)$&$3 (3,3)$& $3 (3,3)$&$3 (3,3)$ \\
  %&    $d_{\hat q_1^n(-)}^{\hat e}(0.90)$   & $6 (6,6)$& $6 (6,6)$& $6 (6,6)$ &$6 (6,6)$&$6 (6,6)$&  $6 (6,7)$& $6 (6,7)$ &$6 (6,7)$ \\
   %&   $d_{\hat q_1^n(-)}^{\hat e}(0.95)$   & $7 (7,7)$& $8 (7,8)$& $7 (7,7)$ &$7 (7,7)$&$7 (7,7)$&  $8 (7,8)$& $8 (7,8)$ & $8 (7,8)$ \\
 %  &   $d_{\hat q_1^n(-)}^{\hat e}(0.99)$   & $9 (9,9)$& $10 (10,11)$& $9 (9,9)$ &$9 (9,9)$&$9 (9,9)$&  $11 (10,11)$& $11 (10,11)$ &$11 (10,11)$\\
      \bottomrule
\end{tabular}}
\end{table}

\section{Empirical analysis of bond market data}\label{Sec: Empirical Study}
 In this section, we apply our theory to bond market data. In particular, we investigate the influence of jumps on the estimators and the dimensions of the integrated volatility, that is, the continuous part of the quadratic covariation, to determine how many random drivers are statistically relevant. 
 
 We consider nonparametrically smoothed yield curve data from \cite{FPY2022}. For constructing smooth curves on each day, the authors of  \cite{FPY2022} use a kernel ridge-regression approach based on the theory of reproducing kernel Hilbert spaces. %Their smoothing approach yields that discount curves are twice weakly differentiable and, hence, implied forward curves are weakly differentiable. 
 We measure time in years and the data are available for approximately $250$ trading days in each year, yielding $n\approx 250$, using a day count convention in trading days, with a daily resolution in the maturity direction, where we consider a maximal time of $M=10$ years to maturity. The data are given as yields, which we first transform to zero coupon bond prices and then derive the difference returns $d_i^n(j)$ for $i=1,...,\ulT$ and $j=1,...,\lfloor M/\Delta_n\rfloor$ by formula \eqref{Difference returns}. 
 We consider data from the first trading day of the year $1990$ ($i=1$) to the last trading day of the year $2022$ ($i=33$). 
 We then derive the estimators $\hat q_{i}^n\big|_{[0,10]},$ and $\hat q_{i}^{n,10,-}$ (as defined in Remark \ref{Rem: Reduction of the time to maturity is fine}) for $i=1,...,33$ and derive the yearwise covariation kernels %the corresponding estimates for $\int_{i-1}^i q_t^c\big|_{[0,10]}dt$ and $ \int_{i-1}^i q_t^C\big|_{[0,10]}dt+\sum_{s\in [i-1,i]} q_s^J\big|_{[0,10]}$ by 
 $$\hat q_{i}^{*,-}:=\Delta_n^{-2}(\hat q_{i}^{n,10,-}-\hat q_{i-1}^{n,10,-})\quad \text{ and }\quad \hat q_{i}^*:=\Delta_n^{-2}(\hat q_{i}^n\big|_{[0,10]}-\hat q_{i-1}^n\big|_{[0,10]})$$  for $i=1,2,...,33$ and $q_{0}^n=0$. 
 The data-driven truncation rule described in Section \ref{Sec: truncation in practice} is applied for a preliminary truncation at the $0.75$-quantile of the sequence of difference return %$(\delta_{j\leq 10/\Delta_n}\tilde \Delta_{i\Delta_n}d(j\Delta_n))_{j\geq 0}$
 curves as measured in their $l^2$ norm and $d$ is chosen as the smallest value such that $d$ eigencomponents explain 90\% of the variation of the preliminary estimator.  Importantly,  the truncation rule is conducted for different $l=3,4,5$ and for each year separately and only takes into account data within the respcetive year.
 We also consider an estimator for a potential long-term volatility given by
 $$\hat q_{long}^*:= \frac 1{33}\sum_{i=1}^{33} q_i^{*,-}.$$
Under the Assumption of Section \ref{Sec: Long time Volatility estimation}, this is an estimator for a stationary volatility kernel. 
The results suggest that quadratic covariations in each year are rather complex in the sense that they exhibit a slow relative eigenvalue decay, unveil a varying shape and magnitude over time, and often differ quite substantially from measured quadratic covariations due to the existence of jumps. 
Subsequently, we provide a thorough discussion of these observations.
A table containing all results of the analysis is contained in the appendix %supplementary material.

 \subsection{Impact of jumps}
  %Table \ref{Tab: Dimensionalities} reports in the columns 10-12 the number of truncated jumps and their relative influence on the quadratic variation via the ratio $\|\hat q_i^{*,-}\|_{L^2}/\|\hat q_i^{*}\|_{L^2}$.

\begin{table}\caption{Columns $2$ to $5$ report the numbers $D_{C}^{\hat e^{*,i}}(p)$ for $C=\mathcal T_{q_i^{*,-}},$ 
defined in \eqref{Dimensionality measure} of linear factors needed in each year to explain $p=85\%, 90\%, 95\%, 99\%$ of the variation of difference returns as measured by the truncated covariation estimators $\hat q_i^{*,-}$ where the truncation rule was conducted with $l=3$ and $\hat e^{*,i}=(\hat e_1^{*,i},\hat e_2^{*,i},...)$ is the basis of eigenfunctions corresponding to the kernel $\hat q_i^*$. Columns $6$ to $9$ report $D_{C}^{\hat e^{long}}(p)$ for $C=\mathcal T_{q_i^{*,-}}$, which explain how many leading eigenvectors of the static estimator $\hat q_{long}^*$ are needed as approximating factors to explain the covariation in all years separately. }
\label{Tab: Dimensionalities} 
\begin{tabular}{ c cccccccc  }
 \toprule
 Year & \multicolumn{4}{c}{  $D_{\mathcal T_{\hat q_i^{*,-}}}^{\hat e^{*,i}}(p)$}& \multicolumn{4}{c}{$D_{\mathcal T_{\hat q_i^{*,-}}}^{\hat e^{long}}(p)$} \\
 \cmidrule(lr{1em}){2-5} \cmidrule(lr{1em}){6-9}
 &$0.85$&$0.90$&$0.95$& $0.99$ &$0.85$&$0.90$&$0.95$& $0.99$ \\
        \midrule  
% 1990      &4& 6  & 9 & 15&$5$&$7$&$10$& $15$ \\
%       1991      &5& 6  & 9 & 15&$5$&$7$&$10$& $15$ \\
%       1992      &5& 6  & 8 & 14&$6$&$7$&$9$& $15$  \\
%      1993      &4& 5  & 7 & 14&$4$&$5$&$9$& $14$ \\
%      1994      &3& 5  & 8 & 14&$3$&$5$&$9$& $15$\\
%      1995      &4& 5  & 8 & 14&$4$&$6$&$9$& $15$ \\
%      1996      &4& 5  & 8 & 13&$4$&$6$&$8$& $15$ \\
 %     1997      &3& 4  & 7 & 13&$3$&$4$&$9$& $14$ \\
%      1998      &5& 6  & 8 & 14&$5$&$6$&$10$& $15$ \\
%      1999      &3& 5  & 8 & 14&$4$&$6$&$10$& $16$ \\
 %     2000      &4& 5  & 8 & 13&$4$&$6$&$10$& $14$ \\
  %    2001      &4& 5  & 8 & 13&$5$&$6$&$9$& $13$ \\
   %   2002      &3& 5  & 7 & 12&$4$&$6$&$8$& $14$\\
    %  2003      &2& 4  & 6 & 10&$2$&$4$&$6$& $12$ \\
  %    2004      &2& 3  & 6 & 11&$2$&$4$&$7$& $13$\\
      2005      &2& 3  & 5 & 11&$2$&$3$&$8$& $12$ \\
      2006      &2& 2  & 4 & 10&$2$&$2$&$7$& $11$\\
      2007      &2& 3  & 6 & 10&$3$&$6$&$10$& $13$\\
 %     2008      & 3& 4  & 7 & 12&$3$&$5$&$9$& $13$ \\
%     2009      &3& 4  & 6 & 11&$4$&$5$&$7$& $13$ \\
 %     2010      &2& 3  & 6 & 11&$3$&$4$&$7$& $13$ \\
   %   2011      &2& 3  & 5 & 10&$2$&$4$&$6$& $12$\\
  %    2012      &1& 2  & 3 & 8 &$2$&$3$&$4$& $10$\\
 %     2013      &2& 2  & 3 & 8 &$2$&$3$&$5$& $10$\\
%      2014      &2& 2  & 4 & 10 &$2$&$3$&$6$& $11$\\
%      2015      & 2& 2  & 3 & 10 &$2$&$2$&$5$& $11$\\
%      2016      & 2& 2  & 4 & 11 &$2$&$2$&$5$& $12$\\
%      2017      &2& 2  & 5 & 11 &$2$&$3$&$6$& $12$\\
%      2018      & 2& 2  & 5 & 12 &$2$&$3$&$7$& $13$\\
 %     2019      &2& 2  & 5 & 11 &$2$&$2$&$7$& $12$\\
 %     2020      &2& 3  & 6 & 12 &$2$&$4$&$8$& $13$\\
   %     2021      &2& 2  & 4 & 10 &$2$&$3$&$6$& $12$\\
  %      2022      &2& 2  & 4 & 8&$2$&$2$&$5$& $11$\\
      \bottomrule
\end{tabular}
\end{table}
%Year       &2000& 2001  & 2002 & 2003&$2004$&$2005$&$2006$& $2007$ & 2008 & 2009 &2010 & 2011\\

 On one hand, jumps that have a moderate impact on the magnitude of the overall quadratic covariation can visually distort the shape of the volatility. Figure \ref{Fig: Volatility surfaces 2005 to 2007} 
 depicts plots of the graphs of the estimated truncated kernels $q_i^{*,-}$ (with the truncation level $l=3$) and nontruncated kernels $q^*_i$ for the years 2005, 2006 and 2007. In 2006,  which is also a year in which the yield curve inverted before the financial crisis in 2007, two jumps %(corresponding to the dates in August when the FED raised short term interest) 
 had a visible impact on the shape of the measured quadratic covariation kernels although they together accounted for less than 3 \% of the magnitude of the quadratic covariation. This is due to a higher emphasis on the variation in difference returns with short maturities where one should note the different scalings in the plots.
 Removing these two jumps leads to a more time-homogeneous shape of the integrated volatility surfaces in the sense of the relation of the variation in the shorter maturities to the variation in higher maturities.
 On the other hand,  jumps influence the magnitude of the quadratic variation.  For instance, 
  nine increments in the year 2020 (Covid-19 outbreak) sorted out by the truncation rule for $l=3$ accounted for more than 50\% of the overall quadratic variation in the data as measured by its norm. The  statistics for jumps in all years can be found in the supplment to this article.  %At the same time, the truncation rule does not necessarily pick outlyiers with the largest impact on the magnitude of the measured variations. This can be seen by the fact that the 2 increments sorted out first in 1994 account for 4\% of the magnitude of the quadratic variation, whereas the increment that was just picked with a stricter truncation ($l=3$) accounts for 11\% of the magnitude of the quadratic variation. 
 Interestingly, our measurements suggest that jumps tend to cluster.
 
% From an economic perspective, the jump occurrences suggest that the years 1994 (bond market crisis), 1997 and 1998 (currency crises), 2001 (dot-com burst), 2006 (right before the global financial crisis), and 2021 (Covid-19 outbreak) are noteworthy, which appear together with (or briefly predate) important economic events. 

\begin{figure}
\begin{subfigure}{.3\linewidth}
  \centering
  \includegraphics[width=\linewidth]{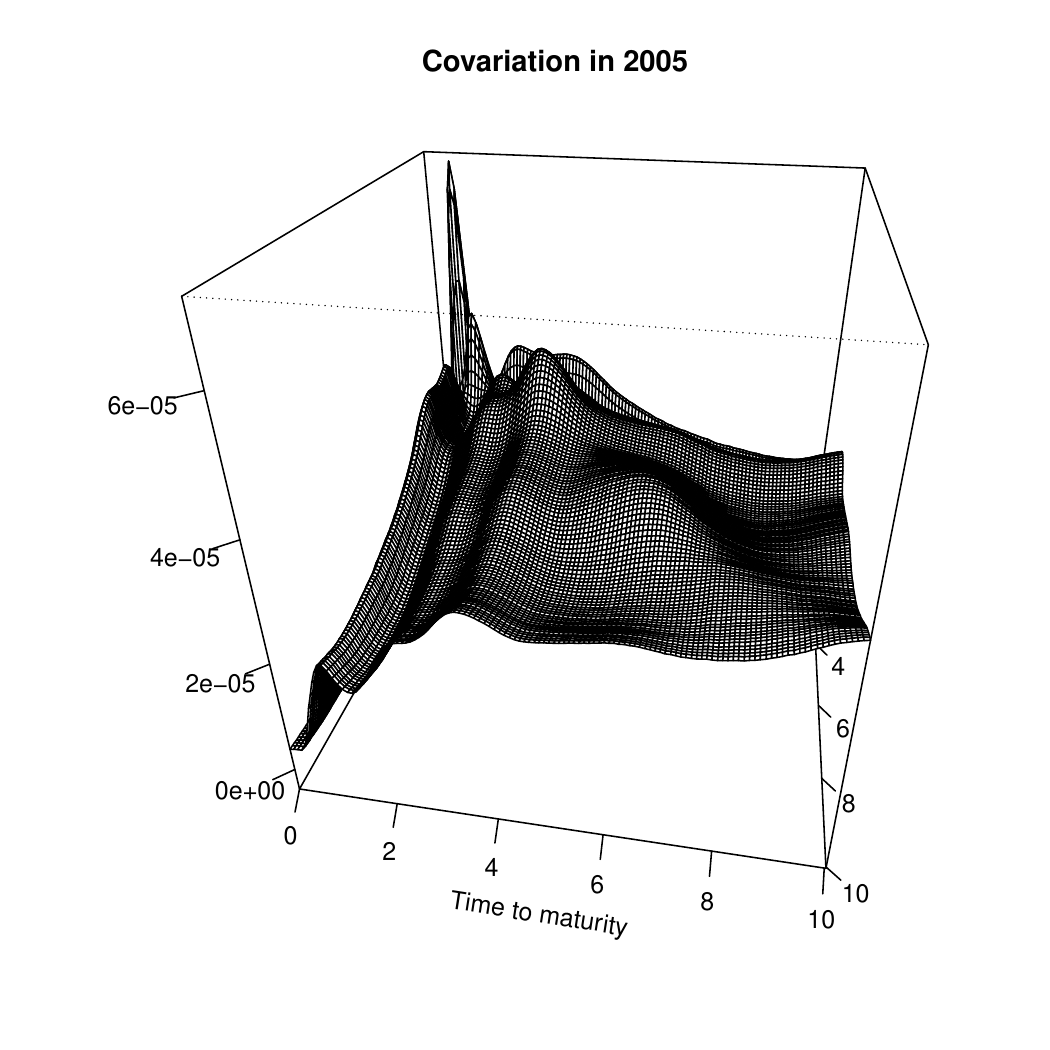}
\end{subfigure}\hfil
\begin{subfigure}{.33\linewidth}
  \centering
  \includegraphics[width=\linewidth]{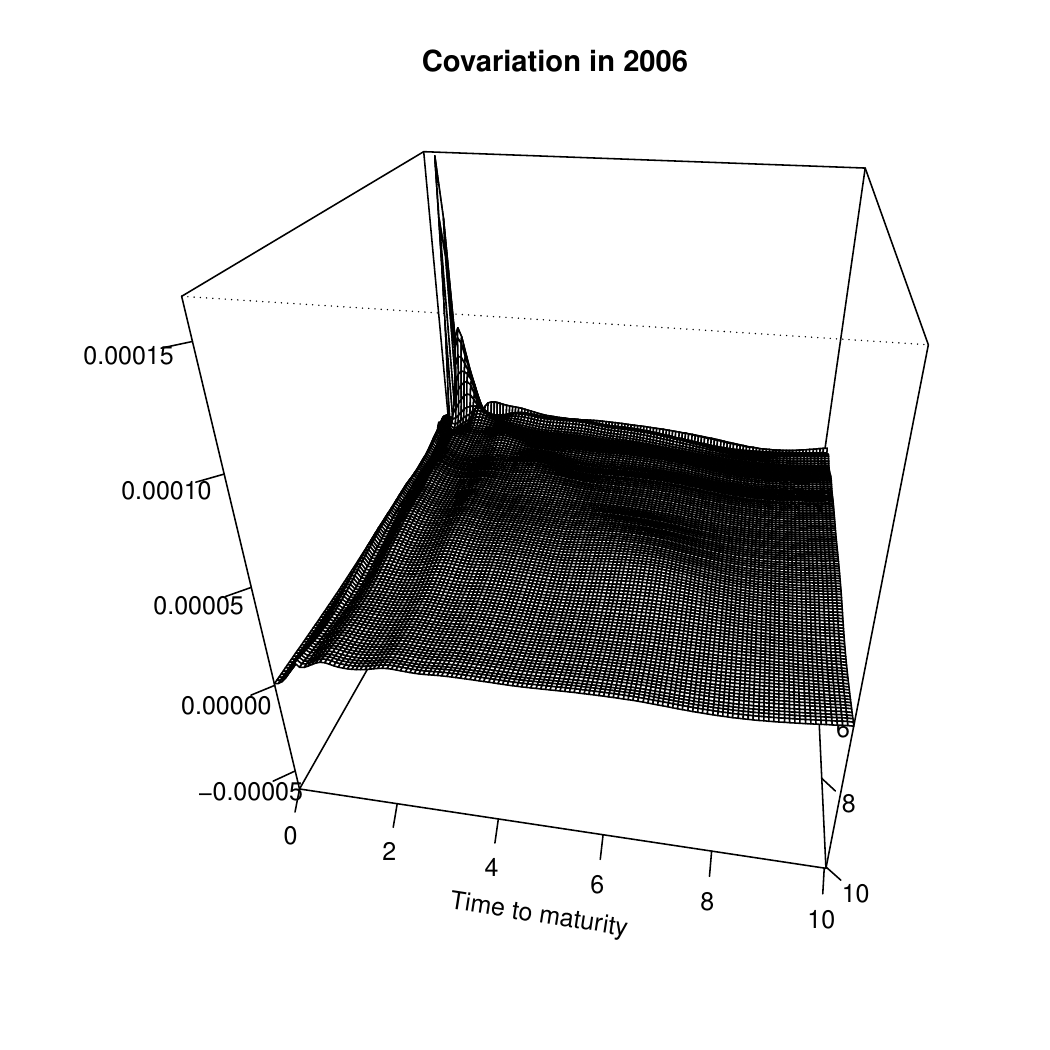}
\end{subfigure}\hfil
\begin{subfigure}{.33\linewidth}
  \centering
  \includegraphics[width=\linewidth]{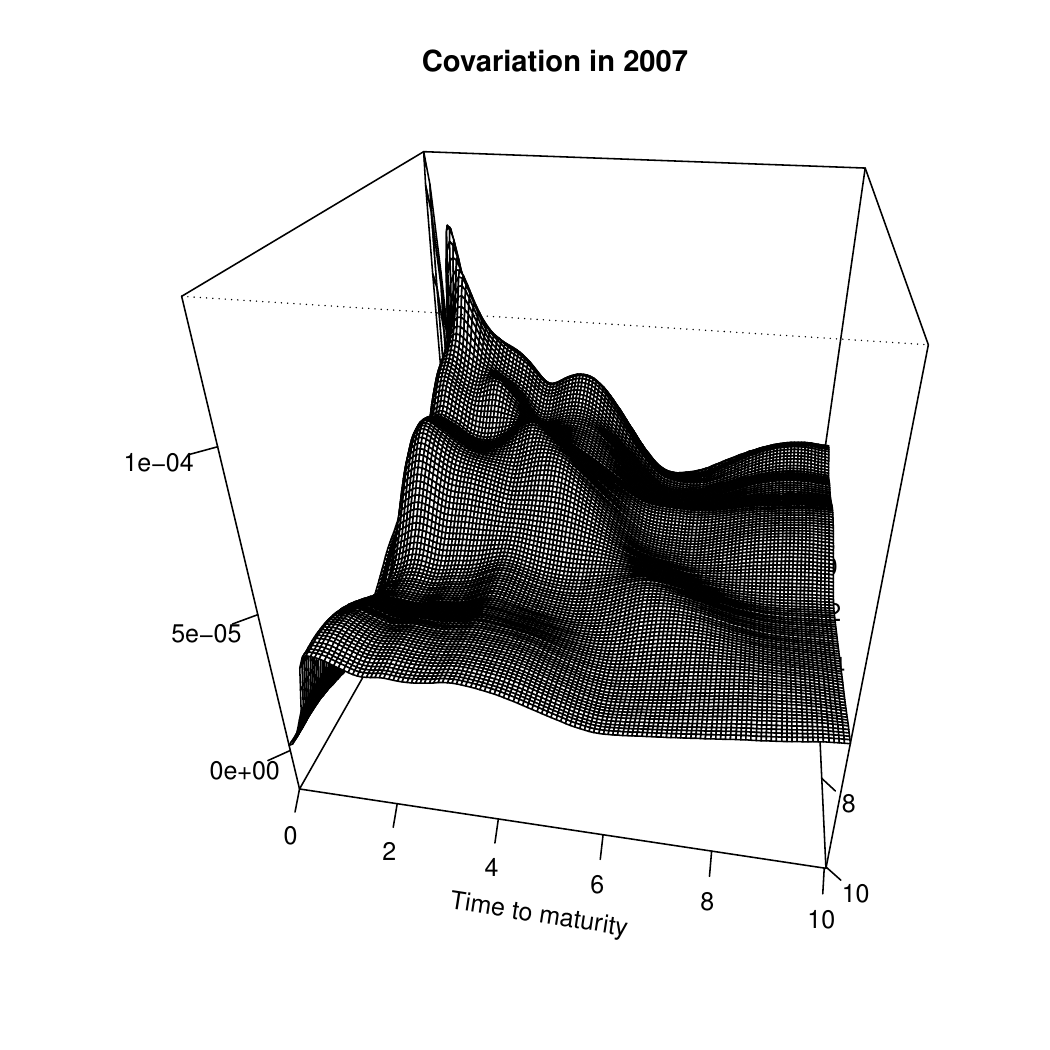}
\end{subfigure}\hfil
\begin{subfigure}{.33\linewidth}
  \centering
  \includegraphics[width=\linewidth]{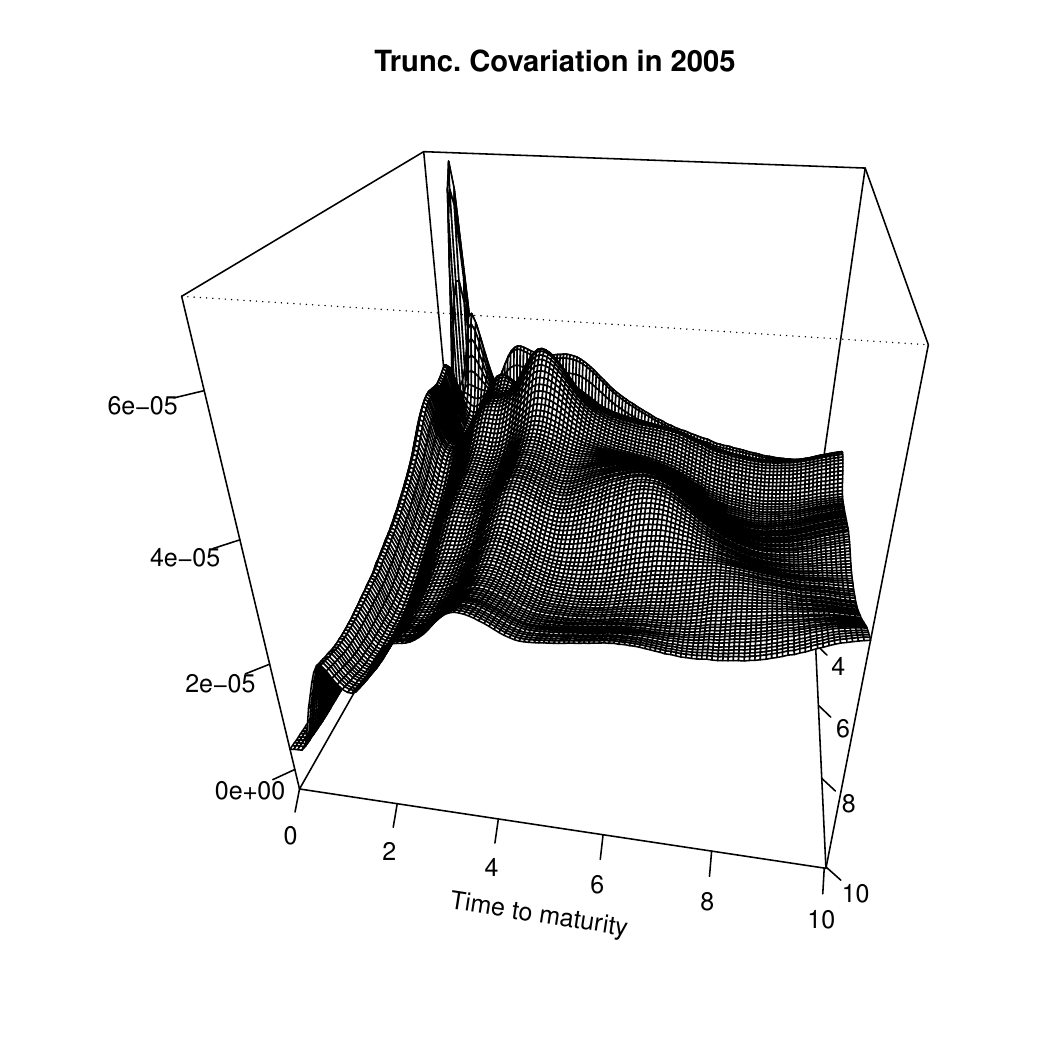}
\end{subfigure}\hfil
\begin{subfigure}{.33\linewidth}
 \centering
  \includegraphics[width=\linewidth]{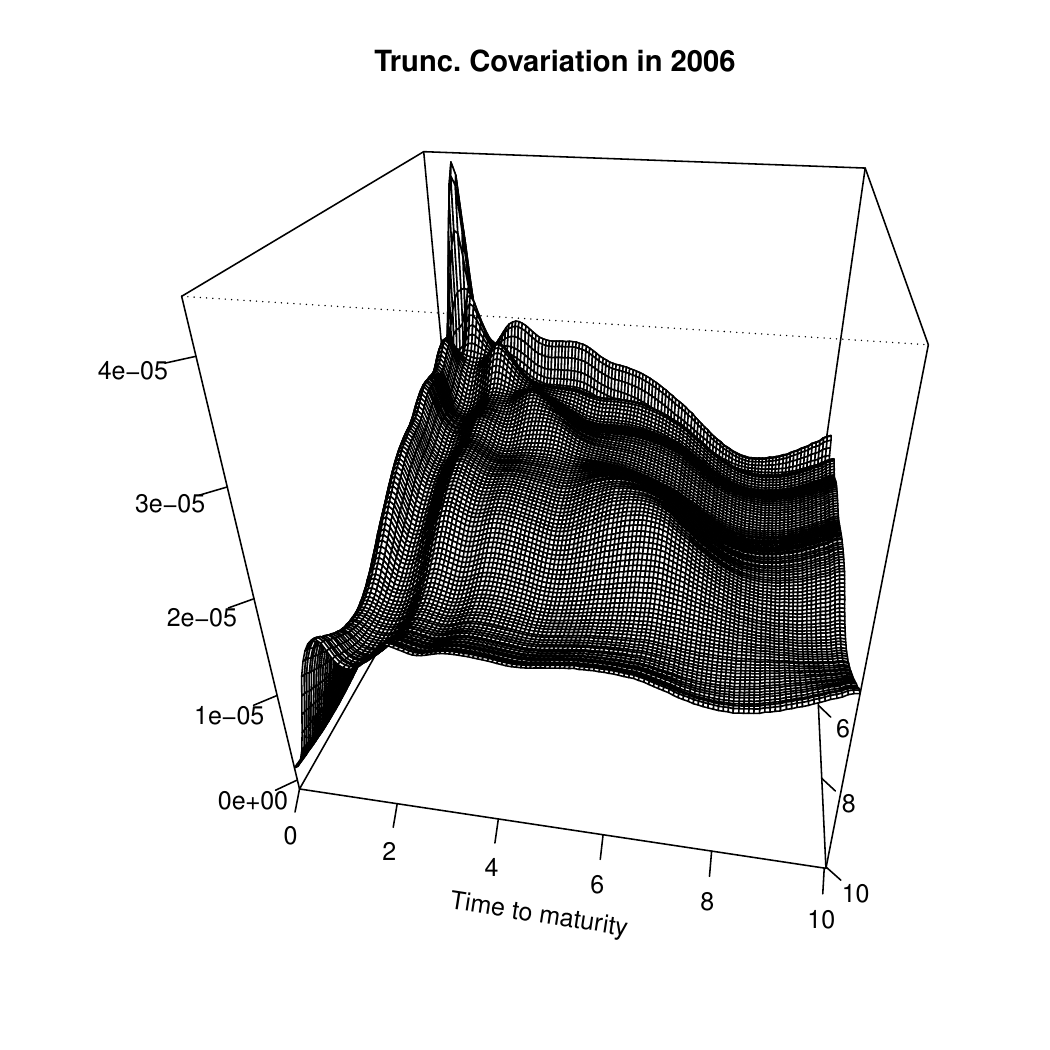}
\end{subfigure}\hfil
\begin{subfigure}{.33\linewidth}
  \centering
  \includegraphics[width=\linewidth]{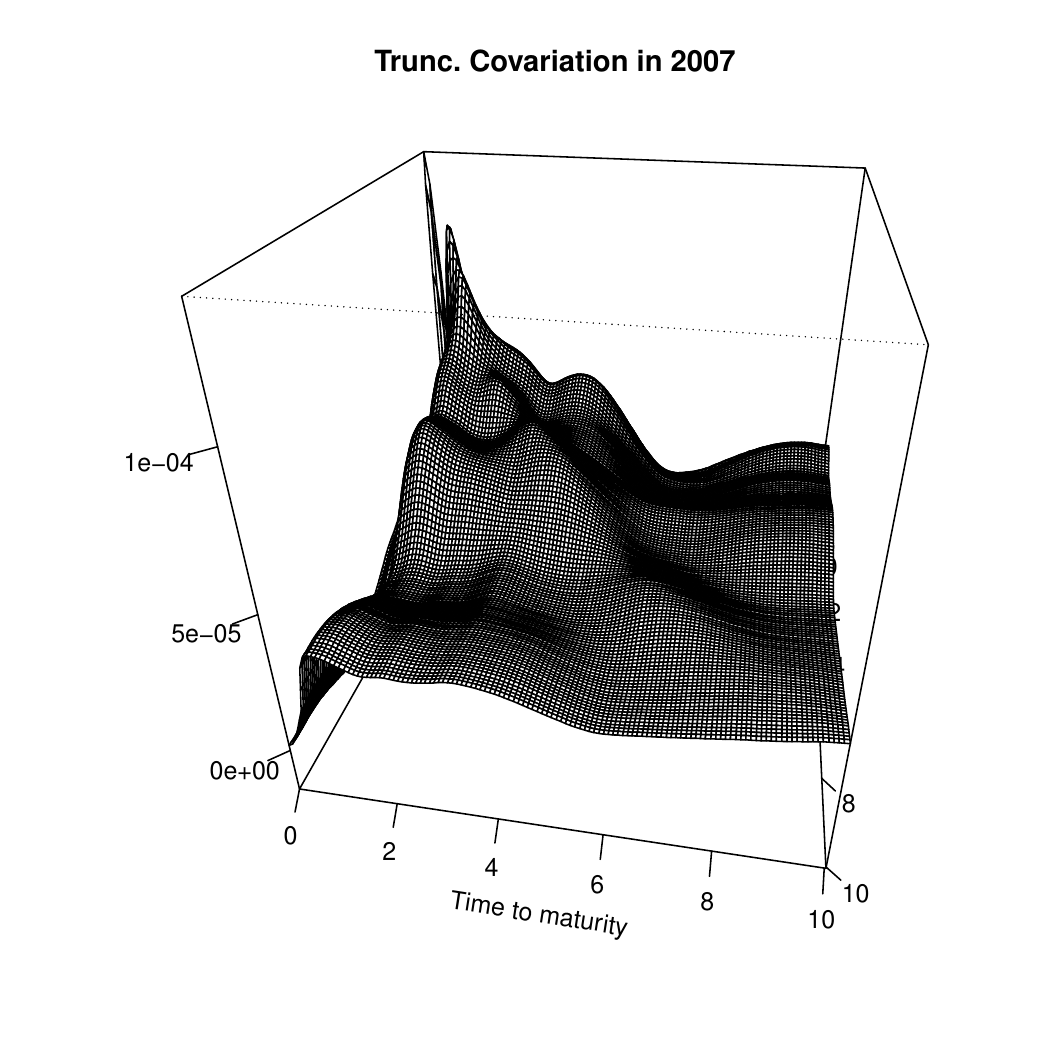}
\end{subfigure}\caption{The upper row shows the graphs of the estimators $q_i^*$ for the years 2006, 2007, and 2008 ($i=17,18,19$) and the lower row their truncated counterparts $q_i^{*,-}$ at a truncation level of $l=3$.}
\end{figure}\label{Fig: Volatility surfaces 2005 to 2007}

\subsection{Dimensionality of the continuous part of the quadratic variations}
%Term structure modeling is often preceded by a preprocessing step which determines the number of random drivers needed to adequately describe the covariation in bond market data. 
%We can reconsider this question here from a slightly different and specific angle. Namely, w
We now examine  the statistically relevant number of random processes that are driving the continuous part of the forward curve dynamics by investigating the dimensionality of the continuous quadratic covariation of the latent driving semimartingale via the estimator $\mathcal T_{\hat q_i^{*,-}}$. %An advantage is that we do not need to make any assumptions neither on the cross-sectional nor the temporal dependence structure of forward curve evolutions.

Table \ref{Tab: Dimensionalities} reports the number of eigenfunctions of $\mathcal T_{\hat q_i^{*,-}}$ that are needed to explain resp. $85\%$, $90\%$, $95\%$, and $99\%$ of the continuous covariation in the years 2005, 2006 and 2007 showing that to explain 99\% of the variation at least 10 factors are needed in each year. The situation looks similar for all other years from 1990-2022,  while the detailed results were relegated to the appendix. %supplementary material.
 %That is, if $\hat \lambda_j^i$ is the $j$'th largest eigenvalue of $\hat q_i(-)$ the columns report the smallest number $d\in \mathbb N$ such that
%$(\sum_{i=1}^d \hat \lambda_j^i)/(\sum_{i=1}^{\infty} \hat \lambda_j^i) \geq q$
%for $q=0.85, 0.9, 0.9, 0.99$.
We find that the complexity of the covariation seems to have decreased over the years, indicating a time-varying pattern of the volatility term structure that goes beyond its overall level. 
It is  noteworthy that in almost every year (30 out of 33), the number of linear factors needed to explain at least 99\% of the truncated variation of the data is at least $10$. 
These dimensions even increase if we employ the static factors $(e_j^{long})_{j\in \mathbb N}$ of eigenvectors of $\mathcal T_{\hat q_{long}^*}$  and do not update them in each year. In that case, in 27 out of 33 years at least 12 factors are needed to explain at least $99\%$ of the variation in each year.

%These values give a lower bound on how  many factors are needed to explain a certain amount of variation in each year separately. However, as the factors could change from year to year, this does still not reveal how many factors are needed to describe the variation in the data over the course of the whole 30 years considered. The linear factors that capture most of the variation in all data are the leading eigenfunctions of the long-term estimators $\hat q_{long}^*$. %Columns six to nine of Table \ref{Tab: Dimensionalities} report the number of these long term factors needed to explain resp. 85\%, 90\%, 95\% and 99\% of the variation in the data in each year separately. 
%If $\hat \Sigma_i (-)$ defines the integral operator corresponding to the kernel $\hat q_i(-)$ and defining the factors $\hat \lambda_j^{i,long}:= \langle \hat e_j^{long}, \hat \Sigma_i(-) \hat e_j^{long}\rangle_{L^2}$, columns six to nine report the smallest natural number $d$ such that
%$(\sum_{i=1}^d \hat \lambda_j^{i, long})/(\sum_{i=1}^{\infty} \hat \lambda_j^i) \geq q$
%for $q=0.85, 0.9, 0.9, 0.99$. Of course, the reported dimensions must be even higher than for eigenfunctions of the operators $\hat q_i(-)$ themselves, which are optimal in this regard. On average around 1.58 additional factors are needed to explain at least 99\% of the variation in each year when we use static factors. 

\subsection{Importance of higher-order factors for short term trading strategies}\label{Sec: Importance of higher order factors}
%Low dimensional factor approximations for term structure models are used to price bond related derivatives or portfolios or measure incicated risk. These factors are usually derived by a principle component analysis of yield curve or log-bond price differences, which typically indicates that a low number of factors is needed to explain most of the variation in the data. That this rational does not carry over to daily difference returns was suggested by the analysis of realized volatilities in the previous section. 
%An important question is if the high-dimensional cross-covariance structures indicated in the previous section are of any economic meaning.
%To find that out,
A natural question is if the higher-order factors 
indicated by the analysis of real bond market data in Section \ref{Sec: Empirical Study}
are of economic significance beyond capturing variation in difference returns. Therefore, we investigate whether the high dimensionality of the continuous quadratic varitations indicated by the estimators $\hat q_i^*$ and $q^*_{long}$ are important for other short term trading strategies than difference returns.
Precisely, Define the daily return $(d_L)_{i}^n(j)$ of the trading strategy of buying an $(j+L)\Delta_n$ bond and shorting an $j\Delta_n$ bond
$$(d_L)_{i}^n(j):= \sum_{l=0}^{L}d_i^n(j+l)=\tilde\Delta^n_{i\Delta_n} \log P((j+L)\Delta_n)-\tilde\Delta_{i\Delta_n}^n \log P(j\Delta_n)$$
where $\tilde\Delta_t^n \log P(x)=\log P_{t+\Delta_n}(x)-\log P_{t+\Delta_n}(x-\Delta_n)$. Evidently, we can derive them as linear functionals of either log bond price returns or difference returns. 

We want to 
determine the adequacy of approximation of  these higher-order difference returns when they are derived either from approximated log price curves, which are projected onto its leading principal components  or when they are derived from difference returns, which are projected onto the leading eigencomponents of the long-term volatility estimator.
For that, we calculate the 
relative mean absolute error ($RMAE$) for a set $\mathcal V=\{i_1 \Delta_n^{val},...,i_{825}^{val}\Delta_n\}$ of dates (where in each year from 1990 to 2022 we randomly draw 25 dates making a total of 825 validation dates). That is,  defining the piecewise constant kernels $(\tilde d_L)_{i}^n=\sum_{j=1}^{\lfloor M/\Delta_n\rfloor}(d_L)_i(j)\indicator_{[j\Delta_n,j\Delta_n)}$
we calculate 
$$RMAE_L(f_1,...,f_d):=\frac 1{825}\sum_{l=1}^{825}\frac{\left\|(\tilde d_L)_{i_l}^n-\mathcal P_{f_1,...,f_d}(\tilde d_L)_{i_l}^n \right\|_{L^2(0,10)}}{\left\|(\tilde d_L)_{i_l}^n\right\|_{L^2(0,10)}}.%=\frac 1{825}\sum_{l=1}^{825}\frac{\sqrt{\int_0^{10}\left((I-P_{f_1,...,f_d})\tilde \Delta_{i_l\Delta} d_L(x) \right)^2 dx}}{\sqrt{\int_0^{10}\left(\tilde \Delta_{i_l\Delta} d_L(x)\right)^2 dx}}
$$
where $\mathcal P_{f_1,...,f_d}:= \sum_{i=1}^d f_i^{\otimes 2}$.
The factors are derived in two different ways. In the first scenario (S1), the factors $f_1,...,f_d$ correspond to principal components of the empirical covariance of log-price differences $\tilde \Delta \log P_{i\Delta_n}$ for $i\notin \mathcal V$ and in the second scenario (S2) the factors $f_1,...,f_d$ correspond to the leading eigenfunctions of the estimated stationary volatility kernel $\hat q_{long}^*$ %where also all increments in $\mathcal V$ were truncated previously 
where as before the truncation of jumps is conductcted yearwise with truncation level $l=3$ according to the truncation procedure described in Section \ref{Sec: truncation in practice}.

%While for $L = 0$ the lagged difference returns correspond to difference returns, for higher $L$ more and more overlapping difference returns are needed to derive a lagged difference return. The more overlapping terms there are, the higher the mechanically induced high local correlation, which was suspected to be responsible for low-dimensional structure of yield or discount curve returns in \cite{Crump2022}.
We compare the results for lags of 7, 30, 90 and 180 days,  since they approximately correspond to the returns of buying a bond and shorting another bond with time to maturity that is resp. a week, a month, a quarter or half a year higher. The $RMAE$s can be found in Table \ref{Tab: MRRISEs}. 
\begin{table}\caption{The table shows the $RMAE$s for different numbers of factors lags and for the two different ways (S1 and S2) in which the factors are derived. In bold are errors for which 99\% of the variation in difference returns resp. the log-bond prices were explained.
}\label{Tab: MRRISEs} 
\resizebox{\textwidth}{!}{
\scriptsize{
\begin{tabular}{ cc cccccccccccccccc}
 \toprule
 \multirow{2}{*}{Lag}& \multirow{2}{*}{Scenario} &  \multicolumn{16}{c}{ $d$} \\
  \cmidrule(lr{1em}){3-18} 
 & &$1$&$2$& $3$   &$ 4$   & $5$ & $6$ & $7$ & $8$ & $9$ & $10$ & 11 & 12 & 13 & 14 & 15 & 16\\
       \midrule  
      \multirow{2}{*}{$Lag = 7$} & S1    & $0.63$ & ${\bf 0.50}$ &$0.44$ & $0.40$  & $0.36$ & $0.32$ & $0.30$&  $0.27$& $0.25$& $0.23$&$0.21$&$0.19$&$0.17$ &$0.16$  &$0.14$&$0.13$\\
         & S2    & $0.62$ & $0.50$ &$0.45$ & $0.40$  & $0.35$ & $0.32$ & $0.28$&  $0.25$& $0.23$& $0.20$&$0.17$&$0.15$&$0.12$ &${\bf 0.10}$ & $0.09$& $0.07$\\
          
      \multirow{2}{*}{$Lag = 30$} & S1    & $0.63$ & ${\bf 0.50}$ &$0.44$ & $0.40$  & $0.35$ & $0.32$ & $0.29$&  $0.27$& $0.24$& $0.22$&$0.20$&$0.18$&$0.17$ &$0.15$  &$0.14$&$0.12$\\
        & S2    & $0.62$ & $0.50$ &$0.45$ & $0.40$  & $0.35$ & $0.31$ & $0.28$&  $0.25$& $0.22$& $0.20$&$0.17$&$0.15$&$0.12$ &${\bf 0.10}$ & $0.08$& $0.07$\\
      
     \multirow{2}{*}{$Lag = 90$}& S1    & $0.62$ & ${\bf 0.49}$ &$0.43$ & $0.38$  & $0.34$ &$ 0.30$ & $0.27$& $ 0.25$& $0.22$& $0.20$&$0.17$&$0.15$&$0.17$ &$0.14$  &$0.12$&$0.12$\\
       & S2    & $0.61$ & $0.49 $&$0.43$ & $0.38$  & $0.33$ & $0.29$ & $0.26$&  $0.23$& $0.19$& $0.17$&$0.16$&$0.13$&$0.11$ &${\bf 0.09}$ & $0.07$& $0.06$\\
      
         \multirow{2}{*}{$Lag = 180$} & S1    & $0.61$ &$ {\bf 0.47}$ &$0.41 $& $0.36$  & $0.31 $& $0.27$ & $0.24$&  $0.21$& $0.18$& $0.15$&$0.13$&$0.11$&$0.09$ &$0.07$  &$0.06$&$0.05$\\
       & S2    & $0.60$ & $0.47$ &$0.41$ & $0.35 $ & $0.30$ & $0.26$ & $0.22$&  $0.19$& $0.15$& $0.13$&$0.13$&$0.10$&$0.08$ &${\bf 0.07 }$& $0.06$& $0.05$\\
      \bottomrule
\end{tabular}}
}
\end{table}
It can be observed that a high number of factors is needed to approximate the lagged difference returns precisely and that approximations based on low factor structures as indicated by the covariance of difference returns imply high approximation errors. While it is not surprising that the approximation gets better if we use more factors, the high discrepancy of the approximation errors is noteworthy. The errors for a typically chosen three factor model based on log price differences (the factors correspond to level, slope and curvature), which explain more than 99,7\% of the variation in log-price returns is for all lags higher than $0.4$, whereas for  the approximation error for 14 factors, which we would need to explain 99\% of the variation in difference returns as measured by $\hat q_{long}^*$ is never higher than $0.11$. 
Interestingly, for all lags, choosing the factors equal to the leading eigenfunctions of the long term volatility $\hat q_{long}^*$ instead of the ones indicated by log-price differences can reduce the error for the higher-order approximations quite significantly and for $d=16$ and for lags not higher than 90 days by almost 50 \%. %This effect to vanishes for for the highest lag of $180$ days, which resembles the aggregating effect of overlapping difference returns.
%This highlights the benefits of the interpretation of volatility as an instantaneous covariance.
Higher-order factors of volatility can, thus, not easily  be ignored and might carry important economic information.

\subsection{Concluding remarks on the empirical study}

We conclude that  the reported dimensions are overall quite high compared to the few factors needed to explain a large amount of the variation in yield and discount curves. This suggests that low-dimensional factor models are not able to capture all statistically relevant codependencies of bond prices. 
Still,  exact magnitudes of explained variations of the higher order components have to be interpreted cautiously and  conditional on the smoothing technique that was employed to derive yield or discount curves. 
%This conditional interpretation is inherent to most arbitrage free term structure models, whether it is the form of nonparametric smoothing or the precise form of the employed parametric model that influences the 
 %Further research on the effect of different smoothing techniques on the measured dimensionalities of difference return covariances are . 
However,  higher-order factors seem to be economically relevant for capturing variations in short term trading strategies as indicated by
 the out-of-sample study of Section \ref{Sec: Importance of higher order factors}.

Underestimation of the number of statistically relevant random drivers can have undesirable effects. For instance, \cite{Crump2022} showcase the potential economic impact on mean-variance optimal portfolio choices and  hedging errors. % In Section \ref{Sec: Importance of higher order factors} %supplementary material to this article,  more evidence for the importance of the higher-order factors in capturing returns of short-term trading strategies was provided.  
At the same time, not every model that is parsimonious in its parameters needs to
entail a low-dimensional factor structure  such as the simple volatility model of Section \ref{Sec: Simulation Study}.
It seems desirable to derive parsimonious models that match the empirical observation of high or infinite-dimensional covariations and reflect the characteristics of their dynamic evolution.

\section*{Acknowledgements}
I would like to thank Dominik Liebl, Fred Espen Benth, Alois Kneip and Andreas Petersson
for helpful comments and discussions. Funding by the 
Argelander program of the University of Bonn is gratefully acknowledged.
%Almut Veraart
\bibliographystyle{plain}
%\bibliography{bibliography}

\begin{appendix}

\section{It{\^o} semimartingales in Hilbert spaces}\label{Sec:  semimartingales in Hilbert spaces}

In this  appendix,  we provide an introduction and technical details for the class of  It{\^o} semimartingales that we consider throughout the paper.

First, we specify the components of the driver $X$  which is an $L^2(\mathbb R_+)$-valued  right-continuous process with left-limits (c{\`a}dl{\`a}g) that can be decomposed as 
$$X_t:= X_t^C+J_t:=(A_t+M_t^C)+J_t\quad t\geq 0.
    $$
Here, $A$ is a continuous process of finite variation, $M^C$ is a continuous martingale and $J$ is another martingale modeling the jumps of $X$. We assume that $X$ is an It{\^o} semimartingale for which the components have integral representations
\begin{equation}
    A_t:=\int_0^t \alpha_s ds,\quad M_t^C:=\int_0^t\sigma_s dW_s, \quad J_t:=\int_0^t\int_{H\setminus \{0\}}\gamma_s(z) (N-\nu)(dz,ds). 
\end{equation}
For the first part $(\alpha_t)_{t\geq 0}$ is an $H$-valued and and almost surely integrable (w.r.t. $\|\cdot\|_{L^2(\mathbb R_+)}$) process that is adapted to the filtration $(\mathcal F_t)_{t\geq 0}$.  
%We remark, that under arbitrage-free dynamics, that is, under an equivalent local martingale measure  the drift is necessarily a deterministic function of $\sigma$  and $\gamma$ (c.f. \cite{bjork1997}), %that is
%\begin{equation}\label{HJM drift condition}
%\alpha_t(x)=\alpha_{HJM}(\sigma_t,\gamma_t)(x)= \sum_{j=1}^{\infty} \sigma_t e_j (x)\int_0^x\sigma_t e_j (y)dy-\int_{H\setminus\{0\}} \gamma_s(z)\left(e^{-\int_0^x \gamma_s(z)(y)dy}-1\right)F(dz)
%\end{equation}
%which was in the continuous case the original inside leading to the popular Heath-Jarrow-Morton framework of \cite{HJMoriginal} for pricing bonds and interest rate sensitive contingent claims. However, this is not of particular importance to us, since the drift vanishes asymptotically in our limit theory.

The volatility process $(\sigma_t)_{t\geq 0}$ is predictable and takes  values in the space of Hilbert-Schmidt operators $L_{\text{HS}}(U,L^2(\mathbb R_+))$ from a separable Hilbert space $U$ into $L^2(\mathbb R_+)$. Moreover, we have  $\mathbb P[\int_0^T\|\sigma_s\|_{\text{HS}}^2ds<\infty]=1$. The space $U$ is left unspecified, as it is just formally the space on which the Wiener process $W$ is defined and does not affect the distribution of $X$.  The  cylindrical Wiener process $W$  is a weakly defined Gaussian process with independent stationary increments and covariance $I_U$, the identity on $U$.  One might consult the standard textbooks \cite{DPZ2014}, \cite{mandrekar2015} or \cite{PZ2007} for the integration theory w.r.t.  $W$.
     %Importantly, as long as the variation of $M^C$ is absolutely continuous w.r.t. the Lebesgue measure, such a representation for $M^c$ can always be found (c.f. Section 2.2.5 in \cite{mandrekar2015}). 
  
    For the jump process $J$,  we define a homogeneous Poisson random measure $N$ on $\mathbb R_+\times H\setminus \{0\}$ and its compensator measure $\nu$ which is of the form $\nu(dz,dt)=F(dz)\otimes dt$ for a $\sigma$-finite measure $F$ on $\mathcal B(H\setminus \{0\})$. The process $\gamma_s(z))_{s\geq 0, z\in H\setminus \{ 0\}}$  is the  $l^2(\mathbb R_+)$-valued jump volatility process and is predictable and stochastically integrable w.r.t. the compensated Poisson random measure $\tilde N:=(N-\nu)$. For a detailed account on stochastic integration w.r.t. compensated Poisson random measures in Hilbert spaces, we refer to \cite{mandrekar2015} or \cite{PZ2007}.
%Summing up,  we have 
%$$X_t=\int_0^t \alpha_s ds+\int_0^t \sigma_sdW_s
%+\int_0^t\int_{H\setminus \{0\}}\gamma_s(z) (N-\nu)(dz,ds)\quad t\geq 0.\notag%+\left(\mathcal S\ast \gamma \indicator_{\|\gamma\|< 1}\right)\star (N-\nu)_t%+\left(\mathcal S\ast \gamma \indicator_{\|\gamma\|\geq  1}\right)\star N_t
 %   $$
   
    Let us now rewrite the quadratic covariation \eqref{Quadratic variation abstract probability limit} of $X$  in terms of the volatility $\sigma$ and the jumps of the process  as 
\begin{align}\label{semimartingale quadratic variation}
    [X,X]_t= [X^C,X^C]_t+[J,J]_t= \int_0^t \Sigma_s ds+ \sum_{s\leq t} (X_s-X_{s-})^{\otimes 2},
\end{align}
where $\Sigma_s = \sigma_s\sigma_s^*$ (where $\sigma_s^*$ is the Hilbert space adjoint) and  
$X_{t-}:= \lim_{s\uparrow t}X_s$ is the left limit of $(X_t)_{t\geq 0}$ at $t$, which is well-defined, since $X_t$ has c{\`a}dl{\`a}g paths. This characterization follows as a special case of Theorem 3.1 in \cite{Schroers2024}

  Let us now reconsider Example \ref{Ex: CPP in Hilbert space}.
  
     \begin{example}[Rewriting an $L^2(\mathbb R_+)$-valued Poisson random measure in compensated form]\label{Ex: Compensate CPP}
In Example \ref{Ex: CPP in Hilbert space} it was remarked  that a compound Poisson process $J_t=\sum_{i=1}^{N_t} \chi_i$ is strictly speaking not a valid choice for the jump process, since it is not a martingale. Here we show that the semimartingale in the example can be easily rewritten to have the desired form:
    For that, define the Poisson random measure
    $N(B,[0,t]):= \#\{i\leq N_t: \chi_i\in B\} $ for $B \in \mathcal B(H\setminus \{0\}), t\geq 0.$
    This has compensator measure $\nu= \lambda dt\otimes F(dz)$, so   we can redefine $J$ in a formally correct manner by
    $J_t=\sum_{i=1}^{N_t} \chi_i-\lambda t \mathbb E[\chi_1]=\int_0^t \int_{L^2(\mathbb R_+)    \setminus \{0\}} z (N-\nu)(dz,ds)$ and set $A_t=(a+\lambda \mathbb E[X_1])t$.  %This means that exchanging  $J_t$ and $\tilde J_t$ in the formula for $f_t$ is valid for our limit theory.
  
%     \item[(i)][Hawkes process] This is  similar to the previous example, but in comparison, we have a time-dependent intensity. That is, 
 %    $$J_t:=\sum_{i=1}^{N_t^H} X_i,$$
%      and $N_t^H$ is a point process such that its intensity satisfies 
%$$\lambda(s)=\mu+\int_0^t \phi(t-s) dN_t^{Hawkes}$$
%for a function $\phi:\mathbb R_+\to\mathbb R_+$ such that $\int_{\mathbb R_+}\phi(s)ds<1$. If we define
%$$M_t^D:= J_t-\int_0^t \lambda(s)ds\mathbb E[X_1],$$
%this is a martingale and, hence, has a Jacod-Grigelionis representtion of the form ...
 %Hawkes processes in infinite-dimensions are hardly found in the literature and as already in the multivariate case they appear in various forms as e.g. with vector valued marks (as in our case) or even with multivariate intensities (see EMBRECHTS). Here we stick to the very simple case of a univariate intensity. 
%\end{itemize}
\end{example}

\section{Technical Assumptions}
This section contains the technical Assumptions that are needed for the validity of Theorems \ref{T: General discrete LLN}, \ref{T: General Limit discretized truncated LLN},  \ref{T: Rate of convergence for discretized estimator},  \ref{T: CLT for truncated estimator} and \ref{T: Long-time asymptotics for termstructure volatiltiy}.  

\subsection{Assumption for derivation of  idenifiability of $[X^C,X^C]$ and $[J,J]$ }\label{Sec: Identifiability assumptions for cont and disc}

To derive asymptotic results for $\hat q^{n,-}_t$ in Theorem \ref{T: General Limit discretized truncated LLN}, we introduce
\begin{assumption}[r]\label{As: H}
    $\alpha$ is locally bounded, $\sigma$ is c{\`a}dl{\`a}g and there is a localizing sequence of stopping times $(\tau_n)_{n\in \mathbb N}$ and for each $n\in \mathbb N$ a real valued function $\Gamma_n:H\setminus \{0\}\to \mathbb R$ such that $\|\gamma_t(z)(\omega)\|\wedge 1\leq \Gamma_n(z)$ whenever $t\leq \tau_n(\omega)$ and $\int_{L^2(\mathbb R_+)\setminus \{0\}}\Gamma_n(z)^rF(dz)<\infty$.
\end{assumption} 
Assumption \ref{As: H} used in Theorem \ref{T: General Limit discretized truncated LLN}  is a direct generalization of Assumption (H-r) in \cite{JacodProtter2012}.  It  implies that for $r<2$,  the jumps of the process are $r$-summable, that is, we have
$$\sum_{s\leq t}\|X_s-X_{s-}\|^{l}<\infty\quad \forall l>r.$$

\subsection{Assumption for derivation of convergence rates}\label{Sec: Assumptions for Error bounds}

For the derivation of convergence rates in Theorem \ref{T: Rate of convergence for discretized estimator}, observe that, since $\Sigma_t=\sigma_t\sigma_t^*$ is for each $t\geq 0$ a Hilbert-Schmidt operator, we can find a process of kernels 
\begin{equation}\label{eq: notation for volatility kernel}
q_t^C,\text{ such that }\Sigma_t= \mathcal T_{q_t^C}\quad \forall t\geq 0.
\end{equation}
It is seems natural to impose H{\"o}lder-regularity assumptions on the volatility kernel $q_t^C$  for $t\geq 0$ to derive the error bounds.
%To find upper bounds for the convergence rate of %the truncated estimator, it seems natural to impose further assumptions in the sense of H{\"o}lder regularity on the volatility kernel $q_t^C$ for every $t\geq 0$.
For instance, one might consider a H{\"o}lder continuous volatility kernel, such that $q_t^C\in C^{\gamma}(\mathbb R_+^2)$ for
$$C^{\gamma}(\mathbb R_+^2):=\left\{q:\mathbb R_+^2\to \mathbb R:  %\|q\|_{C^{\gamma}([0,M]^2)}:=
\sup_{x,y,x',y'\leq M}\frac {|q(x,y)-q(x',y')|}{\|(x,y)-(x',y')\|_{\mathbb R^2}^{\gamma}}<\infty\quad \forall M\geq 0\right\}.$$
%Indeed, we essentially have that if $q_t^C$ has finite $\gamma$-H{\"o}lder norms, which are square-integrable in time  $s$, we obtain a rate of convergence $\mathcal O_p(\Delta_n^{\gamma})$ (c.f. Corollary ... below).
However, we can consider weaker regularity conditions, which do not necessarily assume the kernels to be continuous.
Namely, we require $q_t^C\in \mathfrak F_{\gamma}$ where
\begin{align*}
   \mathfrak F_{\gamma}
:= & \left\{q \in L^2(\mathbb R_+^2): \|q\|^2_{\mathfrak F_{\gamma}(\mathbb R_+^2)}:=\sup_{r>0}\int_{\mathbb R_+^2} \frac{(q(r+x,y)-q(x,y))^2}{r^{2\gamma}}dxdy<\infty\right\}.
\end{align*}
%\begin{remark}\label{rem: Favard relation to the regularity condition}
%    The space $\mathfrak F_{\gamma}$ is related to the $\gamma$-Favard space $\mathfrak F_{\gamma}^{\mathcal S}$ for the semigroup of left shifts on $L^2(\mathbb R_+)$ (c.f. \cite{Engel1999} for a detailed discussion of Favard spaces).
%Recall that if $H$ is an arbitrary separable Hilbert space and $\mathcal S:=(\mathcal S(t))_{t\geq 0}$ a strongly continuous semigroup of linear operators on $H$ which is bounded (i.e. $\sup_{t\geq 0}\|\mathcal S(t)\|_{\text{op}}<\infty$), then the $\gamma$-Favard space corresponding to $\mathcal S$ is defined as 
%$$\mathfrak F_{\gamma}^{\mathcal S}:=\{h\in H: \|h\|_{\mathfrak F_{\gamma}^{\mathcal S}(H)}:=\sup_{r>0}\frac {\|(\mathcal S(r)-I)h\|_H}{r^{\gamma}}<\infty\}.$$
%The operation $\mathfrak S_t A:=\mathcal S(t) A$ for $A\in L_{\text{HS}}(L^2(\mathbb R_+))$ also defines a strongly continuous semigroup on $L_{\text{HS}}(L^2(\mathbb R_+))$ (where $\mathcal S(t) A f=\mathcal S(t)(Af)$ is defined as the operator product). Then, with $\mathcal T_q$ as in \eqref{Hilbert-Schmidt kernel equivalence}, we have $q\in \mathfrak F_{\gamma}$ iff $\mathcal T_q\in \mathfrak F_{\gamma}^{\mathfrak S}$, since
%$
%\|\mathcal T_q\|_{\mathfrak F_{\gamma}^{\mathfrak S}}^2=  
%      \|q\|^2_{\mathfrak F_{\gamma}}.
%   $
%\end{remark}
The classes $\mathfrak F_{\gamma}$ might appear abstract but, in particular, it contains H{\"o}lder spaces, that is,
\begin{equation}\label{Holder implies Favard}
   C^{\gamma}(\mathbb R_+^2)\subset \mathfrak F_{\gamma}.
   \end{equation}
   %and
%for all $M\geq 0$
%$$\|q\|_{\mathcal F_{\gamma}([0,M]^2)}\leq M\|q\|_{C^{\gamma}([0,M]^2)}$$
  %  \end{lemma}0
Vice versa, $ \mathfrak F_{\gamma}$ is not a subset of $ C^{\gamma}$  but it is strictly larger, allowing for discontinuities in volatility kernels:
Let $g(x,y):= \indicator_{[a,b]}(x)\indicator_{[a,b]}(y)$ for an interval $[a,b]\subset \mathbb R_+$. Then clearly, $g$ is not an element of $C^{\frac 12}(\mathbb R_+^2)$ as it is discontinuous.  However, it is %$   \|q\|_{\mathcal F_{1/2}([0,M]^2)}<\infty$ if and only if $\sup_{0<r<a}\int_0^M\int_0^M \frac{(\indicator_{[a,b]}(r+x)-\indicator_{[a,b]}(r+x))\indicator_{[a,b]}(y)}rdxdy<\infty$ and
   $ \|g\|_{\mathfrak F_{1/2}}
   %=  \sup_{r>0}\int_0^{\infty}\int_0^{\infty} \frac{(\indicator_{[a,b]}(r+x)-\indicator_{[a,b]}(r+x))\indicator_{[a,b]}(y)}{r}dxdy
      % =  (b-a)\sup_{r>0}\int_0^{\infty} \frac{\indicator_{[a-r,a]}(x)+\indicator_{[b-r,b]}(x)}{r}dx\\
       =  2(b-a) <\infty.$
   Hence $g \in \mathfrak F_{1/2}$, while $g\notin  \mathfrak F_{\rho}$ for any $\rho>1/2$.
   
   We now state our formal regularity assumption. 
\begin{assumption}\label{As: spatial regularity}[$\gamma$]
    Let $\gamma \in (0,1/2]$. We have $q^C_t\in \mathfrak F_{\gamma}$ $\mathbb P\otimes dt$-almost everywhere and 
    \begin{equation}
        \mathbb P\left[\int_0^T \|q_s^C\|_{\mathfrak F_{\gamma}}ds<\infty\right]=1,\quad T>0.
    \end{equation}
%    Equivalently, using the notation from Remark \ref{rem: Favard relation to the regularity condition} we can formulate this in terms of the volatility operator, for which $\Sigma_t\in \mathfrak F_{\gamma}^{\mathcal S}$ $\mathbb P\otimes dt$-almost everywhere and 
%      \begin{equation}\label{AS spatial regularity in terms of SIGMA}
 %       \mathbb P\left[\int_0^T \|\Sigma_s\|_{\mathfrak F_{\gamma}^{\mathcal S}}ds<\infty\right]=1.
%    \end{equation}
\end{assumption}
   
   \begin{remark}
   Regularity Assumption \ref{As: spatial regularity}  is sharp in Theorem \ref{T: Rate of convergence for discretized estimator} in the sense that for every $\gamma'<\gamma$ we can always specify a squared volatility process $(\Sigma_t)_{t\geq 0}$ such that in probability $\Delta_n^{-\gamma}\sup_{t\in [0,T]}\left\|\mathcal T_{\hat q^{n,-}_t}-[X^C,X^C]\right\|_{\text{HS}}$
diverges  but the process of kernels $(q_t^C)_{t\geq 0}$ fulfills Assumption \ref{As: spatial regularity} for $\gamma'$ and (and not for $\gamma$) 
(c.f. Example 3.6 in \cite{BSV2022}).
\end{remark}
As a result of Theorem \ref{T: Rate of convergence for discretized estimator} and \eqref{Holder implies Favard}, we can derive rates of convergence also under H{\"o}lder regularity assumptions.
\begin{corollary}
%          Let $q_s$ be the integral operator which corresponds to the nuclear operator $\Sigma_s$ for each $s\geq 0$. 
If Assumption \ref{As: H}(r) 
holds for some $r\in (0,2)$ and for all $t\geq 0$ it is $q_t\in C^{\gamma}(\mathbb R_+^2)$ $\mathbb P\otimes dt$-almost everywhere for some $\gamma\in (0,1/2]$, then \eqref{eq: Rate of HF estimator} holds for all $\rho<(2-r)w$ and \eqref{eq: Rate of HF estimator under stricter conditions} holds if  $r<2(1-\gamma)$ and  $w\in [\gamma/(2-r),1/2]$.
          \end{corollary}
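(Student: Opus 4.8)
This statement is an immediate corollary of Theorem~\ref{T: Rate of convergence for discretized estimator}: the two displays \eqref{eq: Rate of HF estimator} and \eqref{eq: Rate of HF estimator under stricter conditions} are verbatim the two conclusions of that theorem, so the only thing to check is that the present hypothesis --- Assumption~\ref{As: H}(r) together with $q_t^C\in C^{\gamma}(\mathbb R_+^2)$ $\mathbb P\otimes dt$-a.e.\ --- implies Assumption~\ref{As: spatial regularity}($\gamma$). (At the boundary value $w=\gamma/(2-r)$ the second conclusion is to be read with the open interval exactly as in Theorem~\ref{T: Rate of convergence for discretized estimator}, i.e.\ as $\mathcal O_p(\Delta_n^{\gamma-\varepsilon})$ for every $\varepsilon>0$.)

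The pointwise part of Assumption~\ref{As: spatial regularity}($\gamma$) is obtained directly: since $\Sigma_t=\sigma_t\sigma_t^\ast$ is Hilbert--Schmidt we always have $q_t^C\in L^2(\mathbb R_+^2)$ with $\|q_t^C\|_{L^2(\mathbb R_+^2)}=\|\Sigma_t\|_{\mathrm{HS}}\le\|\sigma_t\|_{\mathrm{HS}}^2$, and combining this with the $\gamma$-H\"older continuity of $q_t^C$ and the embedding \eqref{Holder implies Favard} yields $q_t^C\in\mathfrak F_{\gamma}$ for $\mathbb P\otimes dt$-a.e.\ $(t,\omega)$.

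For the integrability part $\mathbb P[\int_0^T\|q_s^C\|_{\mathfrak F_{\gamma}}\,ds<\infty]=1$ I would make the embedding \eqref{Holder implies Favard} quantitative. Splitting the supremum defining $\|\cdot\|_{\mathfrak F_{\gamma}}$ at $r=1$, the range $r>1$ contributes at most $4\|q_s^C\|_{L^2(\mathbb R_+^2)}^2\le 4\|\sigma_s\|_{\mathrm{HS}}^4$ to the squared norm (bound the squared increment by twice the sum of the squares), while for $r\le1$ --- working on $[0,M]^2$ as permitted by Remark~\ref{Rem: Reduction of the time to maturity is fine}, where the local and global H\"older seminorms agree --- the contribution is at most $M^2$ times the square of the $\gamma$-H\"older seminorm $[q_s^C]_{C^{\gamma}([0,M]^2)}$. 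Hence $\|q_s^C\|_{\mathfrak F_{\gamma}}\le c_M\big(\|\sigma_s\|_{\mathrm{HS}}^2+[q_s^C]_{C^{\gamma}([0,M]^2)}\big)$. The first summand is $ds$-integrable on $[0,T]$ a.s.\ by the standing assumption $\mathbb P[\int_0^T\|\sigma_s\|_{\mathrm{HS}}^2\,ds<\infty]=1$; for the second one localizes (possible since $\sigma$ is c\`adl\`ag under Assumption~\ref{As: H}(r), hence locally bounded in Hilbert--Schmidt norm) and uses the local $ds$-integrability of $s\mapsto[q_s^C]_{C^{\gamma}([0,M]^2)}$, which is the natural reading of the H\"older hypothesis and the exact analogue of the integrability condition built into Assumption~\ref{As: spatial regularity}($\gamma$). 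This delivers Assumption~\ref{As: spatial regularity}($\gamma$) and the claim follows.

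There is no genuine obstacle here beyond bookkeeping; the one point deserving attention is precisely the $ds$-integrability of $\|q_s^C\|_{\mathfrak F_{\gamma}}$ on $[0,T]$, which does not follow from pointwise H\"older continuity alone and has to be read into the hypothesis (or supplied as a separate integrability requirement on the H\"older seminorm), in parallel with the way Assumption~\ref{As: spatial regularity}($\gamma$) itself is formulated.
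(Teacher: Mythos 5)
Your proposal is correct and follows essentially the same route as the paper, which disposes of the corollary in one line by combining Theorem~\ref{T: Rate of convergence for discretized estimator} with the embedding \eqref{Holder implies Favard} of $C^{\gamma}(\mathbb R_+^2)$ into $\mathfrak F_{\gamma}$. Your additional remark that the $ds$-integrability of $\|q_s^C\|_{\mathfrak F_{\gamma}}$ does not follow from pointwise H\"older continuity alone and must be read into the hypothesis is a fair observation about a point the paper glosses over, but it does not change the argument.
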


For the central limit theorem, we further need 
\begin{assumption}\label{as: CLT}
It is almost surely
  \begin{equation}\label{slightly stronger than favard 1/2 condition}\int_0^T\sup_{r\geq 0}\frac{\|(I-\mathcal S(r))\sigma_s\|_{\text{op}}^2}{r} ds <\infty,\quad T>0.
  \end{equation}
\end{assumption}

\subsection{Assumptions for Long-time estimators}\label{Sec: Assumptions for Long-time estimators}
We introduce 
\begin{assumption}\label{As: Mean stationarity and ergodicity}
    The process $(\Sigma_t)_{t\geq 0}$ is mean stationary and mean ergodic, in the sense that $\mathbb E\left[\|\sigma_s\|^2_{L_{\text{HS}}(U,L^2(\mathbb R_+))}\right]<\infty$ and there is an operator $\mathcal C$ such that for all $t$ it is $\mathcal C=\mathbb E[ \Sigma_t]$   and as $T\to\infty$ we have in probability an w.r.t. the Hilbert-Schmidt norm that 
    \begin{equation}
        \frac 1T \int_0^T\Sigma_s ds=\frac{[X^C,X^C]_T}T\to \mathcal C.
    \end{equation}
\end{assumption}
Under Assumption \ref{As: Mean stationarity and ergodicity} we have that 
$\mathbb E[ (M_t^C)^{\otimes 2}]=\mathbb E[ (\int_0^t \sigma_s dW_s)^{\otimes 2}] = t \mathcal C\quad \forall t\geq 0.  $
Hence, $\mathcal C$ is the covariance of the driving continuous martingale $M^C$ (scaled by time).
Hence, as for regular functional principal component analyzes, we can find approximately a linearly optimal finite-dimensional approximation of the driving martingale, by projecting onto the eigencomponents of $\mathcal C$. Even more, $\mathcal C$ is the instantaneous covariance of the process $f$ in the sense that
$\mathcal C=\lim_{n\to\infty}\mathbb E[(f_{t+\Delta_n}-\mathcal S(\Delta_n)f_t)^{\otimes 2}]/\Delta_n.$
%Observe that thesemigroup-adjusted increment in the formulation above natural to use from an economic point of view, compared to regular increments without the adustment, since it correspond to   price increments of the same contracts (the contract with maturity $x+\Delta_n$ at time $t$ has maturity $x$ at time $t+\Delta_n$).
To estimate $\mathcal C$, we make use of a moment assumption for the coefficients.
\begin{assumption}\label{In proof: Very very Weak localised integrability Assumption on the moments}[p,r]
For $p,r>0$ such that $\mathbb E\left[\|\gamma_s(z)\|^r\right]=\Gamma(z)$ independent of $s$ for all $s\geq 0$ and
there is a constant $A>0$ such that for all $s\geq 0$ it is
$$ \mathbb E\left[\|\alpha_s\|_{L^2(\mathbb R_+)}^p+\|\sigma_s \|_{\text{HS}}^p+\int_{L^2(\mathbb R_+)\setminus \{0\}}\|\gamma_s(z)\|^r\nu(dz)\right] \leq A.$$
\end{assumption}
Moreover, we also make an assumption on the regularity of the volatility.
\begin{assumption}\label{In proof II}[$\gamma$]
With the notation \eqref{eq: notation for volatility kernel} we have for
 $\gamma\in (0,\frac 12]$ that
there is a constant $A>0$ such that for all $s\geq 0$ it is
$$ \mathbb E\left[\|q_s^C\|_{\mathfrak F_{\gamma}}\right] \leq A.$$
\end{assumption}

\section{Proofs of section \ref{Sec: bond market primer}}\label{Sec: Proof of the nonsemimartingality of volterra spot models}

\begin{proof}[Proof of the general nonsemimartingality of models in Example \ref{Ex: VMVP for forward curves}]
We need to prove that 
\begin{align}\label{Volterra form of forward curves}
    f_t=f_0 +\int_0^t k(\cdot+t-s)  d\beta_s
\end{align}
is not a continuous semimartingale where $k\in L^2(\mathbb R_+)$, $\beta$ is a univariate standard Brownian motion.
  Therefore, assume that  $f_t$ defines a semimartingale in $L^2(\mathbb R_+)$ of the form
  $f_t= A_t+M_t $
  for an $H$-valued continuous martingale $M$ and a finite variation process $A$.
  Observe that we also have that $f$ is a weak solution to the stochastic partial differential equation
    $$\frac d{dx} f_t dt+ (e\otimes k) dW_t, \quad t\geq 0,$$
    for a cylindrical Wiener process $W$ such that $\beta= \langle e, W\rangle$. Hence, for an orthonormal basis $(e_j)_{j\in \mathbb N}\subset D(d/dx)$ we find that
    $$\langle f_t, e_j \rangle = \langle f_0,e_j\rangle +\int_0^t \langle f_s, \left(\frac d{dx}\right)^* e_j\rangle ds + \langle k,e_j\rangle \beta_t.$$
    These are a one-dimensional semimartingales for which the first integral is of finite variation and the second part is of quadratic variation. As the decomposition of a continuous semimartingale into a continuous part with finite variation and a continuous martingale (which vanishes at $0$) with quadratic variation is unique up to $\mathbb P\otimes dt$ nullsets, we obtain that $\mathbb P\otimes dt$-almost everywhere
    $$\langle A_t,e_j\rangle= \int_0^t \langle f_s ,\left(\frac d{dx}\right)^*e_j\rangle\qquad \langle M_t,e_j\rangle= \langle k, e_j\rangle \beta_t\quad \forall t\geq 0, j\in \mathbb N.$$
    Therefore, we must have $M_t=\beta_t k$ and we must have $\sum_{i=1}^n \Delta_i^n f^{\otimes 2}=\sum_{i=1}^n \Delta_i^n M_t^{\otimes 2}\to k^{\otimes 2}$ in probability as $n\to \infty$. 
Defining
$$S_t^n := \sqrt n\langle f_t, \indicator_{[0,\Delta_n]}\rangle,$$
we also obtain that in probability
$$\left|\sum_{i=1}^n (\Delta_i^n S^n)^2-n\langle k, \indicator_{[0,\Delta_n]}\rangle^2\right|\leq% \|\sum_{i=1}^n \Delta_i^n f^{\otimes 2}-k^{\otimes 2}\|_{L_{\text{HS}}(L^2(\mathbb R_+)} \|\indicator_{[0,\Delta_n]}\|^2 n=
\|\sum_{i=1}^n \Delta_i^n f^{\otimes 2}-k^{\otimes 2}\|_{L_{\text{HS}}(L^2(\mathbb R_+)}\to 0 $$
and since as $n\to \infty$ it is
$\sqrt n \langle k, \indicator_{[0,\Delta_n]}\rangle%= \sqrt n \int_0^{\Delta_n} y^{H} dy=\frac{\sqrt n \Delta_n^{1+H}}{H+1}
=\Delta_n^{1/2+H}/(H+1)\to 0$
we also find that as $n\to\infty$ and in probability that
$\sum_{i=1}^n (\Delta_i^n S^n)^2\to 0$
must hold.
Moreover, we find, since the kernel $k$ is square integrable and $k(t)=t^H$ on $t\in [0,1]$ that by the Burkholder-Davis-Gundy inequality for $\epsilon>0$
\begin{align*}
   & \mathbb E[(\Delta_i^n S^n)^{2+\epsilon}]^{\frac 2{2+\epsilon}} \\
 %   \leq &  2 n\mathbb E\left[\langle\int_{(i-1)\Delta_n}^{i\Delta_n} k(i\Delta_n+\cdot-s) d\beta_s,\indicator_{[0,\Delta_n]}\rangle)^{2+\epsilon}\right]^{\frac 2{2+\epsilon}}\\
  %  &\qquad+2n\mathbb E\left[\langle\int_0^{(i-1)\Delta_n}k(i\Delta_n+\cdot-s) -k((i-1)\Delta_n+\cdot-s)d\beta_s,\indicator_{[0,\Delta_n]}\rangle)^{2+\epsilon}\right]^{\frac 2{2+\epsilon}}\\
    \leq & 2\int_{(i-1)\Delta_n}^{i\Delta_n} \|k(i\Delta_n+\cdot-s)\|^2 ds\\
    &\qquad+2n\int_0^{(i-1)\Delta_n}\langle k(i\Delta_n+\cdot-s) -k((i-1)\Delta_n+\cdot-s) ,\indicator_{[0,\Delta_n]}\rangle^2 ds\\
%    \leq &2 \int_0^{\Delta_n} \|k(\Delta_n+\cdot-s)\|^2 ds\\
%    &\qquad+2n\int_0^{(i-1)\Delta_n}(\int_0^{\Delta_n}(i\Delta_n+y-s)^H -((i-1)\Delta_n+y-s)^H dy)^2 ds\\
    \leq &2 \|k\|^2\Delta_n
    +2n\int_0^{(i-1)\Delta_n}\left(\int_0^{\Delta_n}(i\Delta_n+y-s)^H -((i-1)\Delta_n+y-s)^H dy\right)^2 ds\\
 %   = & 2\|k\|^2\Delta_n+2n\int_0^{(i-1)\Delta_n}\frac 1{(H+1)^2}\left(((1+i)\Delta_n-s)^{H+1} -2(i\Delta_n-s)^{H+1}+((i-1)\Delta_n-s)^{H+1}\right)^2 ds\\
     \leq  &2 \|k\|^2\Delta_n
    +2n\int_0^{(i-1)\Delta_n}\frac 4{(H+1)^2}\left((i\Delta_n-s)^{H+1}-((i-1)\Delta_n-s)^{H+1}\right)^2 ds.
\end{align*}
Now, using the mean value theorem and since $t^H$ is decreasing in $t$ we find 
\begin{align*}
    \mathbb E[(\Delta_i^n S^n)^{2+\epsilon}]^{\frac 2{2+\epsilon}} 
     \leq  &% 2\|k\|^2\Delta_n+\frac 8{(H+1)^2} n\int_0^{(i-1)\Delta_n}\left(\Delta_n ((i-1)\Delta_n-s)^{H} (H+1)\right)^2 ds\\= &  
    2\|k\|^2\Delta_n
    +8 \Delta_n\int_0^{(i-1)\Delta_n} ((i-1)\Delta_n-s)^{2H} ds\\
 %   = & 2\|k\|^2\Delta_n  +8 \Delta_n ((i-1)\Delta_n)^{2H+1} \frac 1{2H+1}\\
    \leq & 2\|k\|^2\Delta_n
    +8\Delta_n\frac 1{2H+1}.
\end{align*}
This shows in particular, that by Jensen's inequality we have
\begin{align*}
    \mathbb E\left[\left(\sum_{i=1}^n(\Delta_i^n S^n)^2\right)^{\frac {2+\epsilon}2}\right]\leq & n^{\frac {2+\epsilon}2-1}\sum_{i=1}^n\mathbb E[(\Delta_i^n S^n)^{2+\epsilon}]
%    \leq  & n^{\frac {2+\epsilon}2-1} n \left(2\|k\|^2\Delta_n+ 8 \Delta_n\frac 1{2H+1}\right)^{\frac {2+\epsilon}2}\\
     \leq \left(\|k\|^2
    +\frac 8{2H+1} \right)^{\frac {2+\epsilon}2},
\end{align*}
which shows that $\sum_{i=1}^n(\Delta_i^n S^n)^2$ is uniformly integrable. Thus,  convergence in probability must imply convergence of the mean and we must have
$$ \mathbb E[\sum_{i=1}^n(\Delta_i^n S^n)^2]\to 0\quad \text{ as } n\to\infty.$$
However, we can show similarly to the calculations before that using the mean value theorem it is
\begin{align*}
    \mathbb E\left[\sum_{i=1}^n(\Delta_i^n S^n)^2\right]%= & n\mathbb E\left[\langle\int_{(i-1)\Delta_n}^{i\Delta_n} k(i\Delta_n+\cdot-s) d\beta_s,\indicator_{[0,\Delta_n]}\rangle)^{2}\right]\\
    %&\qquad+n\mathbb E\left[\langle\int_0^{(i-1)\Delta_n}k(i\Delta_n+\cdot-s) -k((i-1)\Delta_n+\cdot-s)d\beta_s,\indicator_{[0,\Delta_n]}\rangle)^{2}\right]\\
    \geq  &  n\int_0^{(i-1)\Delta_n}\langle k(i\Delta_n+\cdot-s) -k((i-1)\Delta_n+\cdot-s),\indicator_{[0,\Delta_n]}\rangle^2 ds\\
   %  = &  n\int_0^{(i-1)\Delta_n}\frac 1{(H+1)^2}\left(((1+i)\Delta_n-s)^{H+1} -2(i\Delta_n-s)^{H+1}+((i-1)\Delta_n-s)^{H+1}\right)^2 ds\\
     \geq & 4 \Delta_n\int_0^{(i-1)\Delta_n}((1+i)\Delta_n-s)^{2H}) ds\\
   %  = &  4 \Delta_n((1+i)\Delta_n)^{2H+1}-(2\Delta_n)^{2H+1})  \frac 1{2H+1}\\
     = & \frac 4{2H+1} \Delta_n^{2H+2}((1+i)^{2H+1}-2^{2H+1}). \\
\end{align*}
Thus, writing $K= 4/(2H+1)H^2(H+1)^2$ we find
$$  \mathbb E[\sum_{i=1}^n(\Delta_i^n S^n)^2]\geq K \Delta_n^{2H+2} \sum_{i=1}^n(1+i)^{2H+1}-K \Delta_n^{2H+2} \sum_{i=1}^n 2^{2H+1}.$$
    While the second term is $o(1)$,  for the first term it is 
    $$K \Delta_n^{2H+2} \sum_{i=1}^n(1+i)^{2H+1}%\geq K \Delta_n^{2H+2} \sum_{i=1}^n i^{2H+1}
    \geq K \Delta_n^{2H+2} \int_0^{n+1} x^{2H+1} dx=\frac{K}{2H+2} \left(\frac {n+1}n\right)^{2H+2}\geq \frac{K}{2H+2}.$$
     This cannot hold, since by the uniform integrability of the sequence $\sum_{i=1}^n(\Delta_i^n S^n)^2$ we have that the mean $ \mathbb E[\sum_{i=1}^n(\Delta_i^n S^n)^2]$ must converge to $0$.
\end{proof}

\section{Proofs of Section \ref{Sec: term structure models theory}}
In this Section we prove the results of Section \ref{Sec: term structure models theory}. For that, we first prove an abstract limit theory for general evolution equations in Section \ref{Sec: AbstractLimit theorems}. We then derive the results of Section \ref{Sec: term structure models theory} using this abstract result in Section \ref{Formal proofs for term structure models}
\subsection{An Abstract limit theorem}\label{Sec: AbstractLimit theorems}
The asymptotic theory elaborated in the article follows by an abstract result for abstract evolution equations in Hilbert spaces, which we present and prove in this section.  Roughly speaking, we prove that the results in \cite{Schroers2024} are valid, also when we discretized the functional data also in the cross-section in a particular manner, that we will make precise next. For now let $f$ be a mild solution to a stochastic evolution equation of the for described in \eqref{mild Ito process}.

We also introduce the notation
$$H:=L^2(\mathbb R_+).$$
We do this, because the subsequent Theorem \ref{T: Theorems for abstract semigroups} holds under much more general conditions than for the term structure setting and with this notation it becomes simple to appreciate this generality.  That is,  Theorem \ref{T: Theorems for abstract semigroups} holds for general separable Hilbert spaces $H$,   semigroups $\mathcal S$ and general $H$-valued It{\^o} semimartingale as described in \cite{Schroers2024}. To be consistent with the notation and since we do not want to restate all Assumptions for the abstract case, (they can be found in \cite{Schroers2024} we formally chose to state the theorem and its proofs for the term structure setting only. 

For the cross-sectional discretization we  introduce a sequence of projections $(\Pi_m)_{m\in \mathbb N}$ that coverges strongly to a projection operator $\Pi:H\to H$, which is not necessarily the identity.  In the case of term structure models,  $\Pi_m$ is defined as in \eqref{Orthonomal Projection on Indicators} for which $\Pi f (x)=f(x)\indicator_{[0,M]}(x)$.
%If jumps occur this is still consistently estimating the quadrati c variation of the noise process, which however, is a sum of the integrated volatility and the jumps. To disentangle these two components,
We define the discretized truncated semigroup-adjusted realized covariation as
\begin{equation}
    SARCV_t^n(u_n,-,m):= \sum_{i=1}^{\ul} \Pi_m\tilde{\Delta}_i^n f^{\otimes 2}\indicator_{g_{n}(\Pi_m\tilde{\Delta}_i^n f)\leq  u_n}
\end{equation}
for $m,n\in \mathbb N \cup\{\infty\}$ and a sequence $(u_n)_{n\in \mathbb N}\subset \mathbb R\cup\{\infty\}$ and a sequence of truncation functions $g_{n}: L^2(\mathbb R_+) \to\mathbb R_+$, such that there are constants $c,C>0$ such that for all $f\in H$ we have
\begin{align}\label{abstract g conditions}
    c\|f\|_{H}\leq g_n(f)\leq  C\|f\|_{H}, \quad g_n(f+h)\leq g_n(h)+g_n(f)
\end{align}

%Analogously, we set
%\begin{equation}
%    RV_t^n(u_n,-,m):= \sum_{i=1}^{\ul} \Pi_m \Delta_i^n Y^{\otimes 2}\indicator_{g(\Pi_m\Delta_i^n Y)\leq  u_n}
%\end{equation}
Observe that if $\Pi=I$ is the identity on $H$,  it is  $SARCV(u_n,-,\infty)=SARCV(u_n,-)$ as in the previous section.
 As a consequence of the possibile noncommutativity of the semigroup and the projections $\Pi_m$, the rates of convergence also depends on 
\begin{equation}\label{Abstract Spatial Convergence Rate}
 b_m^T:= \int_0^T \|\Pi\Sigma\Pi-\Pi_m\Sigma_s\Pi_m\|_{\text{HS}}ds.   
\end{equation}
 Here we again use the notation $\Sigma_t=\sigma_t\sigma_t^*$ for $t\geq 0$.
That $b_m^T$ indeed converges to $0$ almost surely as $m\to\infty$ is a Corollary of Proposition 4 and Lemma 5 in \cite{Panaretos2019}.
\begin{theorem}\label{T: Theorems for abstract semigroups}
 \begin{itemize}
     \item[(i)]
     % It is as $n\to \infty$
    %  $$ SARCV_t^n\overset{u.c.p.}{\longrightarrow}\int_0^t \Sigma_s ds+\left( Y_s-Y_{s-}\right)^{\otimes 2} $$
      As  $n,m\to \infty$ and w.r.t. the Hilbert-Schmidt norm it is
    $$ SARCV_t^n(\infty,-,m)\overset{u.c.p.}{\longrightarrow}\Pi[X,X]_t\Pi=\int_0^t \Pi\Sigma_s\Pi ds+\sum_{s\leq t}\left(\Pi X_s-\Pi X_{s-}\right)^{\otimes 2}. $$
    %This holds, in particular, for $m=\infty$, and $\Pi=I$.
    \item[(ii)]  Under Assumption \ref{As: H}(2) and w.r.t. the Hilbert-Schmidt norm and as $n,m\to\infty$ it is
$$SARCV_t^n(u_n,-,m)\overset{u.c.p.}{\longrightarrow}\Pi[X^C,X^C]_t\Pi=\int_0^t \Pi\Sigma_s \Pi ds.$$
\item[(iii)]    %Let the Assumptions of Theorem \ref{T: General Limit discretized truncated LLN} hold, but with Assumption \ref{As: spatial regularity} being fulfilled both by the abstract semigroup $\mathcal S$ as well as the semigroup of left shifts.
Let Assumptions \ref{As: H}(r) hold for some $r\in (0,2)$ and Assumption \ref{As: spatial regularity}($\gamma$) hold for some $\gamma \in (0,1/2]$.
Then it is for each $\rho<(2-r)w$, $T\geq 0$ as $n,m\to \infty$
    $$\sup_{t\in [0,T]}\left\|SARCV_t^n(u_n,-,m)-\Pi[X^C,X^C]_t\Pi\right\|_{text{HS}}=\mathcal O_p(\Delta_n^{\min(\gamma,\rho)}+b_m^T)$$
    In particular, if $r<2(1-\gamma)$ and $w\in [\gamma/(2-r),1/2]$ we have  
       $$\sup_{t\in [0,T]}\left\|SARCV_t^n(u_n,-,m)-\Pi[X^C,X^C]_t\Pi\right\|_{\text{HS}}=\mathcal O_p(\Delta_n^{\gamma}+b_m^T)$$
 %   \item[(iv)]        If Assumptions \ref{As: H}(r) and \ref{As: spatial regularity}($\gamma$) hold for some $r\in (0,1)$, $\gamma \in (1/2,1]$, $w\in [1/(4-2r),1/2]$, it is
 %   Then we have
%    $$\left\|RV_t^n(u_n,-,m_n)-\int_0^t \Sigma_s ds\right\|_{L_{\text{HS}}(H)}=\mathcal O_p(\Delta_n^{\gamma-1/2}+b_n^T)$$
\item[(iv)] Assume that 
\begin{align}\label{Abstract CLT assumption}
 \mathbb P\left[   \int_0^T\sup_{r\geq 0}\frac{\|(I-\mathcal S(r))\sigma_s\|_{\text{op}}^2}{r} ds<\infty\right]=1.
\end{align}
Then Assumption \ref{As: spatial regularity}($1/2$) holds. Let, moreover, Assumption \ref{As: H}(r)  hold for $r<1$,  let $w\in [1/(2-r),1/2]$ and assume that $b_m^T=o_p(\Delta_n^{\frac 12})$. 
Then we have w.r.t. the $\|\cdot \|_{L_{\text{HS}}(H)}$ norm and as $n,m\to\infty$ that
  $$\sqrt n \left( SARCV_t^n(u_n,-,m)_t^n-\Pi[X^C,X^C]_t\Pi\right)\overset{st.}{\longrightarrow} \Pi\mathcal  N(0,\mathfrak Q_t)\Pi,$$
   where $\mathcal  N(0,\mathfrak G_t)$ is for each $t\geq 0$ a Gaussian random variable in $L_{\text{HS}}(H)$ defined on a very good filtered extension $(\tilde{\Omega},\tilde{\mathcal F},\tilde{\mathcal F}_t,\tilde {\mathbb P})$ of $(\Omega,\mathcal F,\mathcal F_t, \mathbb P)$ with mean $0$
  and covariance given for each $t\geq 0$ by a linear operator $\mathfrak Q_t:L_{\text{HS}}(H)\to  L_{\text{HS}}(H)$ such that
  $$\mathfrak Q_t  =\int_0^t \Sigma_s (\cdot+\cdot^*) \Sigma_s ds.$$
    \item[(v)] 
  
   Let Assumption \ref{As: Mean stationarity and ergodicity} hold and  $\mathcal C=\mathbb E[ \Sigma_t]$ denote the global covariance of the continuous driving martingale.
Let furthermore Assumption \ref{In proof: Very very Weak localised integrability Assumption on the moments}(p,r) and \ref{In proof II}($\gamma$) hold (for the abstract semigroup $\mathcal S$) for some $r\in (0,2)$, $\gamma \in (0,1/2]$
and $p>\max(2/(1-2w),(1-wr)/(2w-rw))$.
 Then we have w.r.t. the Hilbert-Schmidt norm that as $n,m,T\to\infty$
$$\frac 1T SARCV_T^{n}(u_n,-,m)\overset{p}{\longrightarrow}\Pi\mathcal C\Pi.$$
If $r<2(1-\gamma)$ and $w\in(\gamma/(1-2w),1/2)$, $p\geq 4$ and observing that $\varphi_m= \text{tr}((I-\Pi_m)\mathbb E[\Sigma_1](I-\Pi_m))$ converges to $0$ as $m\to\infty$ (where $\text{tr}$ denotes the trace operation) we have  with $a_T= \| [X^C,X^C]/T- \mathcal C\|_{\text{HS}}$ that
$$  \left\|\frac 1T SARCV_T^{n,m}(-)- \Pi \mathcal C\Pi\right\|_{\mathcal H}=\mathcal O_p(\Delta_n^{\gamma}+\varphi_m+a_T).$$
 \end{itemize}
\end{theorem}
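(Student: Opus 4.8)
The plan is to reduce the statement about the discretized truncated covariation $SARCV^n_t(u_n,-,m)$ to the corresponding results for the non-discretized estimator $SARCV^n_t(u_n,-)=SARCV^n_t(u_n,-,\infty)$ from \cite{Schroers2024}, and to control the additional error caused by applying the finite-rank projections $\Pi_m$ both inside the truncation indicator and to the increments themselves. For part~(i), the key observation is that $\Pi_m\tilde\Delta^n_if = \Pi_m(\tilde\Delta^n_if)$ and that $(\Pi_m h)^{\otimes 2}=\Pi_m h^{\otimes 2}\Pi_m$, so
$$
SARCV^n_t(\infty,-,m)=\Pi_m\Big(\sum_{i=1}^{\lfloor t/\Delta_n\rfloor}\tilde\Delta^n_if^{\otimes 2}\Big)\Pi_m
=\Pi_m\,SARCV^n_t(\infty,-)\,\Pi_m.
$$
Since $SARCV^n_t(\infty,-)\overset{u.c.p.}{\to}[X,X]_t$ (Theorem~3.1 and Remark~\ref{rem: semigroup adjustments are difference returns} above, citing \cite{Schroers2024}), and since $K\mapsto \Pi_m K\Pi_m$ is a contraction on $L_{\text{HS}}(H)$ converging strongly to $K\mapsto \Pi K\Pi$ (because $\Pi_m\to\Pi$ strongly and Hilbert--Schmidt operators are compact), a standard $3\varepsilon$-argument with a diagonal subsequence in $(n,m)$ yields $\Pi_m SARCV^n_t(\infty,-)\Pi_m\overset{u.c.p.}{\to}\Pi[X,X]_t\Pi$; identifying the limit as $\int_0^t\Pi\Sigma_s\Pi\,ds+\sum_{s\le t}(\Pi X_s-\Pi X_{s-})^{\otimes2}$ uses \eqref{semimartingale quadratic variation} and the continuity of $h\mapsto \Pi h$.

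For parts~(ii)--(iv) the difficulty is that the truncation indicator now reads $\indicator_{g_n(\Pi_m\tilde\Delta^n_if)\le u_n}$ rather than $\indicator_{g_n(\tilde\Delta^n_if)\le u_n}$, so the estimator is \emph{not} simply $\Pi_m SARCV^n_t(u_n,-)\Pi_m$. The plan is to show that replacing $g_n(\tilde\Delta^n_if)$ by $g_n(\Pi_m\tilde\Delta^n_if)$ asymptotically never changes which increments are kept, up to a negligible set. Using the norm-equivalence \eqref{abstract g conditions}, $g_n(\Pi_m\tilde\Delta^n_if)\le g_n(\tilde\Delta^n_if)\le g_n(\Pi_m\tilde\Delta^n_if)+g_n((I-\Pi_m)\tilde\Delta^n_if)$ and $\|(I-\Pi_m)\tilde\Delta^n_if\|\le \|(I-\Pi_m)\tilde\Delta^n_iM^C\|+\|(I-\Pi_m)\tilde\Delta^n_iJ\|+\|(I-\Pi_m)\tilde\Delta^n_iA\|$; the continuous part of the increment has $\Sigma$-scaled size $O_p(\sqrt{\Delta_n\log(1/\Delta_n)})$ uniformly and the correction from $(I-\Pi_m)$ is, after integrating, controlled by $b^T_m\to0$; the drift part is $O(\Delta_n)$. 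Hence for every fixed $m$ the extra increments wrongly kept or discarded contribute an error that is itself $\Pi_m$-projected and bounded in $L_{\text{HS}}$-norm by (a constant times) the same quantities that already appear in the $SARCV^n(u_n,-)$ analysis plus $b^T_m$; one then passes to $m\to\infty$ jointly with $n\to\infty$. Concretely: write $SARCV^n_t(u_n,-,m)=\Pi_m SARCV^n_t(u_n,-)\Pi_m + R^n_{m,t}$, show $\sup_{t\le T}\|R^n_{m,t}\|_{\text{HS}}=\mathcal O_p(\Delta_n^{\min(\gamma,\rho)}+b^T_m)$ by the increment-size estimates above (for~(iii)) and $=o_p(\Delta_n^{1/2})$ under \eqref{Abstract CLT assumption} together with the assumed $b^T_m=o_p(\Delta_n^{1/2})$ (for~(iv)), and then invoke the known rate/CLT for $SARCV^n(u_n,-)$ from \cite{Schroers2024,BSV2022} composed with the contraction $K\mapsto\Pi_m K\Pi_m$; in the CLT the limit $\Pi_m\mathcal N(0,\mathfrak Q_t)\Pi_m$ converges to $\Pi\mathcal N(0,\mathfrak Q_t)\Pi$ in law by continuity of $K\mapsto \Pi K\Pi$ and the portmanteau theorem, and stable convergence is preserved because the test functionals are multiplied by bounded $\mathcal F$-measurable random variables only. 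The Favard-regularity implication (Assumption~\ref{as: CLT} $\Rightarrow$ Assumption~\ref{As: spatial regularity}($1/2$)) is exactly as in \cite{BSV2022} and is inherited verbatim.

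For part~(v), the long-time statement, the plan is to combine the same decomposition $\frac1T SARCV^n_T(u_n,-,m)=\Pi_m(\frac1T SARCV^n_T(u_n,-))\Pi_m+\frac1T R^n_{m,T}$ with the long-time law of large numbers for $\frac1T SARCV^n_T(u_n,-)$ established in \cite{Schroers2024} under Assumptions~\ref{As: Mean stationarity and ergodicity}, \ref{In proof: Very very Weak localised integrability Assumption on the moments}(p,r), \ref{In proof II}($\gamma$); the moment assumption with $p>\max(2/(1-2w),(1-wr)/(2w-rw))$ is precisely what makes the truncation-error terms and the discretization remainder $\frac1T R^n_{m,T}$ converge to $0$ in probability after dividing by $T$, uniformly in the relevant ranges. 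The identification of the limit as $\Pi\mathcal C\Pi$ uses $\mathcal C=\mathbb E[\Sigma_t]$ and the mean-ergodicity $\frac1T\int_0^T\Sigma_s\,ds\to\mathcal C$ in Hilbert--Schmidt norm, together with strong convergence $\Pi_m\to\Pi$ and the trace identity $\varphi_m=\mathrm{tr}((I-\Pi_m)\mathbb E[\Sigma_1](I-\Pi_m))\to0$, which quantifies $\|\Pi_m\mathcal C\Pi_m-\Pi\mathcal C\Pi\|_{\text{HS}}$. Under the stronger conditions $r<2(1-\gamma)$, $w\in(\gamma/(1-2w),1/2)$, $p\ge4$, the rate $\mathcal O_p(\Delta_n^\gamma+\varphi_m+a_T)$ follows by splitting $\|\frac1T SARCV^n_T(u_n,-,m)-\Pi\mathcal C\Pi\|_{\text{HS}}$ into (a) the infill error $\mathcal O_p(\Delta_n^\gamma)$ from part~(iii) applied on $[0,T]$ with the $b^T_m$-term absorbed since $b^T_m\le$ const$\cdot\varphi_m$-type bound in the stationary regime, (b) the ergodic fluctuation $a_T=\|[X^C,X^C]_T/T-\mathcal C\|_{\text{HS}}$, and (c) the projection bias $\varphi_m$.

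\textbf{Main obstacle.} The technically delicate step is (v)/(iii) in the joint limit: controlling the remainder $R^n_{m,t}$ — coming from the mismatch between $g_n(\Pi_m\tilde\Delta^n_if)$ and $g_n(\tilde\Delta^n_if)$ in the truncation — \emph{uniformly} in $t\le T$ with the sharp rate $\Delta_n^{\min(\gamma,\rho)}+b^T_m$ (and $o_p(\Delta_n^{1/2})$ for the CLT), because one must simultaneously bound the number and the size of the ``borderline'' increments where the two indicators disagree; this requires the $r$-summability of jumps from Assumption~\ref{As: H}(r), the a.s.\ modulus-of-continuity estimate for the continuous martingale part, and the integrated projection-gap bound $b^T_m$, assembled in the right order so that no term of order worse than the claimed rate survives.
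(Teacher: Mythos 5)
Your plan coincides with the paper's own proof in all essentials: part (i) is handled by the contraction bound $\left\|\Pi_m K\Pi_m\right\|_{\text{HS}}\le\left\|K\right\|_{\text{HS}}$ together with the strong convergence $\Pi_m\to\Pi$, and parts (ii)--(v) by writing $SARCV_t^n(u_n,-,m)=\Pi_m\,SARCV_t^n(u_n,-)\,\Pi_m+R^n_{m,t}$, where the remainder stems from the mismatch between the indicators $\indicator_{g_n(\Pi_m\tilde\Delta_i^nf)\le u_n}$ and $\indicator_{g_n(\tilde\Delta_i^nf)\le u_n}$ --- exactly the content of the paper's Lemma \ref{L: Bridging Lemma} --- after which the results of \cite{Schroers2024} are invoked for the projected non-discretized estimator and the projection bias is absorbed into $b_m^T$ (resp. $\varphi_m$). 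The only material difference is that the paper controls $R^n_{m,t}$ via localized moment estimates (a four-term split into continuous and jump contributions handled by H\"older and Markov inequalities), rather than the pathwise modulus-of-continuity heuristic you sketch, but this is a variation in technique within the same decomposition, and you correctly single out that remainder bound as the one genuinely delicate step.
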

To prove this abstract result,  we make use of the limit theory  established in \cite{Schroers2024}.  However, Theorem \ref{T: Theorems for abstract semigroups} is not a direct corollary of these results, since we have to take into account that jump-truncation rules now also depend on possible discrete approximations. 
The key result to bridge this gap is 
\begin{lemma}\label{L: Bridging Lemma}
Assume that Assumptions \ref{In proof: Very very Weak localised integrability Assumption on the moments}(p,r) holds for $r\in (0,2]$ and $p> (1-\rho/((2-r)w))^{-1}$ for some $\rho<(2-r)w$ when $r<2$ or $\rho = 0$ if $r=2$.  Then we have
\begin{align}\label{discretization of the truncation rule does no harm-moment based}
\mathbb E\left[\left\|\sum_{i=1}^{\ul} (\Pi_m \tilde \Delta_i^n f)^{\otimes 2} \indicator_{g_n(\Pi_m \tilde \Delta_i^n f)\leq  u_n}-\sum_{i=1}^{\ul} (\Pi_m \tilde \Delta_i^n f)^{\otimes 2} \indicator_{g_n( \tilde \Delta_i^n f)\leq  u_n}\right\|\right]=Kt\Delta_n^{\rho}\phi_n
\end{align}
for a real sequence $(\phi_n)_{n\in \mathbb N}$ converging to $0$ and a constant $K>0$. 

If Assumption \ref{As: H} holds, it is 
\begin{align}\label{discretization of the truncation rule does no harm}
\left\|\sum_{i=1}^{\ul} (\Pi_m \tilde \Delta_i^n f)^{\otimes 2} \indicator_{g_n(\Pi_m \tilde \Delta_i^n f)\leq  u_n}-\sum_{i=1}^{\ul} (\Pi_m \tilde \Delta_i^n f)^{\otimes 2} \indicator_{g_n( \tilde \Delta_i^n f)\leq  u_n}\right\|=o_p(\Delta_n^{\rho})
\end{align}
\end{lemma}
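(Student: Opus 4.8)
The plan is to localise the discrepancy between the two truncation rules to the (few) indices at which they actually disagree, to bound each such index's contribution by a multiple of $u_n^2$, and then to recognise the surviving sum as a truncated $l$-th order variation of the semigroup-adjusted increments, for which the required estimates are already available from \cite{Schroers2024}. Throughout I would keep all constants independent of $m$ (this is why the lemma is stated with $m$ free and later used while $m\to\infty$), invoking only $\|\Pi_m\|\le 1$ and the norm equivalence \eqref{abstract g conditions}. Writing $E_i^n:=\{g_n(\Pi_m\tilde\Delta_i^n f)\le u_n\}\,\triangle\,\{g_n(\tilde\Delta_i^n f)\le u_n\}$ for the symmetric difference of the two truncation events at step $i$, the left-hand side of \eqref{discretization of the truncation rule does no harm-moment based} and of \eqref{discretization of the truncation rule does no harm} is bounded by $\sum_{i=1}^{\ul}\|\Pi_m\tilde\Delta_i^n f\|^2\,\indicator_{E_i^n}$.

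The only genuinely new input is an elementary deterministic step. On $E_i^n$ exactly one of $g_n(\Pi_m\tilde\Delta_i^n f)$ and $g_n(\tilde\Delta_i^n f)$ is $\le u_n$; distinguishing the two cases and using $c\|h\|\le g_n(h)\le C\|h\|$ together with $\|\Pi_m h\|\le\|h\|$, one finds in either case that $\|\Pi_m\tilde\Delta_i^n f\|\le u_n/c$ and $\|\tilde\Delta_i^n f\|> u_n/C$. Hence, for any exponent $l\in[r,2]$ one may write $\|\Pi_m\tilde\Delta_i^n f\|^2\le (u_n/c)^{2-l}\|\tilde\Delta_i^n f\|^{l}$ on $E_i^n$, so that the quantity to control is dominated by $(u_n/c)^{2-l}\,V_n^{l}(t)$ with $V_n^{l}(t):=\sum_{i=1}^{\ul}\|\tilde\Delta_i^n f\|^{l}\,\indicator_{\|\tilde\Delta_i^n f\|> u_n/C}$. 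This object involves neither $m$ nor the projections, so the problem is reduced to the standard truncation estimate for $V_n^l$.

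To finish I would invoke the truncation machinery of \cite{Schroers2024} (itself an adaptation of \cite{Mancini2009} and \cite{Jacod2008} to the mild form \eqref{mild Ito process}): decomposing $\tilde\Delta_i^n f$ into its drift, Wiener and compensated-jump parts and using that the left shift $\mathcal S$ is a contraction, the drift and Wiener parts have increments that are, uniformly in $i$, of size $o_p(\Delta_n^{w'})$ for every $w'<1/2$, hence eventually below $u_n/(2C)$, while Assumption \ref{As: H}(r) renders the jumps $l$-summable for every $l>r$. This gives $V_n^{l}(t)=O_p(1)$ for any $l\in(r,2)$, and taking expectations — Burkholder--Davis--Gundy for the Wiener part and Markov's inequality for the number of intervals carrying a large jump, both fed by the moment bound in Assumption \ref{In proof: Very very Weak localised integrability Assumption on the moments}(p,r) — yields $\mathbb E[V_n^{l}(t)]\le K_l\,t$ uniformly in $n$ and $m$, with the admissible range of $l$ (and thus of $\rho$) dictated exactly by the stated lower bound on $p$. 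Substituting into $(u_n/c)^{2-l}V_n^{l}(t)$ and choosing $l$ slightly above $r$ gives $O_p(\Delta_n^{(2-l)w})=o_p(\Delta_n^{\rho})$ for every $\rho<(2-r)w$, which is the second assertion, and in the integrable regime $\mathbb E[\,\cdot\,]\le K\,t\,\Delta_n^{\rho}\,\phi_n$ with $\phi_n:=\Delta_n^{(2-l)w-\rho}\to0$, which is the first assertion.

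The main obstacle is this last step rather than the reduction: establishing, uniformly in $i$ and with the precise dependence on the moment exponent $p$, that no continuous increment exceeds the truncation level and that the semigroup-weighted jump increments inherit the $l$-summability of $J$. These are exactly the truncation lemmas proved in \cite{Schroers2024}, so in the write-up they would be quoted rather than reproved; the contribution of the present lemma is the short deterministic reduction in the first two paragraphs, which converts the projected and cross-sectionally discretised truncation rule into the unprojected one at the cost of the single scalar factor $(u_n/c)^{2-l}$.
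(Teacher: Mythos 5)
Your opening reduction is correct and is in fact exactly how the paper begins: on the symmetric difference event $E_i^n$ the norm equivalence \eqref{def: g} and $\|\Pi_m h\|\le\|h\|$ give $\|\Pi_m\tilde\Delta_i^n f\|\le u_n/c$ and $\|\tilde\Delta_i^n f\|>u_n/C$, so the error is localised to increments that are simultaneously small after projection and large before it. The gap is in the next step, where you replace $\|\Pi_m\tilde\Delta_i^n f\|^2$ by $(u_n/c)^{2-l}\|\tilde\Delta_i^n f\|^{l}$ and claim $\mathbb E[V_n^l(t)]\le K_l t$ for $V_n^l(t)=\sum_i\|\tilde\Delta_i^n f\|^l\indicator_{\|\tilde\Delta_i^n f\|>u_n/C}$ with $l$ slightly above $r$. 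This discards the cap $\|\Pi_m\tilde\Delta_i^n f\|\le u_n/c$ on the large increments, and the resulting moment bound is false under the stated hypotheses: Assumption \ref{In proof: Very very Weak localised integrability Assumption on the moments}(p,r) only controls $\mathbb E[\int\|\gamma_s(z)\|^r\nu(dz)]$, so the jump sizes need not possess finite $l$-th moments for any $l>r$ (take a compound Poisson component whose jump norms have a Pareto tail of index between $r$ and $l$). Then $\mathbb E\bigl[\sum_i\|\tilde\Delta_i^n f''\|^l\indicator_{\|\tilde\Delta_i^n f''\|>u_n/C}\bigr]=\infty$ and the first assertion \eqref{discretization of the truncation rule does no harm-moment based} cannot be extracted from $V_n^l$. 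The indicator does not help, since it bounds the increment from below, not above.

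The paper's proof avoids this precisely by never raising the jump part to a power above what is integrable: it writes $\|\Pi_m\tilde\Delta_i^n f\|^2\indicator_{E_i^n}\le K u_n^2\bigl(1\wedge\frac{\|\Pi_m\tilde\Delta_i^n f''\|^2}{c^2u_n^2}\bigr)+(\text{continuous terms})$, so that each large jump contributes at most $u_n^2$, and then feeds the quantities $\mathbb E\bigl[(1\wedge\|\tilde\Delta_i^n f''\|/u_n)^q\bigr]\le K\Delta_n\phi_n u_n^{-r}$ from Lemma A.1 of \cite{Schroers2024} into a four-term Hölder/Markov decomposition; this only requires $r$-integrability of $\gamma$ and produces the rate $u_n^2\cdot n\cdot\Delta_n^{1-rw}\phi_n=t\Delta_n^{(2-r)w}\phi_n$. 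Your second assertion \eqref{discretization of the truncation rule does no harm} could in principle be rescued without the moment bound, because under Assumption \ref{As: H}(r) the jumps exceeding norm one are almost surely finitely many on $[0,T]$ and each contributes only $O(u_n^2)=o(\Delta_n^\rho)$ after reinstating the cap; but as written you derive both assertions from the same expectation estimate, so the argument needs to be repaired by restoring the $u_n^2\bigl(1\wedge\cdot\bigr)$ structure for the jump contribution.
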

Before we prove this Lemma, let us introduce some notation. 
In the case that Assumption \ref{As: H}(r) is valid for $0<r\leq 1$ we write 
\begin{align*}
    f_t= \mathcal S(t)f_0+\int_0^t \mathcal S(t-s)\alpha_s' ds+\int_0^t \mathcal S(t-s)\sigma_s dW_s+\int_0^t \int_{H\setminus \{0\}} \mathcal S(t-s)\gamma_s(z) N(dz,ds),
\end{align*}
where 
$$\alpha_s'= \alpha_s-\int_{H\setminus \{0\}} \gamma_s(z) F(dz)$$
and the integral w.r.t. the (not compensated) Poisson random measure $N$ is well defined (for the second term recall the definition of the integral e.g. from \cite[Section 8.7]{PZ2007}). %, due to Assumption \ref{In proof: Very very Weak localised integrability Assumption on the moments}(p,r) with $0<r<1$.
We then define
\begin{align}\label{Eq: Continuous discontinuous condition for low r}
   & f_t':=\mathcal S(t) f_0+\int_0^t \mathcal S(t-s)\alpha_s'+\int_0^t \mathcal S(t-s)\sigma_sdW_s,\\
    &f_t'':=\int_0^t \int_{H\setminus \{ 0\}}\mathcal S(t-s)\gamma_s(z) N(dz,ds).\notag
\end{align}

If Assumption \ref{As: H}(r) holds for $r\in (1,2)$,  we define
\begin{align}\label{Eq: Continuous discontinuous condition for large r}
   & f_t':=\mathcal S(t) f_0+\int_0^t \mathcal S(t-s)\alpha_s+\int_0^t \mathcal S(t-s)\sigma_sdW_s,\\
    &f_t'':=\int_0^t \int_{H\setminus \{ 0\}}\mathcal S(t-s)\gamma_s(z) (N-\nu)(dz,ds).\notag
\end{align}
%Also in this case, $Y=Y'+Y''$ and $Y'$ is continuous. 

\begin{proof} 
We start with the case that Assumption \ref{In proof: Very very Weak localised integrability Assumption on the moments}(p,r) holds for $r\in (0,2]$ and $p>(1-\rho/((2-r)w))^{-1}$
Observe that 
\begin{align}
& \left\|\sum_{i=1}^{\ul} (\Pi_m \tilde \Delta_i^n f)^{\otimes 2} \indicator_{g_n(\Pi_m \tilde \Delta_i^n f)\leq  u_n}-\sum_{i=1}^{\ul} (\Pi_m \tilde \Delta_i^n f)^{\otimes 2} \indicator_{g_n( \tilde \Delta_i^n f)\leq  u_n}\right\|\notag\\
\leq  & \sum_{i=1}^{\ul} \left\|\Pi_m \tilde \Delta_i^n f\right\|^2 \left(\indicator_{g_n(\Pi_m \tilde \Delta_i^n f)\leq  u_n<g_n( \tilde \Delta_i^n f)}+\indicator_{g_n( \tilde \Delta_i^n f)\leq  u_n<g_n(\Pi_m \tilde \Delta_i^n f)}\right)\notag\\
\leq  & \sum_{i=1}^{\ul} \left\|\Pi_m \tilde \Delta_i^n f\right\|^2 \left(\indicator_{c\|\Pi_m \tilde \Delta_i^n f\|\leq  u_n<C\|\tilde \Delta_i^n f\|}+\indicator_{c\|\Pi_m \tilde \Delta_i^n f\|\leq  u_n< C\|\Pi_m \tilde \Delta_i^n f\|}\right)\notag\\
\leq & 2 c^2u_n^2 \sum_{i=1}^{\ul} \left(1\wedge \frac{\left\|\Pi_m \tilde \Delta_i^n f\right\|^2}{c^2u_n^2}\right) \indicator_{c\|\Pi_m\tilde \Delta_i^n f\|\leq u_n<C\|\tilde \Delta_i^n f\|}\notag\\
\leq & 2 \sum_{i=1}^{\ul} \left\|\Pi_m \tilde \Delta_i^n f'\right\|^2 \indicator_{c\|\Pi_m\tilde \Delta_i^n f\|\leq u_n<C\|\tilde \Delta_i^n f\|, c\|\tilde \Delta_i^n f'\|\leq u_n}\label{Auxterm1}\\
&\qquad+2  \sum_{i=1}^{\ul} \left\|\Pi_m \tilde \Delta_i^n f'\right\|^2 \indicator_{c\|\Pi_m\tilde \Delta_i^n f\|\leq u_n<C\|\tilde \Delta_i^n f\|, c\|\tilde \Delta_i^n f'\|> u_n}\label{Auxterm2}\\
&+ 2 c^2u_n^2 \sum_{i=1}^{\ul} \left(1\wedge \frac{\left\|\Pi_m \tilde \Delta_i^n f''\right\|^2}{c^2u_n^2}\right) \indicator_{c\|\Pi_m\tilde \Delta_i^n f\|\leq u_n<C\|\tilde \Delta_i^n f\|, \|\tilde \Delta_i^n f'\|\leq u_n}\label{Auxterm3}\\
&\qquad+2 c^2u_n^2 \sum_{i=1}^{\ul} \left(1\wedge \frac{\left\|\Pi_m \tilde \Delta_i^n f''\right\|^2}{c^2u_n^2}\right) \indicator_{c\|\Pi_m\tilde \Delta_i^n f\|\leq u_n<C\|\tilde \Delta_i^n f\|, \|\tilde \Delta_i^n f'\|> u_n}\label{Auxterm4}
%\leq & 2\sum_{i=1}^{\ul} \left\|\Pi_m \tilde \Delta_i^n f'\right\|^2 \indicator_{u_n<C\|\tilde \Delta_i^n f\|}+ 2 c^2u_n^2 \sum_{i=1}^{\ul} \left(1\wedge \frac{\left\|\Pi_m \tilde \Delta_i^n f''\right\|^2}{c^2u_n^2}\right) \indicator_{u_n<C\|\tilde \Delta_i^n f\|}
\end{align}
We show for all summands \eqref{Auxterm1}, \eqref{Auxterm2}, \eqref{Auxterm3} and \eqref{Auxterm4} that they are are bounded by $Kt\Delta_n^{\rho}\phi_n$ for a real sequence $(\phi_n)_{n\in \mathbb N}$ converging to $0$ and a constant $K>0$.

We start with \eqref{Auxterm1}.  Since Assumption \ref{In proof: Very very Weak localised integrability Assumption on the moments} holds, we can use Lemma A.1 from \cite{Schroers2024}. Since $\|\Pi_m\tilde\Delta_i^n f\| \leq  u_n$ and $\|\Pi_m\tilde\Delta_i^n f'\|\leq \|\tilde\Delta_i^n f'\|\leq  u_n$ implies that $\|\Pi_m\tilde\Delta_i^n f''\| \leq  2 u_n$ we find a constant $K>0$ such that
\begin{align*}
&\mathbb E\left[\sum_{i=1}^{\ul} \left\|\Pi_m \tilde \Delta_i^n f'\right\|^2 \indicator_{c\|\Pi_m\tilde \Delta_i^n f\|\leq u_n<C\|\tilde \Delta_i^n f\|, c\|\tilde \Delta_i^n f'\|\leq u_n}\right]\\
%\leq &\sum_{i=1}^{\ul} \mathbb E\left[\left\|\Pi_m \tilde \Delta_i^n f'\right\|^2 \left(1\wedge \frac{\|\tilde \Delta_i^n f''\|}{2u_n}\right)\right]\\
\leq &\sum_{i=1}^{\ul} \mathbb E\left[\left\|\Pi_m \tilde \Delta_i^n f'\right\|^p\right]^{\frac 2p} \mathbb E\left[ \left(1\wedge \frac{\|\tilde \Delta_i^n f''\|}{2u_n}\right)\right]^{1-\frac 2p}\\
%\leq & K \sum_{i=1}^{\ul} \Delta_n (\Delta_n^{1-rw})^{1-\frac 2p}\\
%\leq  & Kt \Delta_n^{\frac{p- 2}p(1-rw)}\\
\leq & Kt \Delta_n^{\rho}\phi_n
\end{align*}
For the second summand \eqref{Auxterm2}, we apply Markov's inequality, choose $l=(2-2rw)/(2-4w)>1$ and again Lemma A.1 from \cite{Schroers2024} to obtain a constant $K>0$ such that
\begin{align*}
& \mathbb E\left[ \sum_{i=1}^{\ul} \left\|\Pi_m \tilde \Delta_i^n f'\right\|^2 \indicator_{c\|\Pi_m\tilde \Delta_i^n f\|\leq u_n<C\|\tilde \Delta_i^n f\|, c\|\tilde \Delta_i^n f'\|> u_n}\right]\\
\leq & \sum_{i=1}^{\ul}  \mathbb E\left[\left\|\Pi_m \tilde \Delta_i^n f'\right\|^p\right]^{\frac 2p}\mathbb P\left[c\|\tilde \Delta_i^n f'\|> u_n\right]^{\frac{p-2}{p}}\\
%\leq & K t \Delta_n^{\frac{p-2}p(l/2-lw)}\\
%\leq & Kt \Delta_n^{\frac{p-2}p(1-rw)}||
\leq & Kt \Delta_n^{\rho}\phi_n
\end{align*}

Turning to the third summand, we again make use of Lemma A.1 from \cite{Schroers2024} to obtain a constanr $K>0$ and a real sequence $(\phi_n)_{n\in \mathbb N}$ convrging to $0$ such that
\begin{align*}
& \mathbb E\left[c^2u_n^2 \sum_{i=1}^{\ul} \left(1\wedge \frac{\left\|\Pi_m \tilde \Delta_i^n f''\right\|^2}{c^2u_n^2}\right) \indicator_{c\|\Pi_m\tilde \Delta_i^n f\|\leq u_n<C\|\tilde \Delta_i^n f\|, \|\tilde \Delta_i^n f'\|\leq u_n}\right]\\
 \leq &\mathbb E\left[c^2u_n^2 \sum_{i=1}^{\ul} \left(1\wedge \frac{\left\| \tilde \Delta_i^n f''\right\|^2}{c^2u_n^2}\right)^2\right]\\
 % \leq &\mathbb E\left[c^2u_n^2 \sum_{i=1}^{\ul} \left(1\wedge \frac{\left\| \tilde \Delta_i^n f''\right\|}{cu_n}\right)\right]\\
  \leq & K c^2t\Delta_n^{\rho}\phi_n.
\end{align*}
For the fourth summand we find for $1<q=(2-r)w/\rho$ if $r<1$ and $1<q$ arbitrary if $r=2$ and use once more Lemma A.1 from \cite{Schroers2024} to obtain a constant $K>0$ and a real sequence $(\phi_n)_{n\in \mathbb N}$ converging to $0$ such that
\begin{align*}
& c^2u_n^2 \sum_{i=1}^{\ul} \left(1\wedge \frac{\left\|\Pi_m \tilde \Delta_i^n f''\right\|^2}{c^2u_n^2}\right) \indicator_{c\|\Pi_m\tilde \Delta_i^n f\|\leq u_n<C\|\tilde \Delta_i^n f\|, \|\tilde \Delta_i^n f'\|> u_n}\\
\leq & c^2u_n^2 \sum_{i=1}^{\ul} \mathbb E\left[\left(1\wedge \frac{\left\|\Pi_m \tilde \Delta_i^n f''\right\|}{cu_n}\right)^q\right]^{\frac 1q}\mathbb P[|\tilde \Delta_i^n f'\|> u_n]^{\frac {q-1}q}\\
%\leq & K t c^2  \Delta_n^{\frac{(2-r)w}q}\phi_n\\
\leq & K t c^2  \Delta_n^{\rho}\phi_n.
\end{align*}
Summing up, we proved \eqref{discretization of the truncation rule does no harm-moment based}.

Let us now turn to the case that only Assumption \ref{As: H} holds. 
Assumption \ref{As: H} implies that  there is a localizing sequence of stopping times $(\rho_n)_{n\in\mathbb N}$ such that $\alpha_{t\wedge \rho_n}$ is bounded for each $n\in \mathbb N$. As $\sigma$ and $f$ are c{\`a}dl{\`a}g, the sequence of stopping times $\theta_n:=\inf\{s: \|f_s\|+\|\sigma_s\|_{\text{HS}}\geq n\}$ are localizing as well. If $(\tau_n)_{n\in\mathbb N}$ is the sequence of stopping times for the jump part as described in Assumption \ref{As: H}, we can define 
$\varphi_n:=\rho_n\wedge \theta_n\wedge \tau_n,\quad n\in\mathbb N.$
This defines a localizing sequence of stopping times, for which  the coefficients $\alpha_s \indicator_{s\leq \varphi_n}$, $\sigma_s \indicator_{s\leq \varphi_n}$ and $\gamma_s(z) \indicator_{s\leq \varphi_n}$ satisfy  Assumption \ref{In proof: Very very Weak localised integrability Assumption on the moments}(p,r) for $r\in (0,2)$ and all $p>0$. 

Now define 
\begin{align*}
Z_n(t):=&  \Delta_n^{-\rho}\sum_{i=1}^{\ul} \left\|\Pi_m \tilde \Delta_i^n f\right\|^2 (\indicator_{g_n(\Pi_m \tilde \Delta_i^n f)\leq  u_n}- \indicator_{g_n( \tilde \Delta_i^n f)\leq  u_n}). \\
\geq & \Delta_n^{-\rho}\left\|\sum_{i=1}^{\ul} (\Pi_m \tilde \Delta_i^n f)^{\otimes 2} \indicator_{g_n(\Pi_m \tilde \Delta_i^n f)\leq  u_n}-\sum_{i=1}^{\ul} (\Pi_m \tilde \Delta_i^n f)^{\otimes 2} \indicator_{g_n( \tilde \Delta_i^n f)\leq  u_n}\right\|.
\end{align*}
If $\varphi_n\geq t+1$, we have
\begin{align*}
Z_n(t\wedge \varphi_N)\leq &  \Delta_n^{-\rho}\sum_{i=1}^{\lfloor (t+1)/\Delta_n\rfloor} \left\|\Pi_m \tilde \Delta_i^n f_{\cdot\wedge\varphi_N}\right\|^2 (\indicator_{g_n(\Pi_m \tilde \Delta_i^n f_{\cdot\wedge\varphi_N})\leq  u_n}- \indicator_{g_n( \tilde \Delta_i^n f_{\cdot\wedge\varphi_N})\leq  u_n}).
%\Delta_n^{-\rho}\left\|\sum_{i=1}^{\ul} (\Pi_m \tilde \Delta_i^n f_{\cdot\wedge \varphi_N})^{\otimes 2} \indicator_{g_n(\Pi_m \tilde \Delta_i^n f_{\cdot\wedge \varphi_N})\leq  u_n}-\sum_{i=1}^{\ul} (\Pi_m \tilde \Delta_i^n f_{\cdot\wedge \varphi_N})^{\otimes 2} \indicator_{g_n( \tilde \Delta_i^n f_{\cdot\wedge \varphi_N})\leq  u_n}\right\|\\
\end{align*} 
%We can assume without loss of generality that $T\notin \mathbb N$
%otherwise we extend the process process and the coefficients up to time$T+\epsilon$ by setting $\alpha_{T+s})\alpha_T$ ..., such that the local boundedness assumption still holds
%and 
%Choose $N_n\in \mathbb N$ large such that $\mathbb P[\phi_{N_n}<t_n]<\delta$ for an arbitrary $\delta>0$.  
We obtain as $n\to \infty$
 \begin{align*}
 \lim_{n\to \infty}  \mathbb P\left[\sup_{t\in [0,T]}\mathcal Z_n^i(t)\geq  \epsilon\right]
   %= &  \lim_{n\to \infty}\mathbb P\left[\sup_{t\in [0,T]}\mathcal Z_n^i(t)\geq  \epsilon\right]\\
   \leq &\lim_{n\to \infty}  \mathbb P\left[\sup_{t\in [0,T]}\mathcal Z_n^i(t\wedge \varphi_N)
    \geq  \epsilon,  T< \varphi_N\right]+   \lim_{n\to \infty}\mathbb P\left[T \geq\varphi_N \right]=0
\end{align*}
where the convergence in the last line is due to  \eqref{discretization of the truncation rule does no harm-moment based} and Markov's inequality and since we know that $\Delta_n^{-\rho}\sum_{i=1}^{\lfloor (t+1)/\Delta_n\rfloor} \left\|\Pi_m \tilde \Delta_i^n f_{\cdot\wedge\varphi_N}\right\|^2 (\indicator_{g_n(\Pi_m \tilde \Delta_i^n f_{\cdot\wedge\varphi_N})\leq  u_n}- \indicator_{g_n( \tilde \Delta_i^n f_{\cdot\wedge\varphi_N})\leq  u_n})$ converges to $0$ uniformly on compacts by \eqref{discretization of the truncation rule does no harm-moment based}. This yields \eqref{discretization of the truncation rule does no harm}.
\end{proof}

Now we are able to prove Theorem \ref{T: Theorems for abstract semigroups} as a Corollary of the results in \cite{Schroers2024} and Lemma \ref{L: Bridging Lemma}.

\begin{proof}[Proof of Theorem \ref{T: Theorems for abstract semigroups}]
We start with assertion (i).  For that, we observe that
\begin{align*}
   \Pi_m SARCV_t^n \Pi_m-[\Pi X,\Pi X]_t= \left( \Pi_m SARCV_t^n\Pi_m-[\Pi_m X,\Pi_m X] \right)+\left([\Pi_m X,\Pi_m X]-[\Pi X,\Pi X]\right)
\end{align*}
For the first summand it is
\begin{align*}
    \left\| \Pi_m SARCV_t^n\Pi_m-[\Pi_m X,\Pi_m X]_t \right\|=   \left\| \Pi_m\left( SARCV_t^n-[ X,X]_t \right) \Pi_m\right\|\leq  \left\| SARCV_t^n-[ X,X]_t \right\|,
\end{align*}
which converges to $0$ as $n\to \infty$ uniformly on compacts in probability by Theorem 3.1 in  \cite{Schroers2024}. For the second summand, we have
\begin{align*}
 &  \sup_{t\in [0,T]} \left\|[\Pi_m X,\Pi_m X]-[\Pi X,\Pi X]\right\|\\
   \leq  & \int_0^T \|\Pi_m \Sigma_s \Pi_m-\Sigma_s\| ds+ \sum_{s\leq T} \|\Pi_m (X_s-X_{s-})^{\otimes 2}\Pi_m-(X_s-X_{s-})^{\otimes 2}\|.
\end{align*}
By dominated convergence, if we can prove that for all $s\geq 0$ it is as $m\to\infty$ and in probability that
\begin{align}
    \|\Pi_m \Sigma_s \Pi_m-\Sigma_s\|\to 0 \text{ and } \|\Pi_m (X_s-X_{s-})^{\otimes 2}\Pi_m-(X_s-X_{s-})^{\otimes 2}\|\to 0,
\end{align}
the proof follows. But  this holds true even as almost sure convergence, by Proposition 4 and Lemma 5 in \cite{Panaretos2019}.

Before we prove the remaining assertions, let us observe the subsequent error decomposition
\begin{align}
&SARCV(u_n,,-,m)_t^n-[\Pi X^C,[\Pi X^C]_t \notag\\
\leq &SARCV(u_n,,-,m)_t^n-\sum_{i=1}^{\ul} (\Pi_m\tilde \Delta_i^n f)^{\otimes 2}\indicator_{g_n(\tilde \Delta_i^n f)\leq u_n}\label{AUXEQ1}\\
& \qquad +\Pi_m\left(\sum_{i=1}^{\ul} (\tilde \Delta_i^n f)^{\otimes 2}\indicator_{g_n(\tilde \Delta_i^n f)\leq u_n}-[X^C,X^C]_t\right)\Pi_m\label{AUXEQ2}\\
& \qquad+  [\Pi_m X^C,\Pi_m X^C]_t-[\Pi X^C,\Pi X^C]_t\label{AUXEQ3}
\end{align}

We proceed with the proof of (ii).  By Lemma \ref{L: Bridging Lemma}, \eqref{AUXEQ1} converges to $0$.  The second summand \eqref{AUXEQ2} converges to $0$ by Theorem 3.2 in \cite{Schroers2024}. The last summand \eqref{AUXEQ3} is bounded by $b_m^T$,  which converges to $0$ as $m\to \infty$.

Let us now turn to the proof of (iii), which works analogous to the proof of (ii), by employing the decompositiion of the approximation error into \eqref{AUXEQ1}, \eqref{AUXEQ2} and \eqref{AUXEQ3}.  Indeed,  Lemma \ref{L: Bridging Lemma} yields that \eqref{AUXEQ1} is $o_p(\Delta_n^{\rho})$ with respect to the Hilbert-Schmidt norm, while Theorem 3.3 in \cite{Schroers2024} yields that the second summand is $\mathcal O_p(\Delta_n^{\min(\rho,\gamma)})$,with respect to the Hilbert-Schmidt norm, which shows (iii). 

Now let us prove the central limit theorem (iv).  Again employing the error decomposition into \eqref{AUXEQ1}, \eqref{AUXEQ2} and \eqref{AUXEQ3},  we find that, Lemma \ref{L: Bridging Lemma} yields that  \eqref{AUXEQ1} is $o_p(\Delta_n^{\rho})=o_p(\Delta_n^{1/2})$ and by Assumption, the same holds for  \eqref{AUXEQ3}, since it is bounded by .$b_m^T$.  Hence,  we find that under the Assumptions imposed in (iv), it is 
\begin{align*}
& \sqrt n\left(SARCV(u_n,,-,m)_t^n-[\Pi X^C,[\Pi X^C]_t\right)\\
= & \Pi_m \left(\sqrt n\left(\sum_{i=1}^{\ul} (\tilde \Delta_i^n f)^{\otimes 2}\indicator_{g_n(\tilde \Delta_i^n f)\leq u_n}-[X^C,X^C]_t\right)\right)+o_p(1)
\end{align*}
Now (iv) follows directly from Theorem 3.5 in \cite{Schroers2024}.

We conclude the proof by showing (v).  For that we introduce the decomposition
\begin{align}
&\frac 1TSARCV(u_n,,-,m)_T^n-\Pi \mathcal C \Pi \notag\\
\leq & \frac 1T SARCV(u_n,,-,m)_T^n-\frac 1T\sum_{i=1}^{\ulT} (\Pi_m\tilde \Delta_i^n f)^{\otimes 2}\indicator_{g_n(\tilde \Delta_i^n f)\leq u_n}\label{CAUXEQ1}\\
& \qquad +\Pi_m\left(\frac 1T\sum_{i=1}^{\ulT} (\tilde \Delta_i^n f)^{\otimes 2}\indicator_{g_n(\tilde \Delta_i^n f)\leq u_n}-\mathcal C\right)\Pi_m\label{CAUXEQ2}\\
& \qquad+\Pi_m \mathcal C \Pi_m-\Pi \mathcal C \Pi.\label{CAUXEQ3}
\end{align} 
By \ref{L: Bridging Lemma}, the first summand \eqref{CAUXEQ1} is $ o_p(\Delta_n^{\rho})$. The second summand \eqref{CAUXEQ2} converges to $0$ by  Theorem 3.6 in \cite{Schroers2024} and the third summand \eqref{CAUXEQ3} converges to $0$ as $m\to\infty$.  We obtain the rates of convergence also from Theorem 3.6 in \cite{Schroers2024} applied to \eqref{CAUXEQ2} and since $\rho$ can be chosen larger than $1/2$ if $r<1$ and  the last summand equals $\text{tr}((\Pi-\Pi_m)\mathcal C(\Pi-\Pi_m))$.
\end{proof}

\subsection{Formal proofs of Section \ref{Sec: term structure models theory}}\label{Formal proofs for term structure models}
We will now show how \ref{T: General discrete LLN}, Theorem \ref{T: General Limit discretized truncated LLN}, \ref{T: Rate of convergence for discretized estimator} and \ref{T: Long-time asymptotics for termstructure volatiltiy} can be deduced from Theorem \ref{T: Theorems for abstract semigroups}.
Let us begin with the general identifiability results.
\begin{proof}[Proof of Theorem \ref{T: General discrete LLN}]
    We have that the integral operator $\Delta_n^{-2}\mathcal T_{\hat q_t^{n}}$ corresponding to the piecewise constant kernel $\Delta_n^{-2}\hat q_t^{n}$, according to Remark \ref{rem: semigroup adjustments are difference returns} is given by
    $$\Delta_n^{-2}\mathcal T_{\hat q_t^{n}}= \Pi_{n,M}(SARCV_t^n)\Pi_{n,M}.$$
    where $\Pi_{n,M}$ is defined as in \eqref{Orthonomal Projection on Indicators}. Setting $\Pi_m= \Pi_{m,M}$ and $\Pi=I$ if $M=\infty$, or resp.$\Pi f(x)=\indicator_{[0,M]}(x)f(x)$ and $M<\infty$ for $f\in L^2(\mathbb R_+)$, the result follows immediately from Theorem \ref{T: Theorems for abstract semigroups}(i)
\end{proof}

\begin{proof}[Proof of Theorem \ref{T: General Limit discretized truncated LLN}]
Again using the notation of Remark \ref{rem: semigroup adjustments are difference returns}, we can observe that the integral operator $\Delta_n^{-2}\mathcal T_{\hat q_t^{n,M,-}}$ corresponding to the kernel $\Delta_n^{-2}q_t^{n,M,-}$ defined in Remark \ref{Rem: Reduction of the time to maturity is fine} is (with $\Pi_{n,M}$ as in Remark \ref{rem: semigroup adjustments are difference returns}) given by
    $$\Delta_n^{-2}\mathcal T_{\hat q_t^{n,M,-}}= (SARCV_t^n(u_n,-,n).$$
    Thus, setting $\Pi_m= \Pi_{m,M}$ and $\Pi=I$ if $M=\infty$, or resp.$\Pi f(x)=\indicator_{[0,M]}(x)f(x)$ and $M<\infty$ for $f\in L^2(\mathbb R_+)$, the result follows immediately from  \ref{T: Theorems for abstract semigroups}(ii).
\end{proof}

Before proving Theorems \ref{T: Rate of convergence for discretized estimator} and \ref{T: Long-time asymptotics for termstructure volatiltiy}, we observe that we can quantify the spatial discretization error now also in terms of the regularity of the semigroup.
 
\begin{theorem}\label{T: Uniform spatial error bound} %Let $q_s\in L^1([0,H]^2)$ denote the integral kernel  such that
%$$\Sigma_s f(x)= \int_0^H q_s(x,y)f(y) dy.$$
%      If \begin{align*}
%   \int_0^T \left(\int_0^H\int_0^H\int_0^H\int_0^H \frac {|q_s(x',y')-q_s(x,y)|^2}{\|(x',y')-(x,y)\|^{2+2\gamma}} dxdydx'dy'\right)^{\frac 12}ds=\mathcal O_p(1).
%\end{align*} 
%If Assumption \ref{As: spatial regularity}($\gamma$) for $\gamma\in (0,1/2]$ is valid, 
We have for all $\gamma >0$ that
\begin{align}\label{Error bound for spatial term structure discretization}
     \left\|\Pi_{n,M}\Sigma_s\Pi_{n,M}- \Sigma_s \right\|_{\text{HS}} \leq   2 \|\sigma_s\|_{\text{op}} \left\|\Pi_{n,M}\sigma_s- \sigma_s\right\|_{\text{HS}}\notag
 \leq     & 2\|\sigma_s\|_{\text{op}}\Delta_m^{\gamma} \sup_{r\leq \Delta_n}\frac{\left\|\left(\mathcal S(r)-I\right)\sigma_s\right\|_{\text{HS}} }{r^{\gamma}}.
\end{align}
Hence, if  Assumption \ref{As: spatial regularity}($\gamma$) for $\gamma\in (0,1/2]$ is valid, we find  (with $\Pi=I$ if $M=\infty$, or resp.$\Pi f(x)=\indicator_{[0,M]}(x)f(x)$ and $M<\infty$ for $f\in L^2(\mathbb R_+)$)
\begin{equation}
    \left\|\Pi_{n,M} \int_0^t \Sigma_s ds \Pi_{n,M}-\int_0^t \Pi\Sigma_s \Pi ds\right\|_{\text{HS}} =\mathcal O_p(\Delta_n^{\gamma})
\end{equation}
If even Assumption \ref{In proof II} holds, we find a constant $K$, which is independent of $T$ and $m$ such that
\begin{equation}
   \mathbb E\left[\sup_{t\in [0,T]} \left\|\Pi_{n,M} \int_0^t \Sigma_s ds \Pi_{n,M}-\int_0^t \Pi\Sigma_s\Pi ds\right\|_{\text{HS}} \right]\leq KT\Delta_m^{\gamma}.
\end{equation}
\end{theorem}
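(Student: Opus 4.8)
The first inequality in \eqref{Error bound for spatial term structure discretization} is a general Hilbert--Schmidt estimate: writing $\Sigma_s = \sigma_s\sigma_s^*$ and $\Pi_{n,M}\Sigma_s\Pi_{n,M} = (\Pi_{n,M}\sigma_s)(\Pi_{n,M}\sigma_s)^*$, I would use the algebraic identity $AA^* - BB^* = A(A-B)^* + (A-B)B^*$ with $A = \sigma_s$, $B = \Pi_{n,M}\sigma_s$, together with the submultiplicativity $\|TK\|_{\mathrm{HS}} \le \|T\|_{\mathrm{op}}\|K\|_{\mathrm{HS}}$ and $\|\Pi_{n,M}\|_{\mathrm{op}}\le 1$, to get the factor $2\|\sigma_s\|_{\mathrm{op}}\|\Pi_{n,M}\sigma_s - \sigma_s\|_{\mathrm{HS}}$. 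For the second inequality I would bound $\|\Pi_{n,M}\sigma_s - \sigma_s\|_{\mathrm{HS}}$ by relating the projection error onto the step functions $\indicator_{[(j-1)\Delta_n,j\Delta_n]}$ to a modulus of continuity of $\sigma_s$ under the left shift semigroup. Concretely, for any $h\in L^2(\mathbb R_+)$ the $L^2$-error of replacing $h$ by its average over each interval of length $\Delta_n$ is controlled by $\sup_{0<r\le\Delta_n}\|\mathcal S(r)h - h\|$ (up to a constant), and this is standard; applying it columnwise to the Hilbert--Schmidt operator $\sigma_s$ and using $r^\gamma$-homogeneity gives $\|\Pi_{n,M}\sigma_s - \sigma_s\|_{\mathrm{HS}} \le \Delta_n^{\gamma}\sup_{r\le\Delta_n}\|(\mathcal S(r)-I)\sigma_s\|_{\mathrm{HS}}/r^{\gamma}$, which is exactly the claimed bound (with $\Delta_m = \Delta_n$ in the notation of the statement).

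Next, for the integrated bounds I would first observe that $\Pi_{n,M}\int_0^t\Sigma_s\,ds\,\Pi_{n,M} - \int_0^t\Pi\Sigma_s\Pi\,ds$ splits into $\int_0^t(\Pi_{n,M}\Sigma_s\Pi_{n,M} - \Sigma_s)\,ds$ plus $\int_0^t(\Sigma_s - \Pi\Sigma_s\Pi)\,ds$, where the second term vanishes if $M=\infty$ and, if $M<\infty$, satisfies $\Pi_{n,M}\Sigma_s\Pi_{n,M} = \Pi_{n,M}\Pi\Sigma_s\Pi\Pi_{n,M}$ so that one is really comparing $\Pi_{n,M}(\Pi\Sigma_s\Pi)\Pi_{n,M}$ to $\Pi\Sigma_s\Pi$ — the same projection-error estimate then applies with $\sigma_s$ replaced by $\Pi\sigma_s$ (note $\|\Pi\sigma_s\|_{\mathrm{op}}\le\|\sigma_s\|_{\mathrm{op}}$ and $\mathcal S(r)$ still acts appropriately on the truncated domain up to harmless boundary terms). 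Combining the pointwise bound with the triangle inequality for the Bochner integral, $\sup_{t\in[0,T]}\|\cdots\|_{\mathrm{HS}} \le \int_0^T 2\|\sigma_s\|_{\mathrm{op}}\Delta_n^{\gamma}\sup_{r\le\Delta_n}\|(\mathcal S(r)-I)\sigma_s\|_{\mathrm{HS}}/r^{\gamma}\,ds$. Under Assumption~\ref{As: spatial regularity}($\gamma$), which in operator form controls $\sup_r\|(\mathcal S(r)-I)\sigma_s\|_{\mathrm{op}}/r^\gamma$ through $\|q_s^C\|_{\mathfrak F_\gamma}$ together with local boundedness of $\|\sigma_s\|_{\mathrm{op}}$, this integral is a.s. finite, giving the $\mathcal O_p(\Delta_n^\gamma)$ rate; under the stronger Assumption~\ref{In proof II}($\gamma$) one takes expectations, uses $\mathbb E[\|\sigma_s\|_{\mathrm{op}}\cdot(\text{modulus})] \le \mathbb E[\|q_s^C\|_{\mathfrak F_\gamma}] \le A$ (possibly via Cauchy--Schwarz and a uniform moment bound on $\|\sigma_s\|_{\mathrm{op}}$ from Assumption~\ref{In proof: Very very Weak localised integrability Assumption on the moments}), and pulls the deterministic $\Delta_n^\gamma$ out, yielding $\mathbb E[\sup_{t\le T}\|\cdots\|_{\mathrm{HS}}] \le KT\Delta_n^\gamma$ with $K$ independent of $T$ and $m$.

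The main obstacle I anticipate is bookkeeping the precise relation between the kernel-level regularity class $\mathfrak F_\gamma$ (in which Assumptions~\ref{As: spatial regularity} and~\ref{In proof II} are phrased, via $\|q_s^C\|_{\mathfrak F_\gamma}$) and the operator-norm modulus of continuity $\sup_r\|(\mathcal S(r)-I)\sigma_s\|_{\mathrm{HS}}/r^\gamma$ that appears naturally in the discretization estimate: one must check that a bound on the shifted-kernel $L^2$-difference of $q_s^C = $ kernel of $\sigma_s\sigma_s^*$ translates into a bound on $\|(\mathcal S(r)-I)\sigma_s\|_{\mathrm{HS}}$, which requires care since $\mathfrak F_\gamma$ is defined by one-sided shifts in one variable of the \emph{kernel of $\Sigma_s$}, not of $\sigma_s$ itself. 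I would handle this by noting $\|(\mathcal S(r)-I)\sigma_s\|_{\mathrm{HS}}^2 \le $ (a constant times) $\|(\mathcal S(r)-I)\Sigma_s\|_{\mathrm{HS}}\,\|\sigma_s\|_{\mathrm{op}}$ via $(\mathcal S(r)-I)\sigma_s\sigma_s^*(\mathcal S(r)-I)^* \preceq$ a multiple of $(\mathcal S(r)-I)\Sigma_s + \Sigma_s(\mathcal S(r)-I)^*$ up to lower-order terms, so that the kernel formulation of $\mathfrak F_\gamma$ directly controls what is needed; the remaining estimates are then routine integration and Cauchy--Schwarz.
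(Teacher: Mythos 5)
Your treatment of the two displayed inequalities matches the paper's argument. The first follows, as you say, from the factorization $\Pi_{n,M}\Sigma_s\Pi_{n,M}-\Sigma_s=(\Pi_{n,M}\sigma_s)\left(\Pi_{n,M}\sigma_s-\sigma_s\right)^*+\left(\Pi_{n,M}\sigma_s-\sigma_s\right)\sigma_s^*$ together with $\|AB\|_{\text{HS}}\le\|A\|_{\text{op}}\|B\|_{\text{HS}}$ and $\|\Pi_{n,M}\|_{\text{op}}\le 1$; the second is exactly the paper's cell-averaging computation on the integral kernel of $\sigma_s$, which converts the projection error onto the step functions into the shift modulus $\sup_{r\le\Delta_n}\|(\mathcal S(r)-I)\sigma_s\|_{\text{HS}}/r^{\gamma}$. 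Up to this point your plan is sound and essentially identical to the paper's.

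The gap is in the bridge you propose from Assumptions \ref{As: spatial regularity}($\gamma$) and \ref{In proof II}($\gamma$) — which control the kernel $q_s^C$ of $\Sigma_s$ — to the $\sigma_s$-modulus in the display. You invoke $\|(\mathcal S(r)-I)\sigma_s\|_{\text{HS}}^2\lesssim\|(\mathcal S(r)-I)\Sigma_s\|_{\text{HS}}\,\|\sigma_s\|_{\text{op}}$. Writing $A=\mathcal S(r)-I$, the left-hand side equals $\mathrm{tr}(A\Sigma_sA^*)$, and diagonalizing $\Sigma_s=\sum_k\lambda_k e_k^{\otimes 2}$ the claimed bound reads $\sum_k\lambda_k\|Ae_k\|^2\lesssim\left(\sum_k\lambda_k^2\|Ae_k\|^2\right)^{1/2}\sup_k\lambda_k^{1/2}$. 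This is not a valid operator inequality: with $\lambda_k=k^{-2}$ and $\|Ae_k\|^2$ of order one exactly for $N\le k\le 2N$, the left side is of order $N^{-1}$ while the right side is of order $N^{-3/2}$, and your argument uses no special structure of $\mathcal S(r)-I$ that would rescue it. The underlying issue is that a trace is not dominated by a Hilbert--Schmidt norm in infinite dimensions, so the $L^2$ bound on the shifted kernel of $\Sigma_s$ encoded in $\mathfrak F_{\gamma}$ does not control the HS modulus of $\sigma_s$. The two ``Hence'' conclusions do not need this direction at all: bypass $\sigma_s$ and estimate $\|\Pi_{n,M}\Sigma_s\Pi_{n,M}-\Pi\Sigma_s\Pi\|_{\text{HS}}\le 2\|(\Pi_{n,M}-\Pi)\Sigma_s\|_{\text{HS}}$, then run the identical cell-averaging computation on the kernel $q_s^C$ itself, which yields a bound $\le C\Delta_n^{\gamma}\|q_s^C\|_{\mathfrak F_{\gamma}}$ directly from the definition of $\mathfrak F_{\gamma}$; integrating in $s$ gives the $\mathcal O_p(\Delta_n^{\gamma})$ rate under Assumption \ref{As: spatial regularity}($\gamma$), and taking expectations under Assumption \ref{In proof II}($\gamma$) gives $KT\Delta_n^{\gamma}$ with $K$ independent of $T$. (The implication you would need is the converse of the one the paper actually proves elsewhere: in the proof of the CLT the modulus of $\sigma_s$ is used to control $\|q_s^C\|_{\mathfrak F_{1/2}}$ via $\|(\mathcal S(r)-I)\sigma_s\sigma_s^*\|_{\text{HS}}\le\|(\mathcal S(r)-I)\sigma_s\|_{\text{op}}\|\sigma_s\|_{\text{HS}}$, which is the easy direction.)
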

\begin{proof} Let $q_s^{\sigma}\in L^2(\mathbb R_+^2)$ denote the integral kernel  such that for all $f\in L^2(\mathbb R_+)$ and $x\geq 0$ it is
$$\sigma_s f(x)= \int_{\mathbb R_+} q_s^{\sigma}(x,y)f(y) dy.$$ 
Without loss of generality,  choose $q_s^{\sigma}$ to be symmetric.
Then for  $M=\infty$ and $M<\infty$ it is
   \begin{align*}
         \left\|(\Pi_{m,M}-\Pi)  \sigma_s   \right\|_{\text{HS}}^2
         %=&\int_{[0,M]} \sum_{j=1}^{\lfloor M/\Delta_m\rfloor}\int_{(j-1)\Delta_m}^{j\Delta_m} \left(\Delta_m^{-1}\int_{(j-1)\Delta_m}^{j\Delta_m} q_s^{\sigma}(x',y)dx'-q_s^{\sigma}(x,y) \right)^2 dx dy\\
         \leq &\int_{[0,M]} \sum_{j=1}^{\lfloor M/\Delta_m\rfloor}\int_{(j-1)\Delta_m}^{j\Delta_m} \Delta_m^{-1}\int_{(j-1)\Delta_m}^{j\Delta_m} \left(q_s^{\sigma}(x',y)-q_s^{\sigma}(x,y) \right)^2dx' dx dy\\
         \leq &2\Delta_m^{-1}\int_{[0,M]} \sum_{j=1}^{\lfloor M/\Delta_m\rfloor}\int_{(j-1)\Delta_m}^{j\Delta_m} \int_{x}^{j\Delta_m} \left(q_s^{\sigma}(x',y)-q_s^{\sigma}(x,y) \right)^2dx' dx dy\\
        %  = &2\Delta_m^{-1}\int_{[0,M]} \sum_{j=1}^{\lfloor M/\Delta_m\rfloor}\int_{(j-1)\Delta_m}^{j\Delta_m} \int_{0}^{j\Delta_m-x} \left(q_s^{\sigma}(x'+x,y)-q_s^{\sigma}(x,y) \right)^2dx' dx dy\\
%            \leq &2\Delta_m^{-1}\int_{[0,M]} \sum_{j=1}^{\lfloor M/\Delta_m\rfloor}\int_{(j-1)\Delta_m}^{j\Delta_m} \int_{0}^{\Delta_m} \left(q_s^{\sigma}(x'+x,y)-q_s^{\sigma}(x,y) \right)^2dx' dx dy\\
            = &2\Delta_m^{-1}\int_{0}^{\Delta_m}\int_{[0,M]} \sum_{j=1}^{\lfloor M/\Delta_m\rfloor}\int_{(j-1)\Delta_m}^{j\Delta_m}  \left(q_s^{\sigma}(x'+x,y)-q_s^{\sigma}(x,y) \right)^2 dx dydx'.      \end{align*}
      Hence, 
      \begin{align*}
       \left\|(\Pi_{m,M}-\Pi)  \sigma_s   \right\|_{\text{HS}}^2
           % \leq &2\Delta_m^{-1}\int_{0}^{\Delta_m}\int_{[0,M]} \sum_{j=1}^{\lfloor M/\Delta_m\rfloor}\int_{(j-1)\Delta_m}^{j\Delta_m}  \left(q_s^{\sigma}(x'+x,y)-q_s^{\sigma}(x,y) \right)^2 dx dydx'\\ 
      \leq & 2 \sup_{x\leq \Delta_m} \left(\Delta_m^{-1}\int_{[0,M]}\sum_{j=1}^{\lfloor M/\Delta_m\rfloor}\int_{(j-1)\Delta_m}^{j\Delta_m} \left(((\mathcal S(x)-I)q_s(\cdot, y'))(x'))\right)^2 dx'dy'\right)\\
         \leq & 2 \sup_{x\leq \Delta_m} \left(\int_{[0,M]}\int_{[0,M]} \left(\frac{((\mathcal S(x)-I)q_s(\cdot, y'))(x'))}x\right)^2 dx'dy'\right).
  %     \leq & 2 \Delta_m^{2\gamma}\sup_{x\leq \Delta_m} \left(\int_0^{M}\int_0^{M} \left(\frac{((\mathcal S(x)-I)q_s(\cdot, y'))(x'))}{x^{\gamma}}\right)^2 dx'dy'\right)\\
  %     \leq & 2 \Delta_m^{2\gamma}\sup_{x\leq \Delta_m} \left(\frac{\left\|\left(\mathcal S(x)-I\right)\sigma_s\right\|_{\text{HS}}}{x^{\gamma}}\right)^2
  %    \leq &2\Delta_m^2\int_{(j_1-1)\Delta_m}^{j_1\Delta_m}\int_{(j_1-1)\Delta_m}^{j_1\Delta_m} \int_{(j_1-1)\Delta_m}^{j_1\Delta_m}\int_{x'}^{j_2\Delta_m}  \left(q_s(x',y')-q(x,y'))\right)^2 dxdydx'dy' \\
    %   = &2\Delta_m^2\int_{(j_1-1)\Delta_m}^{j_1\Delta_m}\int_{(j_1-1)\Delta_m}^{j_1\Delta_m} \int_{(j_1-1)\Delta_m}^{j_1\Delta_m}\int_{0}^{j_2\Delta_m-x'}  \left(q_s(x',y')-q(x+x',y'))\right)^2 dxdydx'dy' \\
     %   = &2\Delta_m^2\int_{(j_1-1)\Delta_m}^{j_1\Delta_m}\int_{(j_1-1)\Delta_m}^{j_1\Delta_m} \int_{(j_1-1)\Delta_m}^{j_1\Delta_m}\int_{0}^{j_2\Delta_m-x'}  \left(((I-\mathcal S(x))q_s)(x',y')\right)^2 dxdydx'dy' \\
    %    = &2\Delta_m^2\int_{(j_1-1)\Delta_m}^{j_1\Delta_m}\int_{0}^{\Delta_m} \int_{(j_1-1)\Delta_m}^{j_1\Delta_m}\int_{(j_1-1)\Delta_m}^{j_1\Delta_m}  \left(((I-\mathcal S(x))q_s)(x',y')\right)^2 dx'dy'dxdy 
    \end{align*}
    %Analogously we obtain the same estimate for 
  %  \begin{align*}
  %     & \sum_{j_1,j_2}^{\lfloor H/\Delta_m\rfloor}\int_{(j_1-1)\Delta_m}^{j_1\Delta_m}\int_{(j_2-1)\Delta_m}^{j_2\Delta_m} \left(\int_{(j_1-1)\Delta_m}^{j_1\Delta_m}\int_{(j_2-1)\Delta_m}^{j_2\Delta_m} (q_s(x,y')-q(x,y)) dx'dy'\right)^2 dx dy\\
  %      \leq & 2 \Delta_m^{4+2\gamma}\sup_{x\leq \Delta_m} \left(\frac{\left\|\left(\mathcal S(x)-I\right)\Sigma_s\right\|_{L_{\text{HS}}(L^2(0,M))} }{x^{\gamma}}\right)^2
  %  \end{align*}
  %  Summing up, we obtain
  %  \begin{align*}
  %       &  \left\|\Pi_m  \Sigma_s  \Pi_m- \Sigma_s \right\|_{L_{\text{HS}}(L^2(0,M))} 
  %       \leq   2\Delta_m^{\gamma}\sup_{x\leq \Delta_m} \frac{\left\|\left(\mathcal S(x)-I\right)\Sigma_s\right\|_{L_{\text{HS}}(L^2(0,M))} }{x^{\gamma}}= 2\Delta_m^{\gamma} \|\Sigma_s\|_{F_{\gamma}^{\mathcal s}}
    %\end{align*}
    This proves the claim.

    \iffalse
    \begin{align*}
        &  \left\|\Pi_m  \Sigma_s  \Pi_m- \Sigma_s \right\|^2_{L_{\text{HS}}(H)}\\
          =& \int_0^H\int_0^H \left(\hat q_s^m(x,y)- q_s(x,y)\right)^2 dx dy\\
          = & \sum_{j_1,j_2}^{\lfloor H/\Delta_n\rfloor}\int_{(j_1-1)\Delta_n}^{j_1\Delta_n}\int_{(j_2-1)\Delta_n}^{j_2\Delta_n} \left(q(j_1,j_2)\Delta_n^{-2}-  q_s(x,y)\right)^2 dx dy\\
          = &\Delta_n^{-4} \sum_{j_1,j_2}^{\lfloor H/\Delta_n\rfloor}\int_{(j_1-1)\Delta_n}^{j_1\Delta_n}\int_{(j_2-1)\Delta_n}^{j_2\Delta_n} \left(\int_{(j_1-1)\Delta_n}^{j_1\Delta_n}\int_{(j_2-1)\Delta_n}^{j_2\Delta_n} (q_s(x',y')-q(x,y)) dx'dy'\right)^2 dx dy\\
          = &\Delta_n^{2\gamma} \sum_{j_1,j_2}^{\lfloor H/\Delta_n\rfloor}\int_{(j_1-1)\Delta_n}^{j_1\Delta_n}\int_{(j_2-1)\Delta_n}^{j_2\Delta_n} \int_{(j_1-1)\Delta_n}^{j_1\Delta_n}\int_{(j_2-1)\Delta_n}^{j_2\Delta_n} \frac{\left(q_s(x',y')-q(x,y)\right)^2}{\|(x,y)-(x',y')\|_{\mathbb R^2}^{2+2\gamma}} dx'dy' dx dy\\
          \leq & 2^{2\gamma}\Delta_n^{2\gamma} \int_0^H\int_0^H\int_0^H\int_0^H \frac{\left(q_s(x',y')-q(x,y)\right)^2}{\|(x,y)-(x',y')\|_{\mathbb R^2}^{2+2\gamma}} dx'dy' dx dy
    \end{align*}
    where the integral is finite by Assumption. The proves follows by dominated convergence.
    \fi
\end{proof}

It is left to show  Theorem \ref{T: Rate of convergence for discretized estimator},  Theorem \ref{T: CLT for truncated estimator} and Theorem \ref{T: Long-time asymptotics for termstructure volatiltiy}. We start with the
\begin{proof}[Proof of Theorem \ref{T: Rate of convergence for discretized estimator}]
Using again notation \eqref{Hilbert-Schmidt kernel equivalence} it is (with $\Pi_{n,M}$ as in Remark \ref{rem: semigroup adjustments are difference returns}) 
    $\Delta_n^{-2}\mathcal T_{\hat q_t^{n,M,-}}= SARCV_t^n(u_n,-,n)$  where $\hat q_t^{n,M,-}$ is defined in Remark \ref{Rem: Reduction of the time to maturity is fine} and we obtain from Theorem \ref{T: Theorems for abstract semigroups}(iii)
       that
\begin{equation*}
   \sup_{m\in \mathbb N\cup\{\infty\}}\sup_{t\in [0,T]}\left\|SARCV_t^n(u_n,-,n)-\int_0^t \Pi_{n,M}\Sigma_s \Pi_{n,M}ds\right\|_{\text{HS}}=\mathcal O_p\left(\Delta_n^{\min(\gamma,\rho)}\right).
\end{equation*}
Moreover,  due to Theorem \ref{T: Uniform spatial error bound} we obtain that with $\Pi=I$ if $M=\infty$, or resp.$\Pi f(x)=\indicator_{[0,M]}(x)f(x)$ and $M<\infty$ for $f\in L^2(\mathbb R_+)$ it is
\begin{equation*}
    \left\|\Pi_{n,M}\int_0^t \Sigma_s ds \Pi_{n,M}-\Pi\int_0^t \Sigma_s ds\Pi\right\|_{\text{HS}} = \mathcal O_p(\Delta_m^{\gamma}),
\end{equation*}
which proves the claim.
\end{proof}
We continue with the
\begin{proof}[Proof of Theorem \ref{T: CLT for truncated estimator}]
We first prove that \eqref{slightly stronger than favard 1/2 condition} implies Assumption \ref{As: spatial regularity}($1/2$). It is by H{\"o}lder's inequality and the basic inequality $=\|AB^*\|_{\text{HS}}\leq \|A\|_{\text{HS}}\|B\|_{\text{op}}$
for a Hilbert-Schmidt operator $A$ and a bounded linear operator $B$,
\begin{align*}
  \int_0^T \|q_s^C\|_{\mathfrak F_{\gamma}}ds%=  \int_0^T \|\Sigma_s\|_{\mathfrak F_{\gamma}^{\mathcal S}}ds
 % = &  \int_0^T \sup_{r>0}\frac{\|(I-\mathcal S(r))\Sigma_s\|_{L_{\text{HS}}(L^2(\mathbb R_+))}}{r^{\frac 12}}ds\\
%  \leq  &  \int_0^T \sup_{r>0}\frac{\|(I-\mathcal S(r))\sigma_s\|_{\text{op}}}{r^{\frac 12}}\|\sigma_s\|_{\text{HS}}ds\\
  \leq &\left(\int_0^T \sup_{r>0}\frac{\|(I-\mathcal S(r))\sigma_s\|_{\text{op}}^2}{r}ds\right)^{\frac 12}\left(\int_0^T \|\sigma_s\|_{\text{HS}}^2ds\right)^{\frac 12}.
\end{align*}
Now \eqref{slightly stronger than favard 1/2 condition} implies that the factor on the left is finite almost surely, whereas the factor on the right is finite almost surely, due to the stochastic integrability of the volatility. This implies that Assumption \ref{As: spatial regularity}($1/2$) is valid.

We now continue to derive Theorem \ref{T: CLT for truncated estimator} from Theorem \ref{T: Theorems for abstract semigroups}(iv). %One just has to observe that
We only have to show that $b_n^T=\int_0^T (\Pi-\Pi_{n,M})\Sigma_s ds=o_p(\Delta_n^{ 1/2})$. 
For that, observe that
\begin{align*}
 &   \Delta_n^{-\frac 12}\left\|\Pi_{n,M} \int_0^t \Sigma_s ds \Pi_{n,M}-\int_0^t \Pi\Sigma_s \Pi ds\right\|_{\text{HS}} \\
  \leq &  \Delta_n^{-\frac 12}\int_0^t\left\|(\Pi_{n,M}-\Pi) \Sigma_s \right\|_{\text{HS}}ds+ \Delta_n^{-\frac 12}\int_0^t\left\| \Sigma_s (\Pi_{n,M}-\Pi)\right\|_{\text{HS}}ds.
\end{align*}
We will prove convergence of the first summand to $0$ as $n\to \infty$, while for the second summand, the proof is analogous.
We  define the orthonormal basis $(e_j)_{j\in \mathbb N}\subset C_c(I)\subset L^2(I)$ where either $I=\mathbb R_+$ if $M=\infty$ and $I=[0,M]$ if $M<\infty$ and $C_c(I)$ is the set of compactly supported infinitely differentiable functions (which is dense in $L^2(I)$). Then, obviously for each $j$ we can find a constant $k_j$ such that $\sup_{|x-y|\leq \Delta_n}|e_j(x)-e_j(y)|\leq k_j \Delta_n$ and, thus, if $K_j\in \mathbb N$ such that $e_j(x)=0$ for all $x\geq K_j$
\begin{align*}
    \|(\Pi_{n,M}-\Pi)e_j\|^2=&\int_0^{K_j}\left(\sum_{i=1}^{\infty} n\int_{(i-1)\Delta_n}^{i\Delta_n} e_j(y)dy\indicator_{[(i-1)\Delta_n,i\Delta_n]}(x)-e_j(x)\right)^2dx
 %   =&\int_0^{K_j}\sum_{i=1}^{ K_j/\Delta_n}\left( n\int_{(i-1)\Delta_n}^{i\Delta_n} e_j(y)-e_j(x)dy\right)^2\indicator_{[(i-1)\Delta_n,i\Delta_n]}(x)dx\\
    %\leq &\int_0^{K_j}\sum_{i=1}^{ K_j/\Delta_n}\left( k_j \Delta_n\right)^2\indicator_{[(i-1)\Delta_n,i\Delta_n]}(x)dx\\
    \leq  k_j^2 \Delta_n^2 K_j.
\end{align*}
Let $P_N$ denote the orthonormal projection onto $span(e_i\otimes e_j:i,j=1,...,N)$. We can  decompose
%and define by $P_N$ the projecteion on the space $span(e_1,...,e_N)$. Then for the first summand above and using \eqref{Error bound for spatial term structure discretization} with $\Sigma_s$ exchanged respectively with $P_N\Sigma_s$ and  $(I-P_N)\Sigma_s$ where we denote the integral kernels of the latter two Hilbert-Schmidt operators we find 
\begin{align*}
  &   \Delta_n^{-\frac 12}\int_0^t\left\|(\Pi_{n,M}-\Pi) \Sigma_s \right\|_{\text{HS}}ds\\
    \leq & \Delta_n^{-\frac 12}\int_0^t\left\| P_N(\Pi_{n,M}-\Pi)\Sigma_s \right\|_{\text{HS}}ds+\Delta_n^{-\frac 12}\int_0^t\left\| (I-P_N)(\Pi_{n,M}-\Pi)\Sigma_s \right\|_{\text{HS}}ds
\end{align*}
It is simple to see that $\|P_N A\|_{\text{HS}}\leq \sum_{i,j=1}^N |\langle Ae_i,e_j\rangle|$ and, hence,
For the first part we find
\begin{align*}
   \Delta_n^{-\frac 12}\int_0^t\left\| P_N(\Pi_{n,M}-\Pi)\Sigma_s \right\|_{\text{HS}}ds
   %\leq     \Delta_n^{-\frac 12}\int_0^t \sum_{i,j=1}^N| \langle (\Pi_{n,M}-\Pi)\Sigma_s e_i,e_j\rangle| ds
%    \leq  &   \Delta_n^{-\frac 12}\int_0^t \|\Sigma_s \|_{\text{nuc}}\left(\sum_{j=1}^N\| (\Pi_{n,M}-\Pi)e_j\|\right)ds\\
       \leq & \Delta_n^{\frac 12}\int_0^t\| \Sigma_s \|_{\text{nuc}}ds\left( \sum_{j=1}^Nk_j \sqrt K_j\right).
\end{align*}
This converges to $0$ as $n\to\infty$ for all $N\in \mathbb N$.
For the second summand we observe that $I-P_N$ is the orthonormal projection onto $\overline{span(e_i\otimes e_j.i,j\geq N+1)}$  and hence can be written as $I-P_N=(I-p_N)(\cdot)(I-p_N)$ where $I-p_N=\sum_{i=N+1}^{\infty} e_i^{\otimes 2}$. 
We find by H{\"o}lder's inequality that
\begin{align*}
 &   \Delta_n^{-\frac 12}\int_0^t\left\| (I-P_N)(\Pi_{n,M}-\Pi)\Sigma_s \right\|_{\text{HS}}ds\\
  = &  \Delta_n^{-\frac 12}\int_0^t\left\| (I-p_N)(\Pi_{n,M}-\Pi)\Sigma_s (I-p_N)\right\|_{\text{HS}}ds
 % \leq  & \Delta_n^{-\frac 12}\int_0^t\left\| (I-p_N)(\Pi_n-I)\sigma_s\right\|_{L_{\text{HS}}(L^2(\mathbb R_+))}\| \sigma_s^* (I-p_N)\|_{\text{op}}ds\\
 % \leq & \Delta_n^{-\frac 12}\int_0^t\left\| (I-p_N)(\Pi_n-I)\Sigma_s (I-p_N)\right\|_{L_{\text{HS}}(L^2(\mathbb R_+))}ds
 \\
  \leq  & \Delta_n^{-\frac 12}\left(\int_0^t\left\| (I-p_N)(\Pi_{n,M}-\Pi)\sigma_s\right\|_{\text{op}}^2ds\right)^{\frac 12}\left(\int_0^t\|\sigma_s^*(I-p_N)\|_{\text{HS}}^2ds\right)^{\frac 12}
\end{align*}
The second factor converges to $0$ as $N\to\infty$ since
$$\| \sigma_s^* (I-p_N)\|_{\text{HS}}^2= \sum_{i=1}^{\infty}\| \sigma_s(I-p_N) e_j\|^2=\sum_{i=N+1}^{\infty}\| \sigma_s e_j\|^2$$
converges to $0$ as $N\to \infty$ and then the dominated convergence theorem applies. The first factor is bounded, since %by \eqref{Error bound for spatial term structure discretization} we have that
\begin{align*}
    \int_0^t\left\| (I-p_N)(\Pi_{n,M\Pi}-\Pi)\sigma_s\right\|_{\text{op}}^2ds \leq & \int_0^t\left\| (\Pi_{n,M}-\Pi)\sigma_s\right\|_{\text{op}}^2ds\\
    \leq &2\Delta_n \int_0^t \sup_{r\leq \Delta_n}\frac{\left\|\left(\mathcal S(r)-I\right)\sigma_s\right\|_{L_{\text{HS}}(L^2(\mathbb R_+))} }{r}ds.
\end{align*}
This is finite by Assumption and summing up we obtain that as $N\to \infty$
$$ \sup_{n\in \mathbb N}\Delta_n^{-\frac 12}\int_0^t\left\| (I-P_N)(\Pi_{n,M}-\Pi)\Sigma_s \right\|_{L_{\text{HS}}(L^2(\mathbb R_+))}ds\to 0.$$
\end{proof}

Let us now conclude with the
\begin{proof}[Proof of Theorem \ref{T: Long-time asymptotics for termstructure volatiltiy}]
We use that as before, the for integral operator $\Delta_n^{-2}\mathcal T_{\hat q_t^{n,M,-}}$ corresponding to the kernel $\Delta_n^{-2}q_t^{n,M,-}$ defined in Remark \ref{Rem: Reduction of the time to maturity is fine} it is  $\Delta_n^{-2}\mathcal T_{\hat q_t^{n,M,-}}= (SARCV_t^n(u_n,-,n)$(with $\Pi_{n,M}$ as in Remark \ref{rem: semigroup adjustments are difference returns}). We obtain under the Assumption of Theorem \ref{T: Long-time asymptotics for termstructure volatiltiy} that by Theorems \ref{T: Theorems for abstract semigroups}(v) and Theorem \ref{T: Uniform spatial error bound}
      there is a constant $K>0$, which is independent of $T$ and $n$ such that
\begin{equation*}
    \mathbb E\left[\sup_{m\in \mathbb N\cup\{\infty\}}\sup_{t\in [0,T]}\left\|SARCV_t^n(u_n,-,n)-\int_0^t \Pi_{n,M}\Sigma_s \Pi_{n,M} ds\right\|\right]\leq K T \Delta_n^{\gamma}.
\end{equation*}
and
\begin{equation*}
   \mathbb E\left[\sup_{t\in [0,T]} \left\|\Pi_{n,M} \int_0^t \Sigma_s ds \Pi_{n,M}-\int_0^t \Pi\Sigma_s \Pi ds\right\|_{L_{\text{HS}}(L^2(0,M))} \right]\leq KT\Delta_n^{\gamma}.
\end{equation*}
Moreover, by Assumption we have that as $T\to\infty$
$$\frac 1T \int_0^T \Pi\Sigma_s \Pi ds\overset{p}{\longrightarrow}\Pi\mathcal C\Pi.$$
Hence, the claim follows since we can decompose
\begin{align*}
    &\frac 1T SARCV_T^n(u_n,-,n)-\mathcal C\\
    =&\frac 1T \left(SARCV_T^n(u_n,-,n)-\int_0^t \Pi_{n,M}\Sigma_s \Pi_{n,M} ds\right)+\frac 1T \int_0^T  \Pi_{n,M}\Sigma_s \Pi_{n,M} -\Pi\Sigma_s  \Pi ds\\ &\quad+ \frac 1T \int_0^t \Pi\Sigma_s \Pi ds -\Pi\mathcal C\Pi.
\end{align*}
\end{proof}

\section{Further  Practical considerations}
We now make some considerations for the practical implementation of the estimator here.  Precisely, we discuss the effects of smoothing the data a posteriori in the cross-sectional dimension and showcase a possible rescaling procedure for the truncation rule described in Section \ref{Sec: truncation in practice}.
\subsection{Ex-post smoothing}
For term structure models,  we might have strong beliefs that forward curves are continuous or even differentiable. While such smoothness Assumptions are reflected by better rates of convergence, the estimator $\hat q^n$ is discontinuous and we might want to derive a smooth approximation instead. A possible way to achieve this is to smooth the estimators a posteriori. This can also serve the purpose of an ex-post regularization to obtain more pleasing visual results or can favor the computational tractability of the estimator (a difference return curve with a daily resolution and 10 years maximally considered maturity needs to store approximately 2500 data points).
Hence, we might want to reduce the number of data points in the maturity direction in the sense of functional data analysis. That is, let $P_m$ be an orthonormal projection onto a finite-dimensional subspace of $L^2(0,M)$ which is spanned by the orthonormal vectors $e_1,....,e_m$. For instance, we could consider a spline basis, Fourier bases or just a lower resolution than daily (e.g. monthly) and let $P_m$ be the projection onto $\{\indicator_{[(j-1)\Delta_m,j\Delta_m]}/\sqrt{\Delta_{m}}:j=1,...,\lfloor M/\Delta_m\rfloor\}$ for $m=n*l$ for some $l\in \mathbb N$. In general, if $P_m$ is a continuous linear projection, we have
    $$\sup_{t\in [0,T]}\| \Delta_n^{-2} P_m \mathcal T_{\hat q_t^{n,-}} P_m-\int_0^t\Sigma_s ds\|\leq \sup_{t\in [0,T]}\|\Delta_n^{-2}\mathcal T_{\hat q_t^{n,-}}-\int_0^t\Sigma_s ds\|+\int_0^T\| P_m\Sigma_s P_m-\Sigma_s\| ds,$$
    so the additional error is quantified by the second summand on the left.
    As long as $\Pi_m \to I$ strongly, this converges to $0$ as $m\to \infty$ by  Proposition 4 and Lemma 5 in \cite{Panaretos2019}.
    The exact rate of convergence depends on the particular projection as well as the regularity of the volatility operator. It can be quantified by imposing further regularity assumptions on $(\Sigma_t)_{t\geq 0}$.
    An example is given next.
\begin{example}[Forward curves in reproducing kernel Hilbert spaces]
    Assume that $\Sigma_s$ maps into a reproducing kernel Hilbert space $H_k= \mathcal  T_k^{\frac 12} L^2(\mathbb R_+^2)\subset L^2(\mathbb R_+^2)$ where $k\in L^2(\mathbb R_+^2)$ is a kernel and $\mathcal  T_k$ is the corresponding positive definite integral operator with kernel $k$. 
    The space $H_k$ can be equipped with the norm
    $\|f\|_{H_k}=\|\mathcal T_k^{-\frac 12}f\|_{L^2(\mathbb R_+)}$. For instance, we might assume that $k(x,y)=\frac 1{a}(1+e^{-a\min(x,y)})$ for some $a>0$ corresponding to the forward curve space introduced by \cite{Filipovic2000}, which is also the space in which the nonparametrically smoothed yield curve data from \cite{FPY2022} are taken that we use for our empirical analysis in Section \ref{Sec: Empirical Study} . Such a kernel has a Mercer decomposition 
    $k(x,y)=\sum_{i=1}^{\infty}\lambda_i e_i(s)e_i(t)$ for an orthonormal basis $(e_i)_{i\in \mathbb N}$ of $L^2(\mathbb R_+)$ and corresponding positive eigenvalues $(\lambda_i)_{i\in \mathbb N}$. We might specify $P_m=\sum_{i=1}^m e_i^{\otimes 2}$ to be the orthonormal projection onto these basis functions.

    If we even have that $\Sigma_s\in L_{\text{HS}}(L^2(\mathbb R_+),H_k)$, and $\int_0^T \|\Sigma_s \|_{L_{\text{HS}}(L^2(\mathbb R_+),H_k)} ds<\infty$ almost surely, we obtain
\begin{align*}
    \int_0^T\| P_m\Sigma_s P_m-\Sigma_s\|_{L_{\text{HS}(L^2(\mathbb R_+))}} ds%\leq  2 \int_0^T  \|(I-P_m)\Sigma_s\|_{L_{\text{HS}(L^2(\mathbb R_+)}}ds
    %\leq & 2 \|(I-P_m)K^{\frac 12}\|_{L(L^2(\mathbb R_+))}\int_0^T  \|\Sigma_s\|_{L_{\text{HS}(L^2(\mathbb R_+),H_k)}}ds
   \leq  \lambda_{m+1}^{\frac 12}\int_0^T  \|\Sigma_s\|_{L_{\text{HS}(L^2(\mathbb R_+),H_k)}}ds, 
\end{align*}
which yields an additional $\mathcal O_p(\lambda_{m+1}^{\frac 12})$-error.
\end{example}

\subsection{Remarks on the scaling factor for preliminary estimators of the quadratic variation}\label{Sec: bias adjustment preliminary estimator}

In Section \ref{Sec: truncation in practice} we adjusted the truncated estimator $q_t^n(-)$ in the preliminary step by some $\rho^*>0$. 
As we do not know $\Sigma$, this correct scaling can be conducted in several ways. One reasonable possibility is to choose $\rho^*$ in such a way that the scaled truncated estimator coincides with another robust variance estimate for the data projected onto a particular linear functional. In the simple framework without drift and jumps and where $\Sigma$ is constant and independent of the driving Wiener process and $\Delta_n$ small, we have that $ \Delta_n^{-2}\sum_{i=1}^{\lfloor M/\Delta_n\rfloor} \tilde \Delta_{i\Delta_n} d(j\Delta_n)\indicator_{[(j-1)\Delta_n,j\Delta_n]}\approx \tilde \Delta_i^n f\overset{approx.}{\sim} N(0,\frac 1T \int_0^T \Sigma_s ds)$.  
Hence, we choose
$$\rho^*=\frac {\left(q_{.75}-q_{.25}\right)^2}{4\Phi^{-1}(0.75)^2\Delta_n\hat\lambda_1}$$
%$$\rho^*=\frac {q_{.75}-q_{.25}}{2\Phi^{-1}(0.75)\int_0^M\int_0^M \hat q^{n,-}_t(x,y) dxdy}$$
where $q_{.75}$, and resp. the $q_{.25}$, is the $0.75$-quantile  and resp.  the $0.25$-quantile, of the data $\sum_{i=1}^{\lfloor M/\Delta_n\rfloor} \tilde \Delta_{i\Delta_n} d(j\Delta_n)\langle \indicator_{[(j-1)\Delta_n,j\Delta_n]}, \hat e_1\rangle $, $i=1,...,\ulT$, $\hat \lambda_1$ and $\hat e_1$ are respectively the first eigenvalue and the first eigenvector of the preliminary estimator $\hat q_t^n$ and $\Phi^{-1}(0.75)$ is the $.75$ quantile of the standard normal distribution. %$\Delta_n\sum_{i=1}^{\lfloor M/\Delta_n\rfloor}d^n(i,\cdot)\approx \langle \indicator_{[0,M]},\tilde \Delta_i^n f\rangle, i=1,..., \ulT$.
In this way, 
%Indeed, in this simple framework and if the largest 25 percent of of the data as measured by the norm are also roughly the largest 25 percent of the data as measured by the level components $\left|\Delta_n\sum_{i=1}^{\lfloor M/\Delta_n\rfloor} d^n(i,\cdot)\right|, i=1,...,\ulT$, 
%we have that
%$\int_0^M\int_0^M \hat q^{n,-}_t(x,y) dxdy< 0.75\int_0^M\int_0^M \hat q_T^n(x,y) dxdy$ and the latter converges to $0.75 \int_0^M\int_0^Mq_T(x,y)dxdy$ in probability. 
 the rescaled estimator $\rho^* \hat q_t^n (-)$ projected onto $\hat e_1^{\otimes 2}$ corresponds to the interquartile estimator of the variance of the factor loadings of the first eigenvector $\hat e_1$, that is, $\hat \lambda_1$
 %the level component $\indicator_{[0,M]}^{\otimes 2}$
 %$\rho^*\int_0^M\int_0^M \hat q^{n,-}_t(x,y)dxdy/T$
 corresponds to the normalized interquartile range estimator
$$NIQR^2=\left(\frac {(q_{.75}-q_{.25})}{2\sqrt\Delta_n\Phi^{-1}(0.75)}\right)^2.$$
%which is consistent for $\langle \frac 1T \int_0^T \Sigma_t dt  \hat e_1,\hat e_1\rangle$ %$\langle \frac 1T \int_0^T \Sigma_t dt \indicator_{[0,M]},\indicator_{[0,M]}\rangle$
%in the simple setting of a constant and deterministic $\Sigma$ and the absence of drift and jumps. 

\section{Remarks on the simulation scheme}\label{Sec: remarks on the simulation scheme}
We here describe how to sample local averages
$F_{i,j}:=\langle \indicator_{[(j-1)\Delta_n, j\Delta_n]}, f_{i\Delta_n}\rangle_{L^2([0,10]}$
for $n=100$, $i=1,...,100$ and $j=1,...,1000$
of the forward curve process 
%\begin{align*}
%    f_t= \int_0^t \mathcal S(t-s) \alpha ds + \mathcal S(t-s) \sigma_s dW_s+ \int_0^t \mathcal S(t-s) dJ_s
%\end{align*}
described in section \ref{Sec: Simulation Study}.
% Due to their independence, we can simulate compomentwise. That is, we simulate the random vector of coefficients of the process $\Pi_{10} f_t=\sum_{j=1}^{\lfloor 10/\Delta\rfloor} \langle \indicator_{[(j-1)\Delta_n, j\Delta_n]}, f_{i\Delta_n}\rangle_{L^2([0,10]}\indicator_{[(j-1)\Delta_n, j\Delta_n]}$ where 
We use that for $i\geq 1$ it is
\begin{align*}
    F_{i,\cdot}\equiv & \Pi_{10} f_{i\Delta_n}=  \Pi_{10} \mathcal S(\Delta_n)f_{(i-1)\Delta_n}+\Pi_{10} \int_{(i-1)\Delta_n}^{i\Delta_n} \mathcal S(i\Delta_n-s)  dX_s= F_{i-1,\cdot+1}+\Pi_{10} \tilde \Delta_i^n f
    \end{align*}
    where $\tilde \Delta_i^n f= f_{i\Delta_n}-\mathcal S(\Delta_n)f_{(i-1)\Delta_n}$ as before. Conditional on the Ornstein Uhlenbeck process $x$, the adjusted increments are independent and we can simulate $F_{i,\cdot}, i=1,..., n$ iteratively by simulating $x$ and the increments $\Pi_{10} \tilde \Delta_i^n f$. For the latter we have in distribution (conditional on $x$)
       \begin{align*}
   \Pi_{10} \tilde \Delta_i^n f %&   \overset{d}{\equiv} %\Pi_{10}N\left(0, \int_{(i-1)\Delta_n}^{i\Delta_n}\mathcal S(i\Delta_n-s) \Sigma_s\mathcal S(i\Delta_n-s)^*ds\right) + \Pi_{10}\int_{(i-1)\Delta_n}^{i\Delta_n} \mathcal S(i\Delta_n-s) dJ_s    \\
   %  \overset{d}{=} & \Pi_{10}N\left(0,  \int_{(i-1)\Delta_n}^{i\Delta_n} x^2(s)ds Q_a\right) + \Pi_{10}\int_{(i-1)\Delta_n}^{i\Delta_n} \mathcal S(i\Delta_n-s) dJ_s    \\
  % = &   F_{i-1, \cdot+1}+\left(\Pi_{10} e^{-a_0 \cdot}\right)\int_0^{\Delta_n} e^{-a_0 (\Delta_n-s)} ds+\sqrt{\int_{(i-1)\Delta_n}^{i\Delta_n} x^2(s)ds}N\left(0,   \Pi_{10}Q_1\Pi_{10}\right)\\\
  % &+  N\left(0, \int_{0}^{\Delta_n} (\Pi_{10}e^{-a_1 (\cdot+\Delta_n-s)}|\cdot+\Delta_n-s|^{-\rho})^{\otimes 2}  ds\right)\\
   \overset{d}{=} &  \sqrt{\int_{(i-1)\Delta_n}^{i\Delta_n} x^2(s)ds}N\left(0,   \Pi_{10}Q_a\Pi_{10}\right)+ \Pi_{10}\int_{(i-1)\Delta_n}^{i\Delta_n} \mathcal S(i\Delta_n-s) dJ_s
\end{align*}
    \iffalse
    \begin{align*}
   \Pi_{10} \tilde \Delta_i^n f \overset{d}{\equiv} &   \Pi_{10}N\left(\int_{(i-1)\Delta_n}^{i\Delta_n}\mathcal S(i\Delta_n-s) \alpha ds, \int_{(i-1)\Delta_n}^{i\Delta_n}\mathcal S(i\Delta_n-s) \Sigma_s\mathcal S(i\Delta_n-s)^*ds\right)
    \\
    & \qquad + \Pi_{10}\int_{(i-1)\Delta_n}^{i\Delta_n} \mathcal S(i\Delta_n-s) dJ_s
    \\
     \overset{d}{=} & \Pi_{10}N\left(\int_{(i-1)\Delta_n}^{i\Delta_n} e^{-a_0 (\cdot+i\Delta_n-s)} ds,  \int_{(i-1)\Delta_n}^{i\Delta_n} x^2(s)ds Q_1+   \int_{(i-1)\Delta_n}^{i\Delta_n} (k_{\gamma}(\cdot+i\Delta_n-s)^{\otimes 2} 
   ds\right) \\
     & \qquad + \Pi_{10}\int_{(i-1)\Delta_n}^{i\Delta_n} \mathcal S(i\Delta_n-s) dJ_s
    \\
  % = &   F_{i-1, \cdot+1}+\left(\Pi_{10} e^{-a_0 \cdot}\right)\int_0^{\Delta_n} e^{-a_0 (\Delta_n-s)} ds+\sqrt{\int_{(i-1)\Delta_n}^{i\Delta_n} x^2(s)ds}N\left(0,   \Pi_{10}Q_1\Pi_{10}\right)\\\
  % &+  N\left(0, \int_{0}^{\Delta_n} (\Pi_{10}e^{-a_1 (\cdot+\Delta_n-s)}|\cdot+\Delta_n-s|^{-\rho})^{\otimes 2}  ds\right)\\
   \overset{d}{=} &   \left(\Pi_{10} e^{-a_0 \cdot}\right)\frac{1-e^{-a_0 \Delta_n}}{a_0}+\sqrt{\int_{(i-1)\Delta_n}^{i\Delta_n} x^2(s)ds}N\left(0,   \Pi_{10}Q_1\Pi_{10}\right)\\\
   &+  N\left(0, \int_{0}^{\Delta_n} (\Pi_{10}k_{\gamma}(\cdot+\Delta_n-s)^{\otimes 2} 
   ds\right) + \Pi_{10}\int_{(i-1)\Delta_n}^{i\Delta_n} \mathcal S(i\Delta_n-s) dJ_s
\end{align*}
\fi
where we used that $\mathcal S(t)Q_a\mathcal S(t)^*=Q_a$ for all $t\geq 0$. Moreover, we can identify the covariance $\Pi_{10}Q\Pi_{10}$ with covariance matrix
\begin{equation}\label{Discretized Gaussian covariance kernel}
    \Pi_{10} Q_1 \Pi_{10}\equiv \left[\int_{(j-1-1)\Delta_n}^{j_1\Delta_n}\int_{(j-2-1)\Delta_n}^{j_2\Delta_n} e^{-a(x-y)^2} dxdy \right]_{j_1,j_2=1,...,1000}.
\end{equation}%and
%$$\int_{0}^{\Delta_n} (\Pi_{10}k_{\gamma}(\cdot+\Delta_n-s)^{\otimes 2} 
 %  ds\equiv \varsigma_2:=\left[\int_0^{\Delta_n}\int_{(j-1-1)\Delta_n}^{j_1\Delta_n}\int_{(j-2-1)\Delta_n}^{j_2\Delta_n} k_{\gamma}(x+\Delta-s)k_{\gamma}(y+\Delta-s) dxdy ds\right]_{j_1,j_2=1,...,1000}.$$
 %  which is derived by numerical integration.
  % Moreover, for the drift part we have that
  % $$\Pi_{10} e^{-a_0 \cdot}\equiv\vartheta:=%\left[ \int_{(j-1)\Delta_n}^{j\Delta_n} e^{-a_0 y} dy\right]_{j=1,...,1000}=
  % \left[ \frac{e^{-a_0(j-1)\Delta_n}\left(1-e^{-a_0 \Delta_n}\right)}{a_0}\right]_{j=1,...,1000}.$$
   To have a good approximation of the integrals $\int_{(i-1)\Delta_n}^{i\Delta_n} x^2(s)ds$ we simulate the square root-process $x$ on a resolution of $10000$, allowing us to make an approximation of the integrals of $x$ with a Riemann sum of length $100$. 

For the jump part, we have $J=J_1+J_2$ where $J_1,J_2$ are two $L^2([0,10])$-valued compound Poisson processes, that is
   $J_t^i= \sum_{l=1}^{N^i_t} \chi^i_l,$ for $ i=1,2,$ and $t \geq 0$
   where $N^i$ are compound Poisson processes with intensities $\lambda_i$ and jumps $\chi_i\sim N(0,Q^{jump}_i)$, where
   $Q^{jump}_1= Q_{0.01}$ and  $Q^{jump}_2= K$ as described in section \ref{Sec: Simulation Study}. It is then, again since $\mathcal S(t)Q_1\mathcal S(t)^*=Q_1$ %  The corresponding kernel is chosen to produce a discontinuity at maturity $x=5$ and Therefore is corresponding to the covariance
   %$Q^{jump}_2= \indicator_{[0,5]}^{\otimes 2}$. Since $\Pi_{10}\mathcal S(t) Q_1^{jump}\mathcal S(t)^*\Pi_{10}=\Pi_{10}Q_2\Pi_{10}$ and $\mathcal S(t) Q_2^{jump}\mathcal S(t)^*=\indicator_{[0,5-t]}^{\otimes 2}$ for $t\leq \Delta$ we find that
    $$\Pi_{10}\int_{(i-1)\Delta_n}^{i\Delta_n} \mathcal S(i\Delta_n-s) dJ_s^1\overset{d}{=} \sum_{i=1}^{N_{\Delta_n}^1} \Pi_{10}\chi_i^1$$
    and
   $$\Pi_{10}\int_{(i-1)\Delta_n}^{i\Delta_n} \mathcal S(i\Delta_n-s) dJ_s^2\overset{d}{=} \sum_{i=1}^{N_{\Delta_n}^2} \Pi_{10}\chi_i^2(\cdot+ \Delta_n-\tau_i)=\sum_{i=1}^{N_{\Delta_n}^2} e^{-(\Delta-\tau_i)}\Pi_{10}\chi_i^2$$
where $\Pi_{10}\chi_i^1\sim N(0,\Pi_{10}Q_{0.01}\Pi_{10})$ and $\Pi_{10}\chi_i^2\sim N(0,\Pi_{10}K\Pi_{10})$ and where $\tau_i$ are the jump times at which $N_{\tau_i}-N_{\tau_i-}>0$.  As $\Pi_{10}Q_{0.01}\Pi_{10}$ can be identified with a matrix analogously to \eqref{Discretized Gaussian covariance kernel} and $\Pi_{10}K\Pi_{10}$, disregarding a normalization constant,  with the matrix
\begin{align*}
 \Pi_{10}K\Pi_{10}\equiv &\left[\int_{(j_1-1)\Delta_n}^{j_1\Delta_n}\int_{(j_2-1)\Delta_n}^{j_2\Delta_n} e^{-(x+y)} dxdy \right]_{j_1,j_2}
=\left[\left(\frac{1-e^{-10 \Delta_n}}{10}\right)^2 e^{-10(i+j-2)\Delta_n}\right]_{j_1,j_2}
\end{align*}
Hence, the adjusted increments $\Pi_{10} (f_{i\Delta_n}-\mathcal S(\Delta_n)f_{i\Delta_n})$ can be simulated exactly. %Observe, however, that we have to simulate our process on $L^2[0,11]$ and not $L^2[0,11]$, since we need to take into account that the convolutions in the deterministic and stochastic integrals take into account semigroup arguments of size up to $1$, such that the forward curve at $x=10$ will be affected by the forward curves between $0$ and $11$.

\section{Detailed results for the empirical analysis }\label{Sec: Detailed results}
We here provide the detailed results for the empirical study of Section \ref{Sec: Empirical Study} in Tables \ref{Tab: JUMPSTUDY} and \ref{Tab: Dimensionalitiesfull}.

\begin{table}\caption{
Columns $2$ to $4$ report the number of jumps detected by the estimator and the ratio of the norms of the truncated estimator $\hat q_i^{*,-}$ to the quadratic variation estimator $\hat q_i^*$, which indicates how large the impact of jumps was on the quadratic variation in each year. The number is bold, if at least one jump was detected. The norms of the quadratic variation estimators are reported in column $5$.
}
\label{Tab: JUMPSTUDY} 
\begin{tabular}{ c ccc c }
 \toprule
 Year &\multicolumn{3}{c}{Trunc. increments,$\frac{\|\hat q_i^{*,-}\|_{L^2}}{\|\hat q_i^*\|_{L^2}}$} & $\|\hat q_i^*\|_{L^2}$ \\
\cmidrule(lr{1em}){2-4}
&$l=3$& $l=4$& $l=5$ &\\
        \midrule  
 1990      & {\bf 1, 0.88} & 0, 1.00 & 0, 1.00 & {\bf 0.00096}\\
       1991       & 0, 1.00 & 0, 1.00 & 0, 1.00 & 0.00061\\
       1992   & 0, 1.00 & 0, 1.00 & 0, 1.00 & 0.00073 \\
      1993    & 0, 1.00 & 0, 1.00 & 0, 1.00 & 0.00067 \\
      1994     & {\bf 3, 0.85} & {\bf 2, 0.96} & {\bf 2, 0.96} & {\bf 0.00123} \\
      1995 & {\bf 1, 0.97} & 0, 1.00 & 0, 1.00 & {\bf 0.00074}\\
      1996      & {\bf 2, 0.89} & 0, 1.00 & 0, 1.00&{\bf 0.00108}\\
      1997      & {\bf 2, 0.98} & {\bf 2, 0.98} & 0, 1.00& {\bf 0.00066}\\
      1998      & {\bf 15, 0.55} & {\bf 9, 0.65} & {\bf6, 0.69}& {\bf 0.00110}\\
      1999     & 0, 1.00 & 0, 1.00 & 0, 1.00&0.00091\\
      2000      & 0, 1.00 &0, 1.00 & 0, 1.00& 0.00069\\
      2001       & {\bf 2, 0.94} & {\bf 2, 0.94} &{\bf 2, 0.94}&{\bf 0.00116}\\
      2002      & {\bf 2, 0.98} & 0, 1.00 & 0, 1.00&{\bf 0.00118}\\
      2003     & 0, 1.00 & 0, 1.00 & 0, 1.00&  0.00129\\
      2004      & 0, 1.00 & 0, 1.00 & 0, 1.00&0.00087\\
      2005     & 0, 1.00 & 0, 1.00 & 0, 1.00&0.00059\\
      2006    & {\bf 2, 0.97} & {\bf 2, 0.97} & {\bf 2, 0.97}&{\bf 0.00038}\\
      2007    & {\bf 4, 0.99} & 0, 1.00 & 0, 1.00&{\bf 0.00074}\\
      2008    & {\bf 4, 0.91} & 0, 1.00 & 0, 1.00&{\bf 0.00230}\\
      2009   & {\bf 1, 0.88} & 0, 1.00 & 0, 1.00&{\bf 0.00208}\\
      2010    & 0, 1.00 & 0, 1.00 & 0, 1.00& 0.00133\\
      2011     & 0, 1.00 & 0, 1.00 & 0, 1.00& 0.00157\\
      2012    & 0, 1.00 & 0, 1.00 & 0, 1.00&0.00071\\
      2013   & {\bf 2, 0.87} & 0, 1.00 & 0, 1.00&{\bf 0.00079}\\
      2014     & 0, 1.00 & 0, 1.00 & 0, 1.00&0.00047\\
      2015    & 0, 1.00 & 0, 1.00 & 0, 1.00&0.00085\\
      2016  & 0, 1.00 & 0, 1.00 & 0, 1.00&0.00059\\
      2017    & 0, 1.00 & 0, 1.00 & 0, 1.00&0.00037\\
      2018   & 0, 1.00 & 0, 1.00 & 0, 1.00&0.00033\\
      2019     & 0, 1.00 & 0, 1.00 & 0, 1.00&0.00050\\
      2020     & {\bf 9, 0.49} & {\bf 3, 0.73} & 0, 1.00&{\bf 0.00109}\\
        2021     & {\bf 2, 0.94} & 0, 1.00 & 0, 1.00&{\bf0.00056}\\
        2022     & 0, 1.00 & 0, 1.00 & 0, 1.00& 0.001561\\
      \bottomrule
\end{tabular}
\end{table}

\begin{table}\caption{Columns $2$ to $5$ report the numbers $D_{C}^{\hat e^{*,i}}(p)$ for $C=\mathcal T_{q_i^{*,-}},$ 
defined in \eqref{Dimensionality measure} of linear factors needed in each year to explain $p=85\%, 90\%, 95\%, 99\%$ of the variation of difference returns as measured by the truncated variation estimators $\hat q_i^{*,-}$ where the truncation rule was conducted with $l=3$ and $\hat e^{*,i}=(\hat e_1^{*,i},\hat e_2^{*,i},...)$ is the basis of eigenfunctions corresponding to the kernel $\hat q_i^*$. Columns $6$ to $9$ report $D_{C}^{\hat e^{long}}(p)$ for $C=\mathcal T_{q_i^{*,-}}$, which explain how many leading eigenvectors of the static estimator $\hat q_{long}^*$ are needed as approximating factors to explain the variation in all years separately. 
}
\label{Tab: Dimensionalitiesfull} 
\begin{tabular}{ c cccccccc }
 \toprule
 Year & \multicolumn{4}{c}{  $D_{\mathcal T_{\hat q_i^{*,-}}}^{\hat e^{*,i}}(p)$}& \multicolumn{4}{c}{$D_{\mathcal T_{\hat q_i^{*,-}}}^{\hat e^{long}}(p)$} \\
 \cmidrule(lr{1em}){2-5} \cmidrule(lr{1em}){6-9}
 &$0.85$&$0.90$&$0.95$& $0.99$ &$0.85$&$0.90$&$0.95$& $0.99$ \\
        \midrule  
 1990      &4& 6  & 9 & 15&$5$&$7$&$10$& $15$ \\
       1991      &5& 6  & 9 & 15&$5$&$7$&$10$& $15$ \\
       1992      &5& 6  & 8 & 14&$6$&$7$&$9$& $15$  \\
      1993      &4& 5  & 7 & 14&$4$&$5$&$9$& $14$ \\
      1994      &3& 5  & 8 & 14&$3$&$5$&$9$& $15$  \\
      1995      &4& 5  & 8 & 14&$4$&$6$&$9$& $15$ \\
      1996      &4& 5  & 8 & 13&$4$&$6$&$8$& $15$ \\
      1997      &3& 4  & 7 & 13&$3$&$4$&$9$& $14$\\
      1998      &5& 6  & 8 & 14&$5$&$6$&$10$& $15$ \\
      1999      &3& 5  & 8 & 14&$4$&$6$&$10$& $16$ \\
      2000      &4& 5  & 8 & 13&$4$&$6$&$10$& $14$\\
      2001      &4& 5  & 8 & 13&$5$&$6$&$9$& $13$ \\
      2002      &3& 5  & 7 & 12&$4$&$6$&$8$& $14$\\
      2003      &2& 4  & 6 & 10&$2$&$4$&$6$& $12$ \\
      2004      &2& 3  & 6 & 11&$2$&$4$&$7$& $13$\\
      2005      &2& 3  & 5 & 11&$2$&$3$&$8$& $12$ \\
      2006      &2& 2  & 4 & 10&$2$&$2$&$7$& $11$ \\
      2007      &2& 3  & 6 & 10&$3$&$6$&$10$& $13$ \\
      2008      & 3& 4  & 7 & 12&$3$&$5$&$9$& $13$ \\
      2009      &3& 4  & 6 & 11&$4$&$5$&$7$& $13$\\
      2010      &2& 3  & 6 & 11&$3$&$4$&$7$& $13$ \\
      2011      &2& 3  & 5 & 10&$2$&$4$&$6$& $12$\\
      2012      &1& 2  & 3 & 8 &$2$&$3$&$4$& $10$\\
      2013      &2& 2  & 3 & 8 &$2$&$3$&$5$& $10$\\
      2014      &2& 2  & 4 & 10 &$2$&$3$&$6$& $11$\\
      2015      & 2& 2  & 3 & 10 &$2$&$2$&$5$& $11$\\
      2016      & 2& 2  & 4 & 11 &$2$&$2$&$5$& $12$\\
      2017      &2& 2  & 5 & 11 &$2$&$3$&$6$& $12$\\
      2018      & 2& 2  & 5 & 12 &$2$&$3$&$7$& $13$\\
      2019      &2& 2  & 5 & 11 &$2$&$2$&$7$& $12$\\
      2020      &2& 3  & 6 & 12 &$2$&$4$&$8$& $13$\\
        2021      &2& 2  & 4 & 10 &$2$&$3$&$6$& $12$\\
        2022      &2& 2  & 4 & 8&$2$&$2$&$5$& $11$ \\
      \bottomrule
\end{tabular}
\end{table}
\end{appendix}

\end{document}